\newtheorem{definition}{Definition}
\newtheorem{theorem}[definition]{Theorem}
\newtheorem{lemma}[definition]{Lemma}
\newcommand{\selfcite}[1]{$\!\!${\bf\cite{#1}}}
\renewcommand{\epsilon}{\varepsilon}
\newcommand{\ide}[2]{/_{#1=#2}}
\newcommand{\comment}[1]{}
\newcommand{\emtext}[1]{\text{\em #1}}
\newcommand{\saferef}[1]{%
    \@ifundefined{r@#1}{0}{\ref{#1}}
}
\newcounter{algcnt}
\newcommand{\cnumber}[1]{\setcounter{algcnt}{191}\addtocounter{algcnt}{#1}{\large\ding{\value{algcnt}}}}
\newcommand{\mylbl}{\setcounter{algcnt}{191}\addtocounter{algcnt}{\value{enumi}}{\large\ding{\value{algcnt}}}}
\newcommand{\aref}[1]{\cnumber{\saferef{#1}}} 
\newcommand{\aresume}{\resume{enumerate}\renewcommand{\labelenumi}{{\mylbl}}\sffamily}
\newcommand{\astart}{\begin{enumerate}\renewcommand{\labelenumi}{{\mylbl}}\sffamily}
\newcommand{\returnvalue}[1]{\emph{``#1''}}
\newcommand{\sm}{\setminus}
\title{Claw-free $t$-perfect graphs can be recognised in polynomial time}
\author{Henning Bruhn and Oliver Schaudt}
\begin{document}
\maketitle

\begin{abstract}
A graph is called $t$-perfect if its stable set polytope is defined by non-negativity,
edge and odd-cycle inequalities. 
We show that it can be decided in polynomial time whether a given claw-free graph is $t$-perfect.
\end{abstract}

\section{Introduction}

We treat $t$-perfect graphs, a class of graphs that is not only 
similar in name to perfect graphs but also shares a number of 
their properties. One way to define perfect graphs is via the 
stable set polytope: The convex hull of all characteristic vectors
of stable sets (sets of pairwise non-adjacent vertices). 
As shown independently by Chv\'atal~\cite{Chvatal75} and Padberg~\cite{Pad74},
a graph 
 is perfect if and only if its stable set polytope 
is determined by non-negativity and clique inequalities. 
In analogy, Chv\'atal~\cite{Chvatal75} proposed to study the class of graphs 
whose stable set polytope is defined by non-negativity, edge 
and odd-cycle inequalities. These graphs became to be known 
as \emph{$t$-perfect graphs}. (We defer precise and more explicit definitions
to the next section.) 

Two celebrated results on perfect graphs are the proof of the 
strong perfect graph conjecture by Chudnovsky, Robertson, Seymour 
and Thomas~\cite{SPGT06} and the polynomial time algorithm of 
Chudnovsky, Cornu\'ejols, Liu, Seymour and Vu\v{s}kovi\'c~\cite{CCLSV05}
that checks whether a given graph is perfect or not. Analogous
results for $t$-perfection seem desirable but out of reach 
for the moment. Restricted to claw-free graphs, however, this 
changes. A characterisation of claw-free $t$-perfect graphs
in terms of forbidden substructures was recently proved by
Bruhn and Stein~\cite{BS12}.
In this work we present a
recognition algorithm for $t$-perfect claw-free graphs:

\begin{theorem}\label{thm:detect}
It can be decided in polynomial time whether a given claw-free graph is $t$-perfect.
\end{theorem}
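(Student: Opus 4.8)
The plan is to combine the forbidden-substructure characterisation of Bruhn and Stein~\cite{BS12} with an algorithm that searches for those forbidden substructures. That characterisation states that a claw-free graph $G$ is $t$-perfect if and only if it contains none of a short, explicit list $\mathcal{F}$ of small $t$-imperfect claw-free graphs --- namely $K_4$, the $5$-wheel $W_5$, and $C_7^2$, the square of the $7$-cycle --- as a $t$-minor. So it suffices to decide, for each fixed $H\in\mathcal{F}$, whether $G$ has an $H$-$t$-minor, and to do so in polynomial time.

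The first step is to make the notion of an $H$-$t$-minor concrete inside $G$. A $t$-contraction is permitted only at a vertex whose neighbourhood is stable; in a claw-free graph such a vertex has at most two neighbours, so a $t$-contraction is merely the contraction of a length-$2$ induced path through a degree-$2$ vertex, and a chain of $t$-contractions only shortens induced paths two steps at a time (turning an odd path into an edge). Unwinding the definition of $t$-minor, an $H$-$t$-minor of $G$ is thus witnessed by an induced subgraph of $G$ of a tightly constrained form: an ``odd expansion'' of $H$, in which the edges of $H$ are replaced by internally disjoint odd induced paths, together with a bounded amount of local structure at the branch vertices. Since each $H\in\mathcal{F}$ has only $O(1)$ vertices and edges, the number of branch vertices and of these paths is bounded, and the local structure can be enumerated; what then remains, for each choice of the $O(1)$ branch vertices and of the prescribed parities, is to detect a family of internally disjoint induced paths joining the prescribed pairs, with the prescribed parities, inducing no further edges.

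The heart of the proof --- and the step I expect to be the main obstacle --- is solving this parity- and inducedness-constrained disjoint-paths problem in polynomial time; with no restriction on $G$ this would be intractable, so claw-freeness must be exploited. The induced-disjoint-paths problem, without parity constraints, is known to be polynomially solvable for a bounded number of terminal pairs in claw-free graphs, and I would add the parity requirements on top of it by a standard parity-tracking construction (two copies of each vertex, one per parity of the partial path), taking care that forbidding chords between the paths does not spoil tractability. On the line-graph-like parts of a claw-free graph this is transparent: for $G=L(R)$, induced paths in $G$ correspond to paths in the root graph $R$, the parity conditions pull back (with a flip), and the search becomes a parity-connectivity (or $T$-join) problem in $R$ of a classically solvable kind; the bounded number of genuinely non-line-graph pieces in the structure of a claw-free graph can be inspected directly.

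Finally, assembling everything: for each $H\in\mathcal{F}$, each choice of the constantly many branch vertices, and each assignment of the required parities --- polynomially many instances altogether --- run the parity-constrained induced-disjoint-paths routine; report that $G$ is \emph{not} $t$-perfect if some instance is feasible, and that it is $t$-perfect otherwise. Correctness is immediate from the Bruhn and Stein characterisation together with the translation of $t$-minors into odd expansions, and the running time is polynomial because each ingredient is. The two points that will need genuine care are (i) establishing the precise correspondence between $H$-$t$-minors of $G$ and induced ``odd expansions with bounded local padding'' --- so that nothing is overlooked and nothing spurious is accepted --- and (ii) making the parity-constrained induced-path search uniformly polynomial across the whole class of claw-free graphs.
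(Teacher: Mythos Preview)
Your proposal has a genuine gap at the step you yourself flag as ``the main obstacle'': the parity-constrained induced-disjoint-paths problem in claw-free graphs. No polynomial algorithm for this is known, and the paper says so explicitly in its Discussion section: the Golovach--Paulusma--van~Leeuwen algorithm handles induced disjoint paths in claw-free graphs \emph{without} parity constraints, and ``it is not clear whether [their] algorithm can be extended to incorporate parity constraints.'' Your proposed fix --- the two-copies parity-tracking trick --- does not work here: doubling each vertex destroys claw-freeness, and more fundamentally it does not preserve the \emph{induced} nature of the paths (a chord between the paths in $G$ need not correspond to anything meaningful in the doubled graph). So the heart of your plan rests on an algorithm that does not currently exist.

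There is also a concrete error in your translation of $t$-minors. You claim an $H$-$t$-minor is witnessed by replacing the edges of $H$ by odd induced paths. This is not what happens: undoing a $t$-contraction blows up a \emph{vertex} into a length-$2$ path and redistributes its neighbours, it does not subdivide an edge. For $H=K_4$ in a claw-free graph, the correct witnesses are the \emph{skewed prisms} (Lemma~\ref{K4sface}): two triangles joined by three induced paths, \emph{two of even length and one odd}. So the parities you would have to impose are not ``all odd'', and the triangle edges are not expanded at all. (Your list of forbidden $t$-minors is also missing $C_{10}^2$.)

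The paper avoids the unavailable parity-induced-paths subroutine entirely. It proceeds by structural reduction: line graphs are handled directly in the root graph by a bespoke search for skewed thetas (Sections~\ref{sec:linegraphs}--\ref{sec:nomorelinkages}); for general claw-free $G$ one repeatedly splits along $2$-separations, using the van~'t~Hof--Kami\'nski--Paulusma parity-of-one-induced-path test (a far weaker primitive than parity-constrained disjoint paths) together with Lemma~\ref{lem:ClawFree2Connected} to reduce to two smaller instances. The $3$-connected pieces that remain are, by Lemma~\ref{lem:outcomes}, either line graphs or one of a handful of explicit small graphs. Thus the only parity test ever invoked is for a \emph{single} induced path between two fixed vertices, which is exactly what is available.
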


The class of $t$-perfect graphs seems rich and of non-trivial structure. 
Examples include series-parallel graphs (Boulala and Uhry~\cite{BouUhr79})
and bipartite or almost bipartite graphs. More classes were identified by
Shepherd~\cite{Shepherd95} and 
Gerards and Shepherd~\cite{GerShe98}. 
An attractive result on the algorithmic side is the  combinatorial polynomial-time algorithm  
of
Eisenbrand, Funke, Garg and K\"onemann~\cite{EFGK02}
that solves the max-weight stable 
set problem on $t$-perfect graphs. 

There is also an, at least superficially, more stringent notion of $t$-perfection, 
\emph{strong $t$-perfection}; see  Schrijver~\cite[Vol.~B, Ch.~68]{LexBible}
where also some background on $t$-perfect graphs may be found.
Interestingly, there is no $t$-perfect graph known that fails 
to be strongly $t$-perfect. In fact, for some classes these two notions
are known to be equivalent, see 
Schrijver~\cite{Schrijver02} and
Bruhn and Stein~\cite{strongtp}.

The graphs whose stable set polytope is given by non-negativity, clique 
and odd-cycle inequalities are called \emph{$h$-perfect}. The class
of $h$-perfect graphs is a natural superclass of both perfect as well 
as $t$-perfect graphs. The class has been studied by
Fonlupt and Uhry~\cite{FonUhr82},
Sbihi and Uhry~\cite{SbiUhr84}, and Kir\'aly and P\'ap~\cite{KP07,KP08}.

\medskip

We briefly outline the strategy of our recognition algorithm.
In Sections~\ref{sec:linegraphs} and~\ref{sec:nomorelinkages}, 
we show how to recognise $t$-perfect line graphs.
For this, we work in the underlying source graph that gives rise
to the line graph. In the source graph we need to 
detect certain subgraphs called \emph{thetas}: 
two vertices joined by three disjoint paths. In the thetas that are of interest to us
the linking paths have to respect additional parity constraints. 

The general algorithm for claw-free graphs is presented in Sections~\ref{clawsec} and~\ref{sec:ClawLemma}
and relies on a 
divide and conquer approach to split the input graph along small separators.
In this phase of the algorithm, we make extensive use 
of a procedure by van 't Hof, Kami\'nski and Paulusma~\cite{HKP12} 
that detects induced paths of given parity in claw-free graphs.
The final pieces that cannot be split anymore 
turn out to be essentially 
line graphs, which we already dealt with.

\section{Claw-free graphs and $t$-perfection}

We refer to Diestel~\cite{diestelBook10} for general notation and definitions concerning graphs. 

Let us recall the definition of a claw-free graph.
The \emph{claw} is the graph $G=(V,E)$ with $V=\{u,v_1,v_2,v_3\}$ and $E=\{uv_1,uv_2,uv_3\}$,
and we call $u$ its \emph{centre}.
A graph is called \emph{claw-free} if it does not contain 
an induced subgraph that is isomorphic to the claw.
Claw-free graphs form a superclass of line graphs.

In order to define $t$-perfection, we associate with 
every graph $G=(V,E)$ 
a polytope denoted TSTAB$(G)$,  the set
of all vectors $x\in\mathbb R^{V}$ satisfying 
\begin{eqnarray}
0\leq x_v\leq 1 &&\text{ for every vertex }v\in V,\notag\\
\label{tstab}x_u+x_v\leq 1 &&\text{ for every edge }uv\in E,\\
\sum_{v\in V(C)}x_v\leq \lfloor \tfrac{1}{2}|V(C)|\rfloor &&\text{ for every odd cycle }C
\text{ in }G.\notag
\end{eqnarray}
The graph $G$ is called \emph{$t$-perfect} if TSTAB$(G)$
coincides with the stable set polytope of $G$ (the convex hull
of characteristic vectors of stable sets in $\mathbb R^{V}$).
An alternative but equivalent definition is to say that $G$ is $t$-perfect
if and only if TSTAB$(G)$ is an integral polytope. 
\medskip

As observed by Gerards and Shepherd~\cite{GerShe98}, 
the following operation called \emph{$t$-contraction} preserves $t$-perfection: 
Contraction of all edges incident with any vertex $v$ whose
neighbourhood $N(v)$ is a stable set. We then say that a  \emph{$t$-contraction
is performed at $v$}.
If $G$ is claw-free, the $t$-contraction becomes particularly simple. Indeed, 
a $t$-contraction at $v$ is only possible if $v$ has degree~$\leq 2$; otherwise
$v$ is the centre of a claw. If $v$ has precisely two neighbours $u$ and $w$
then the $t$-contraction simply identifies $u,v,w$ to a single vertex.

To characterise the class of $t$-perfect graphs in terms of forbidden substructures, 
the concept of $t$-minors was introduced in~\cite{strongtp}:
A graph $H$ is a \emph{$t$-minor} of a graph $G$ 
if $H$ can be obtained from $G$ by a series of vertex deletions and/or $t$-contractions.
Note that the class of $t$-perfect graphs is closed under taking $t$-minors.

We note an easy but useful observation~\cite{strongtp}:
\begin{equation}\label{keepscf}
\emtext{
any $t$-minor of a claw-free graph is claw-free.
}
\end{equation}

It turns out that $t$-perfect claw-free graphs can be characterised in terms of finitely many forbidden $t$-minors:
\begin{theorem}[Bruhn and Stein~\cite{BS12}]
\label{thm:Char}
A claw-free graph is $t$-perfect if and only if it
does not contain any of $K_4$, $W_5$, $C^2_7$ and $C^2_{10}$ as a $t$-minor.
\end{theorem}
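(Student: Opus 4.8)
The plan is to prove necessity by a finite check and sufficiency by a structural argument that routes through line graphs. For \emph{necessity}, since the class of $t$-perfect graphs is closed under $t$-minors it suffices to verify that none of $K_4$, $W_5$, $C^2_7$, $C^2_{10}$ is $t$-perfect. This is a direct computation: for $K_4$, $C^2_7$ and $C^2_{10}$ the vector $x\equiv\tfrac13$ satisfies every edge inequality ($\tfrac23\le1$) and every odd-cycle inequality ($\tfrac\ell3\le\lfloor\tfrac\ell2\rfloor$ for $\ell\ge3$), so it lies in $\mathrm{TSTAB}$, but its coordinate sum $\tfrac n3$ exceeds $\alpha$ --- which is $1$, $2$ and $3$ respectively --- so it is not in the stable set polytope; for $W_5$ the same vector lies in $\mathrm{TSTAB}$ and is separated by the valid inequality $2x_h+\sum_{v\in C}x_v\le2$, where $h$ is the hub and $C$ the rim $5$-cycle.

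For \emph{sufficiency} I would take a counterexample $G$ that is minimal in the $t$-minor order among all claw-free $t$-imperfect graphs with no $t$-minor in $\{K_4,W_5,C^2_7,C^2_{10}\}$; by~\eqref{keepscf} $G$ is claw-free, and every proper $t$-minor of $G$ is $t$-perfect. Since $G\neq K_4$ while a $K_4$-subgraph would produce a forbidden $t$-minor, $G$ has no $K_4$-subgraph; hence every neighbourhood is triangle-free and, by claw-freeness, contains no stable triple, so $|N(v)|\le5$ by Ramsey and $\Delta(G)\le5$. Standard reductions for $t$-perfection --- deletion of vertices of degree $\le1$, and the two forms of $t$-contraction at a degree-$2$ vertex according to whether its two neighbours are adjacent --- should, after some routine work, reduce to the case of minimum degree $\ge3$.

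The core is then a dichotomy for $K_4$-free claw-free graphs of minimum degree $\ge3$. If $G=L(H)$ is a line graph, I pass to $H$: stable sets of $L(H)$ are matchings of $H$, so the stable set polytope of $L(H)$ is Edmonds' matching polytope while $\mathrm{TSTAB}(L(H))$ is cut out by the fractional-matching and the odd-cycle inequalities; comparing the two shows that $L(H)$ is $t$-imperfect exactly when $H$ has a vertex of degree $\ge4$ or contains a \emph{theta} --- two vertices joined by three internally disjoint paths --- in which precisely one of the three paths has even length. A degree-$\ge4$ vertex yields a $K_4$-subgraph of $L(H)$; in the theta case, shortening each odd path to a single edge and each even path to a single internal vertex (operations realised by $t$-contractions in the line graph) turns $H$ into a triangle with one doubled edge, whose line graph is $K_4$, so $G$ has $K_4$ as a $t$-minor. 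Either way we contradict the choice of $G$. If $G$ is not a line graph, I invoke the structure theorem for claw-free graphs, which in the $K_4$-free regime reduces essentially to circular-interval graphs together with finitely many exceptional graphs, possibly glued along small separators (disposed of using minimality, or a composition lemma for $t$-perfection); a finite check then shows every $t$-imperfect such graph has $W_5$, $C^2_7$ or $C^2_{10}$ as a $t$-minor, again a contradiction.

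I expect the hard parts to be, first, the line-graph characterisation --- equivalently, pinning down exactly which odd-set inequalities of the matching polytope are \emph{not} implied by the degree and odd-cycle inequalities of $L(H)$, and recognising the minimal such configurations as the ``one-even-path'' thetas --- and, second, setting up and then combing through the structural dichotomy: isolating the list of exceptional graphs and verifying that every $t$-imperfect non-line-graph among them carries one of the three non-line-graph obstructions. The degree bound, the low-degree reductions and the necessity check should all be comparatively routine.
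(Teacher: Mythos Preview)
The paper does not prove this theorem: it is quoted from Bruhn and Stein~\cite{BS12} and used as an input. So there is no proof in the present paper to compare your proposal against.

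That said, your outline is broadly in the right spirit, and several of its pieces appear in this paper in refined form. Your necessity checks are fine. Your line-graph case is essentially Lemma~\ref{tpline} combined with Lemma~\ref{K4sface}: $L(H)$ is $t$-perfect iff $H$ is subcubic without a skewed theta, and a claw-free graph has $K_4$ as a $t$-minor iff it contains a skewed prism, which is exactly the line graph of a skewed theta. Your degree bound can be sharpened to $\Delta\le4$: a degree-$5$ vertex with a triangle-free, stable-triple-free neighbourhood must have that neighbourhood induce $C_5$, giving an induced $W_5$; the paper uses this too.

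Where your plan is genuinely thin is the non-line-graph case. Invoking ``the structure theorem for claw-free graphs, which in the $K_4$-free regime reduces essentially to circular-interval graphs together with finitely many exceptional graphs'' is too coarse: the Chudnovsky--Seymour decomposition is intricate and does not, without substantial additional work, hand you a short finite list. What~\cite{BS12} actually proves --- quoted here as Lemma~\ref{lem:outcomes} --- is that a $3$-connected claw-free graph with $\Delta\le4$ and no $K_4$ $t$-minor is either a line graph or one of exactly five specific small graphs. Obtaining that list is real work, and so is the reduction to the $3$-connected case via $2$-separations (the subject of Sections~\ref{clawsec}--\ref{sec:ClawLemma} of this paper, through Lemma~\ref{lem:ClawFree2Connected}). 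Your proposal correctly identifies both of these as the hard steps but does not supply the arguments; a reader could not reconstruct the proof from what you have written.
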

Here, $K_4$ denotes the complete graph on four vertices, $W_5$ 
is the $5$-wheel, and for $n\in\mathbb N$ we denote by $C^2_n$
the square of the cycle $C_n$ on $n$ vertices, see Figure~\ref{fig:culprits}.
More precisely, we define $C^2_n$ always on the vertex set $v_1,\ldots,v_n$,
so that $v_i$ and $v_j$ are adjacent if and only if $|i-j|\leq 2$,
where we take the indices modulo~$n$.

      \begin{figure}[ht]
      \centering
      \includegraphics[scale=0.7]{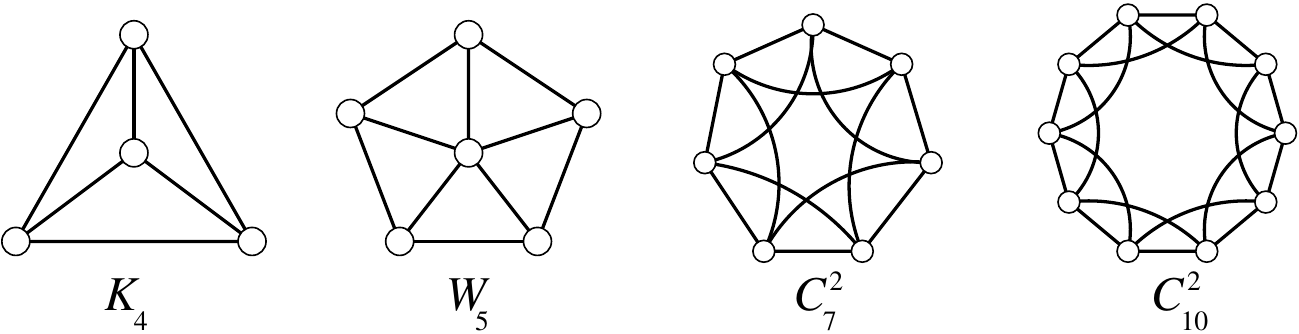}
      \caption{The forbidden $t$-minors.}\label{fig:culprits}
      \end{figure}

We often present our algorithms intermingled with parts of the corresponding
correctness proofs. To set the algorithm steps apart from the surrounding proofs
we write them as follows: 
\astart
\item The first line of an algorithm.
\end{enumerate}

Finally, for two vertices $u,v$, a \emph{$u$--$v$-path} is simply a path from $u$ to $v$.
Similarly, if $X,Y \subseteq V(G)$, 
then we mean by an 
\emph{$X$--$Y$-path}  a path from a vertex in $X$ to some vertex in $Y$ so that no 
internal vertex belongs to $X\cup Y$.
In the case that $X=Y$ we simply speak of an $X$-path.

\section{Line graphs}\label{sec:linegraphs}

We first solve the recognition problem for line graphs:
\begin{lemma}\label{detectline}
It can be decided in polynomial time whether 
the line graph of a given graph is $t$-perfect.
\end{lemma}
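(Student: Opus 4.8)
The plan is to reduce to Theorem~\ref{thm:Char} and then to carry out the detection, as announced in the introduction, in the source graph. Write the input as a graph $H$ whose line graph $L(H)$ we want to test (if only $L(H)$ is given we first reconstruct $H$ in polynomial time, which is possible and essentially unique by Whitney's isomorphism theorem; we allow parallel edges in $H$; and we may treat the connected components of $H$ separately since both $\mathrm{TSTAB}$ and the stable set polytope decompose over components). By Theorem~\ref{thm:Char}, $L(H)$ is $t$-perfect if and only if none of $K_4$, $W_5$, $C^2_7$, $C^2_{10}$ is a $t$-minor of $L(H)$, so it suffices to decide, for each of these four graphs $F$, whether $F$ is a $t$-minor of $L(H)$, and to do so in terms of $H$.

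First I would dispose of high degrees. If $H$ has a vertex of degree at least $4$, or four pairwise parallel edges, then the edges at that vertex induce a $K_4$ in $L(H)$, so $K_4$ is a $t$-minor and $L(H)$ is not $t$-perfect; hence we may assume $\Delta(H)\le 3$. Next I would record how the $t$-minor operations act on the source graph: deleting a vertex of $L(H)$ deletes an edge of $H$, and, since in a claw-free graph a $t$-contraction is possible only at a vertex of degree at most $2$, a $t$-contraction in $L(H)$ is possible (essentially) only at an edge $xy$ both of whose ends have degree at most $2$ in $H$, where it amounts to contracting a length-three path of $H$ down to a single edge, or to removing a pendant part. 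The key consequence of this bookkeeping is that such a step shortens a path of $H$ by exactly $2$, so the parities of the lengths of paths between genuine branch vertices of $H$ are preserved.

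The heart of the argument is then, for each $F\in\{K_4,W_5,C^2_7,C^2_{10}\}$, to pin down exactly which configurations in $H$ force $F$ to be a $t$-minor of $L(H)$. Because each $F$ is small and essentially $2$-connected, and because the reductions above shorten paths only in steps of $2$, these configurations turn out to be subdivisions of boundedly many fixed small graphs in $H$ --- chiefly \emph{thetas}, i.e.\ two vertices joined by three internally disjoint paths --- in which the lengths of the linking paths obey a parity pattern depending on $F$ (for instance, the line graph of a triangle with one edge doubled is $K_4$, which already shows that a theta with two odd and one even linking path produces a $K_4$ as a $t$-minor). Once such a finite list is in hand, detection is routine: one iterates over the constantly many choices for the prospective branch vertices of a configuration in $H$ and tests for the existence of the prescribed internally disjoint paths with the prescribed parities. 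Parity-constrained disjoint-path questions of this bounded form are solvable in polynomial time --- either by a standard layering reduction to ordinary disjoint-path problems, or, more conveniently, by the theta-detection procedure built in the next section precisely for this purpose.

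I expect the main obstacle to be the case analysis just described: determining, for each of $K_4$, $W_5$, $C^2_7$ and $C^2_{10}$, the precise and complete list of source-graph configurations together with their parity conditions that are both necessary and sufficient for the corresponding $t$-minor to appear. This is delicate because $W_5$, $C^2_7$ and $C^2_{10}$ need not themselves be line graphs, so the intermediate $t$-minors on the way from $L(H)$ down to them may leave the class of line graphs; one has to argue that, nonetheless, every way of producing one of these $t$-minors can be traced back to one of the listed configurations in $H$.
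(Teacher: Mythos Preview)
Your proposal has a genuine gap: you never arrive at a usable characterisation. The paper does not work with all four forbidden $t$-minors from Theorem~\ref{thm:Char}; instead it invokes Lemma~\ref{tpline} (proved in~\cite{BS12}), which says that $L(H)$ is $t$-perfect if and only if $H$ is subcubic and contains no \emph{skewed theta} (two branch vertices joined by three edge-disjoint paths, two of odd length and one even). This collapses the four separate case analyses you outline into a single configuration to detect, and it sidesteps exactly the difficulty you flag at the end --- that $W_5$, $C^2_7$, $C^2_{10}$ are not line graphs and that intermediate $t$-minors may leave the class. Your proposal leaves this case analysis unfinished and correctly identifies it as the main obstacle; without Lemma~\ref{tpline} (or an equivalent), that obstacle is real.

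A second, smaller issue: the claim that parity-constrained disjoint paths can be solved by ``a standard layering reduction to ordinary disjoint-path problems'' is not correct. Layering (doubling the graph into even and odd levels) finds a single path of prescribed parity, but it does not preserve vertex-disjointness across the two copies, so it does not reduce the parity-constrained version of \textsc{$k$-Disjoint Paths} to the ordinary one. The paper explicitly discusses this: the known algorithms for \textsc{$k$-Disjoint Paths with Parity Constraints} rely on deep graph-minors machinery, and the point of Sections~\ref{sec:linegraphs}--\ref{sec:nomorelinkages} is precisely to give an elementary skewed-theta detector that avoids this. Your fallback (``the theta-detection procedure built in the next section'') is the right tool, but it is tailored to skewed thetas, not to whatever four families of configurations your case analysis would have produced.
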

We develop the algorithm in the course of this section and the next.
That the algorithm is correct is based on the following characterisation 
of $t$-perfect line graphs. 

We call a graph \emph{subcubic} if its maximum degree is at most 3.
A \emph{skewed theta} is  a subgraph which is the union of three edge-disjoint paths linking
two vertices, called \emph{branch vertices}, such that two paths have odd length and one has even length.
Note that a skewed theta does not have to be an induced subgraph.

\begin{lemma}\selfcite{BS12}\label{tpline}
Let $G$ be a graph.
Then the line graph $L(G)$ is $t$-perfect 
if and only if $G$ is subcubic and does not contain
any skewed theta.
\end{lemma}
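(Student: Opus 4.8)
The plan is to characterize $t$-perfection of $L(G)$ via Theorem~\ref{thm:Char}: since $L(G)$ is claw-free, $L(G)$ is $t$-perfect if and only if it contains none of $K_4$, $W_5$, $C^2_7$, $C^2_{10}$ as a $t$-minor. So I would work to translate each of the two conditions on $G$ — subcubic, and no skewed theta — into the language of forbidden $t$-minors of $L(G)$. The first step is to understand $t$-contraction on line graphs. A $t$-contraction in $L(G)$ is performed at a vertex of degree $\le 2$ whose neighbourhood is stable; a degree-$2$ vertex $e$ of $L(G)$ with non-adjacent neighbours corresponds to an edge $e=xy$ of $G$ such that $x$ has degree $2$ (its other edge being one neighbour) and $y$ has degree $2$ (its other edge being the other neighbour), and these two other edges share no endpoint. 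Identifying the three vertices in $L(G)$ then corresponds, on the $G$ side, to suppressing the path of length $3$ through $e$ into a single edge — i.e. contracting an internal edge of a subdivided edge. Thus $t$-minors of $L(G)$ correspond (roughly) to line graphs $L(H)$ where $H$ is obtained from a subgraph of $G$ by such "smoothing" operations, together with plain vertex deletions in $L(G)$, which correspond to edge deletions in $G$.

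**The forward direction (necessity).**

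For the forward direction I would show the contrapositive: if $G$ is not subcubic or contains a skewed theta, then $L(G)$ has one of the four forbidden graphs as a $t$-minor. If $G$ has a vertex of degree $\ge 4$, the four incident edges form a $K_4$ in $L(G)$, so $L(G)$ contains $K_4$ as a subgraph, hence as a $t$-minor. If $G$ has a vertex of degree $3$ and is not subcubic we are in the previous case, so assume $G$ is subcubic and contains a skewed theta $\Theta$, the union of three edge-disjoint $u$–$v$ paths of lengths with parities (odd, odd, even). Restricting to $\Theta$ (deleting all other edges of $G$, a sequence of vertex deletions in $L(G)$), it suffices to analyze $L(\Theta)$. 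Now I would smooth: repeatedly $t$-contract in $L(\Theta)$ to shorten each path. A $u$–$v$ path of length $\ell \ge 2$ in $\Theta$ becomes, after smoothing its internal structure, a path whose length is congruent to $\ell$ modulo $2$ if $\ell$ is being reduced by $2$ each time — so an odd path reduces to length $1$ and an even path reduces to length $2$. The crucial subtlety is that smoothing requires the relevant internal vertices of $\Theta$ to have degree exactly $2$ in $\Theta$ (which they do, being internal, provided the three paths are internally disjoint — which one can arrange, or handle the branch-vertex degree-$3$ case separately). After smoothing, $\Theta$ reduces to a multigraph on the two branch vertices with two parallel edges and one path of length $2$, i.e. a triangle with a doubled edge; its line graph is exactly $K_4$. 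Hence $K_4$ is a $t$-minor of $L(G)$, and $L(G)$ is not $t$-perfect. (One must double-check the boundary cases: branch vertices of degree $3$ in $\Theta$, and paths of length $0$ or $1$ where no smoothing is possible — here the parity pattern still forces a $K_4$ or a small non-$t$-perfect configuration directly.)

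**The backward direction (sufficiency) and the main obstacle.**

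For the converse, suppose $G$ is subcubic with no skewed theta; I must show $L(G)$ is $t$-perfect, equivalently that none of $K_4$, $W_5$, $C^2_7$, $C^2_{10}$ is a $t$-minor of $L(G)$. Since these four graphs have maximum degree $\ge 4$ except... in fact $W_5$, $C^2_7$, $C^2_{10}$ all contain vertices of degree $4$ and $K_4$ has degree $3$ — and every $t$-minor of $L(G)$ is of the form $L(H)$ for a suitable graph $H$ obtained from $G$ by edge deletions and smoothings (this structural claim, carefully proved, is the technical heart). Note $H$ is again subcubic (smoothing and deletion do not raise degree) and contains no skewed theta (a skewed theta in $H$ would pull back to a skewed theta in $G$, since smoothing an internal degree-$2$ edge only changes path lengths, and one can re-subdivide — here one must check parities are preserved appropriately, which is exactly why the definition allows non-induced thetas and tracks parity). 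So it suffices to show: if $H$ is subcubic with no skewed theta, then $L(H)$ contains none of the four forbidden graphs as a \emph{subgraph} that is also a line graph in the required sense — and in fact I would argue each of $K_4$, $W_5$, $C^2_7$, $C^2_{10}$, when realized as $L(H)$, forces $H$ either to be non-subcubic or to contain a skewed theta. For $K_4 = L(H)$ one finds $H$ is the triangle-with-doubled-edge (or $K_4$-minus-an-edge variants), each yielding a skewed theta. For $W_5$, $C^2_7$, $C^2_{10}$, which are not line graphs of subcubic graphs at all (a vertex of degree $4$ in a line graph comes from an edge meeting $4$ others, forcing degree $\ge 3$ at an endpoint, and pushing further gives a claw or high degree), one shows no subcubic $H$ has these as $L(H)$, and then handles the $t$-minor closure by the structural claim.

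The main obstacle is precisely establishing the structural correspondence: that every $t$-minor of a line graph $L(G)$ is itself (or contains as an induced subgraph) the line graph $L(H)$ of a graph $H$ reachable from $G$ by edge deletions and path-smoothings, with control on parities. Vertex deletion in $L(G)$ is clean (edge deletion in $G$), but a $t$-contraction at a degree-$2$ vertex must be shown to correspond exactly to smoothing, and one must verify that after such operations the result is still a line graph — line graphs are not closed under arbitrary edge contraction, so the restriction to $t$-contractions at stable-degree-$2$ vertices is essential, and checking it lands back in the class of line graphs (via Whitney/Beineke-type arguments or a direct edge-vertex bookkeeping) is the delicate point. Once that correspondence is in hand, both directions follow by the case analysis of the four small forbidden graphs sketched above, using that "subcubic and skewed-theta-free" is preserved under the allowed operations.
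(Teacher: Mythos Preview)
The paper does not prove this lemma: the citation marker \texttt{\textbackslash selfcite\{BS12\}} indicates the result is quoted from Bruhn and Stein~\cite{BS12}, and no proof follows the statement. So there is nothing in the present paper to compare your argument against.

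That said, your outline is essentially sound as a self-contained derivation from Theorem~\ref{thm:Char}. The structural claim you flag as the main obstacle --- that every $t$-minor of $L(G)$ is again of the form $L(H)$ for some $H$ obtained from $G$ by edge deletions and degree-$2$ suppressions --- is correct and less delicate than you suggest: vertex deletion in $L(G)$ is edge deletion in $G$, and a $t$-contraction at a degree-$2$ vertex $e=xy$ of $L(G)$ with stable neighbourhood forces $\deg_G(x)=\deg_G(y)=2$ with the two neighbouring edges non-incident, so the contraction is exactly the suppression of $x$ and $y$ in $G$. Your exclusion of $W_5$, $C_7^2$, $C_{10}^2$ as line graphs of subcubic $H$ can be made sharp by degree counting: a vertex of degree $d$ in $L(H)$ corresponds to an edge whose endpoints have degree-sum $d+2$, so degree~$5$ is impossible when $H$ is subcubic, and a $4$-regular $L(H)$ forces $H$ to be $3$-regular with $7$ or $10$ edges, contradicting $3|V(H)|=2|E(H)|$. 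The parenthetical ``$K_4$-minus-an-edge variants'' is a slip (that graph has five edges), but the triangle with a doubled edge is indeed the only subcubic multigraph $H$ with $L(H)=K_4$, and it is itself a skewed theta.

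One caveat: you invoke Theorem~\ref{thm:Char}, which in~\cite{BS12} may well be proved \emph{using} Lemma~\ref{tpline}. Within the present paper this is harmless, since both results are imported as black boxes, but as a from-scratch proof your argument may be circular.
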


Checking for subdivisions of a certain graph can often be reduced to 
the well-known \textsc{$k$-Disjoint Paths} problem: Given a number of $k$ 
pairs of terminal vertices, the task is to decide whether there are disjoint
paths joining the paired terminals. In our context, however, this is not sufficient
as the paths linking the branch vertices in a skewed theta are subject to 
parity constraints. 

That this deep and seemingly hard problem, \textsc{$k$-Disjoint Paths with Parity Constraints},
allows nevertheless a polynomial time algorithm has been announced by 
Kawarabayashi, Reed and Wollan~\cite{KRW11}.
Another algorithm was given in the PhD thesis of Huynh~\cite{Huy09}.
These are very impressive results indeed, and they draw on deep insights coming from the 
graph minor project of Robertson and Seymour and its extension to matroids by
Geelen, Gerards and Whittle. For both algorithms, however, it seems doubtful
whether they could be implemented with a reasonable amount of work (or at all). 
We prefer therefore to present a more elementary algorithm for Lemma~\ref{detectline} 
that does not rely on any deep result and that is, in principle, implementable. 
\medskip

Given a bipartition $\mathcal P=(A,B)$ (where we allow $A$ or $B$ to be empty)
of the vertex set of a graph $G$, we call an edge \emph{$\mathcal P$-even}
if its endvertices lie in distinct partition classes of $\mathcal P$;
otherwise the edge is \emph{$\mathcal P$-odd.}
We observe that
a cycle is odd if and only if it contains an odd number of $\mathcal P$-odd edges.

%
%

The algorithm we present here to check for skewed thetas runs in two phases. 
We start with any bipartition~$\mathcal P$.
In the first phase, the algorithm tries to
iteratively reduce the number of $\mathcal P$-odd edges.
If this is no longer possible we either have found a skewed theta 
or we have arrived at a bipartition $\mathcal P'$ with at most two $\mathcal P'$-odd
edges. Then, in the second phase, we exploit that any skewed theta has to 
contain at least one of the at most two $\mathcal P'$-odd edges. In that case, 
it becomes possible to check directly for a skewed theta:

\begin{lemma}\label{lem:LinkagesAlgo}
Given a graph $G$ 
and a 
bipartition $\mathcal P$ of $V(G)$ so that at most two edges 
are $\mathcal P$-odd, it is possible
to check in polynomial time 
whether $G$ contains
a skewed theta.
\end{lemma}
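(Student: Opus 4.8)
The plan is to reduce the search for a skewed theta to a bounded number of instances of the ordinary disjoint paths problem, exploiting the hypothesis that at most two edges, say $e_1$ and $e_2$, are $\mathcal P$-odd. The key observation is the one made just before the statement: a cycle is odd precisely when it contains an odd number of $\mathcal P$-odd edges. In a skewed theta with branch vertices $x,y$ and linking paths $P_1,P_2,P_3$ of lengths with parities odd, odd, even, the three cycles $P_i\cup P_j$ have parities (odd,odd), (odd,even), (odd,even) — that is, exactly one of the three cycles $P_1\cup P_2$, $P_1\cup P_3$, $P_2\cup P_3$ is even and the other two are odd. Hence each of the two odd cycles must contain an odd number of $\{e_1,e_2\}$-edges, and the even cycle an even number. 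Since there are only two $\mathcal P$-odd edges in total, I would first argue that in any skewed theta the $\mathcal P$-odd edges are distributed among the three paths in one of a constant number of patterns: either a single odd path carries exactly one of $e_1,e_2$ and one other odd path carries the other, or one odd path carries both, etc. A short case analysis (there are only a few possibilities, since $e_1$ and $e_2$ may coincide in a path or lie in different paths, and the even path must receive an even number of them, i.e.\ zero or two) pins down, up to relabelling, where $e_1$ and $e_2$ sit.

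Next, for each such pattern I would set up a disjoint-paths instance. Write $e_1=a_1b_1$ and $e_2=a_2b_2$. Once we have committed to which path contains which $\mathcal P$-odd edge(s) and in which orientation, the branch vertices $x,y$ together with the endpoints of the committed edges cut each linking path into $\mathcal P$-even subpaths — subpaths all of whose edges join the two classes of $\mathcal P$, hence of predictable parity (a $\mathcal P$-even subpath from $u$ to $v$ has even length iff $u,v$ lie in the same class of $\mathcal P$). Thus, after guessing $x$ and $y$ (polynomially many choices) and guessing how the at most two odd edges are split among the paths and oriented (constantly many choices), the existence of a skewed theta with that prescribed combinatorial type is exactly the existence of a bounded number — at most five or six — of pairwise internally disjoint paths between prescribed pairs of vertices, each required to stay within $G$ minus the already-used vertices and each automatically of the correct parity because it is forced to be $\mathcal P$-even. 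The number of terminal pairs is an absolute constant (it does not grow with $|G|$), so each instance is solved in polynomial time by the algorithm for $k$-Disjoint Paths with $k$ bounded; running over all polynomially many guesses keeps the total polynomial.

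I would also need to dispose of two degenerate sub-cases. First, a skewed theta may use no $\mathcal P$-odd edge at all on some path, or all of its $\mathcal P$-odd edges may lie on a single path; these are handled by the same framework with some terminal pairs collapsed (a path of length zero), and one must be a little careful that the "paths" returned are genuinely paths, not single vertices where a positive length is required — but an odd path has length $\geq 1$ automatically, and an even path is allowed to have length $0$ only if that does not make two of the three paths coincide, which is easy to check. Second, the branch vertices themselves could be endpoints of $e_1$ or $e_2$; again this is just a handful of extra guesses. The main obstacle, and the part needing the most care, is the bookkeeping of the case analysis: verifying that across all distributions of $e_1,e_2$ among $P_1,P_2,P_3$ consistent with the parity pattern (odd,odd,even), the resulting disjoint-paths instance truly has a constant number of terminal pairs and that its solutions are in bijection with skewed thetas of the guessed type. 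Once that correspondence is nailed down, polynomial-time solvability follows immediately from the bounded-$k$ disjoint paths algorithm, and the lemma is proved.
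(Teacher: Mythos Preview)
Your reduction to bounded-$k$ \textsc{Disjoint Paths} is essentially correct, and in fact the paper explicitly acknowledges (at the start of Section~\ref{sec:nomorelinkages}) that this works with $k\leq 5$. One small slip: your parenthetical claim that ``the even path must receive an even number'' of $\mathcal P$-odd edges is false---when there is a single $\mathcal P$-odd edge it must sit on the even path (the pattern $(0,0,1)$), and with two $\mathcal P$-odd edges the only pattern is one on each odd path (the pattern $(1,1,0)$). This does not break your argument, since you only need that the number of patterns is bounded. You should also make explicit that you are using the ambient subcubic hypothesis (present throughout the section though omitted from the lemma's statement), so that the edge-disjoint theta paths are internally vertex-disjoint and the reduction to vertex-disjoint paths is legitimate.

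The paper deliberately takes a different and longer route: it avoids $k$-\textsc{Disjoint Paths} for $k\geq 3$ entirely, giving instead a recursive, self-contained algorithm that invokes only $2$-\textsc{Disjoint Paths}. For the one-odd-edge case (Lemma~\ref{lem:oneedge}) it performs local reductions near the odd edge and then recurses on the block decomposition of $G-x$; for two odd edges (Lemmas~\ref{lem:2oddedges} and~\ref{lem:twoedges}) it first finds a small cut through both odd edges, then splits the graph along this cut into two strictly smaller instances. Your approach is conceptually cleaner and dispatches the lemma in a few lines, at the cost of invoking the Robertson--Seymour machinery (or one of its descendants) as a black box; the paper's approach is substantially more involved but is, in the authors' words, ``in principle, implementable'' without that machinery.
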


The proof of Lemma~\ref{lem:LinkagesAlgo} is deferred to Section~\ref{sec:nomorelinkages}.
In the remainder of this section, we show how to iteratively reduce the number of $\mathcal P$-odd edges.
We start with two lemmas that give sufficient conditions for the existence of a skewed theta.

\begin{lemma}\label{2oddcycs}
A $2$-connected subcubic graph that contains two edge-disjoint odd cycles
contains a skewed theta.
\end{lemma}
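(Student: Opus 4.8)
The plan is to exploit $2$-connectivity to find two internally disjoint paths connecting the two odd cycles, and then to assemble a skewed theta from pieces of the cycles together with these connecting paths, tuning the parities by choosing which arcs of the cycles to use. First I would set up notation: let $C_1$ and $C_2$ be the two edge-disjoint odd cycles. If they share at least two vertices, say $x$ and $y$, then $x$ and $y$ are joined by two disjoint arcs inside $C_1$ and two disjoint arcs inside $C_2$; since $C_i$ is odd, its two $x$--$y$-arcs have different parities, so among the (at least) three available arcs there are two odd ones and one even one that are pairwise edge-disjoint (here one uses that $C_1,C_2$ are edge-disjoint to guarantee enough arcs), giving a skewed theta with branch vertices $x,y$. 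The only subtlety is that these arcs should be pairwise edge-disjoint rather than internally vertex-disjoint, which is exactly what the definition of skewed theta asks for, so a careful bookkeeping of which arcs to pick suffices.

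If $C_1$ and $C_2$ share at most one vertex, I would use $2$-connectivity to produce two vertex-disjoint paths $P,Q$ from $V(C_1)$ to $V(C_2)$, meeting $C_1$ only in their first endpoints and $C_2$ only in their last (a standard Menger-type argument; if $C_1,C_2$ already share a vertex $z$, take $z$ as one ``degenerate'' path and use $2$-connectivity of $G-z$ restricted appropriately, or invoke the fan version of Menger). Let $P$ run from $a_1\in C_1$ to $a_2\in C_2$ and $Q$ from $b_1\in C_1$ to $b_2\in C_2$. Now $C_1$ splits into two $a_1$--$b_1$-arcs of opposite parity (since $|C_1|$ is odd), and likewise $C_2$ into two $a_2$--$b_2$-arcs of opposite parity. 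Concatenating an arc of $C_1$, the path $P$, an arc of $C_2$, and the path $Q$ yields a closed walk; by choosing the $C_1$-arc and the $C_2$-arc among their two parity options I can realise a cycle through $P$ and $Q$ of either parity. Here is where the subcubic hypothesis enters decisively: because $\Delta(G)\le 3$, the vertices $a_1,b_1$ each have exactly one further edge on $C_1$ besides the two arc-edges, so the two arcs of $C_1$ genuinely partition $E(C_1)$ and I cannot ``reuse'' a $C_1$-vertex to sneak in an extra path — more importantly, it forces the structure to be rigid enough that I actually get a \emph{theta} (three edge-disjoint paths between two branch vertices) rather than merely two cycles glued along a path. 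Concretely, I pick branch vertices $a_1$ and $b_2$: one linking path is the $C_1$-arc from $a_1$ to $b_1$ followed by $Q$; a second is $P$ followed by a $C_2$-arc from $a_2$ to $b_2$; and the third uses the \emph{other} $C_1$-arc from $a_1$ to $b_1$, then $Q$ again — but $Q$ is already used, so instead I take the third path to be $a_1$ along the other $C_1$-arc to $b_1$, then $Q$ to $b_2$… which collides. So the honest construction is: branch vertices $b_1$ and $b_2$; the three $b_1$--$b_2$-paths are (i) $Q$ itself; (ii) one $C_1$-arc from $b_1$ to $a_1$, then $P$, then one $C_2$-arc from $a_2$ to $b_2$; (iii) the other $C_1$-arc from $b_1$ to $a_1$, then $P$, then the other $C_2$-arc from $a_2$ to $b_2$. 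These three are pairwise edge-disjoint (the two $C_1$-arcs are disjoint, the two $C_2$-arcs are disjoint, $P$ is internal to both of (ii),(iii) but they do not share it as… wait — they do share $P$).

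So the genuine fix, which I expect to be the main obstacle, is producing \emph{three} disjoint connections rather than two: I would instead apply $2$-connectivity more cleverly, e.g. contract $C_1$ to a single vertex $c_1$ and $C_2$ to $c_2$ in an auxiliary graph, observe the result is still $2$-connected (contracting connected subgraphs preserves $2$-connectivity when the rest stays connected, which it does), hence $c_1$ and $c_2$ lie on a common cycle, giving two disjoint $C_1$--$C_2$-paths $P,Q$ as above — and then use $C_1$ itself as the ``third strand'': the branch vertices are $b_1$ and $b_2$, path (i) is $Q$, path (ii) is $P$ together with a $C_2$-arc from $a_2$ to $b_2$ and a $C_1$-arc from $a_1$ to $b_1$, and path (iii) is the \emph{other} $C_1$-arc from $b_1$ to $a_1$ concatenated with $P$ — no, $P$ recurs. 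I therefore expect the clean argument to route one of the three theta-paths entirely \emph{around} $C_1$: take branch vertices $a_1=b_1$ by choosing $P$ and $Q$ to emanate from the \emph{same} vertex of $C_1$ (possible since we only need them disjoint off that vertex, again by a fan from one vertex of $C_1$ to two vertices of $C_2$), so that the two $C_2$-arcs from $a_2$ to $b_2$ together with $P$ and $Q$ already give a theta on branch vertices $a_2,b_2$ using $P\cup(\text{arc})$, $Q\cup(\text{arc})$, and — still only two strands plus we need a third. The resolution, and the real content of the lemma, is that the third strand comes from $C_1$ being odd: from the shared vertex $x$ of $C_1$, the cycle $C_1$ returns to $x$, so $P,Q$ plus $C_1$ plus the two $C_2$-arcs form a subdivision of a graph with enough edge-disjoint $x$--$y$-paths; I would finish by a short parity case-check showing that among the various concatenations one obtains three edge-disjoint $b_2$--$x$ paths, two odd and one even, the parities being adjustable precisely because $C_1$ and $C_2$ are odd. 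The hard part is the parity bookkeeping in this last case and verifying edge-disjointness; everything else (Menger, contraction preserving connectivity, arcs of an odd cycle having opposite parities) is routine.
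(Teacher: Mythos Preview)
Your proposal has a genuine gap: you never find a working choice of branch vertices, and the repeated collisions on $P$ are not an accident of your bookkeeping but a symptom of the wrong setup. The fix is twofold.

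First, you are overcomplicating the case analysis. In a subcubic graph, two edge-disjoint cycles are automatically \emph{vertex}-disjoint: a shared vertex would have degree at least~$4$. So the ``share at least two vertices'' and ``share exactly one vertex'' cases do not arise, and you may take $C_1,C_2$ to be vertex-disjoint from the start. Two-connectivity then gives two disjoint $C_1$--$C_2$-paths $P,Q$ directly, with no need for fans or contractions.

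Second, and this is the real point you are missing, take \emph{both} branch vertices on the same cycle. Let $a_1,b_1$ be the endpoints of $P,Q$ on $C_1$. Then $a_1,b_1$ are already joined by three edge-disjoint paths: the two arcs of $C_1$ between them, plus the single path $P\cup(\text{arc of }C_2)\cup Q$ routed through $C_2$. No path is reused. Since $C_1$ is odd, its two arcs have opposite parities (one odd, one even); since $C_2$ is odd, you may choose its arc so that the third path $P\cup(\text{arc})\cup Q$ is odd. That gives two odd paths and one even path between $a_1$ and $b_1$, i.e.\ a skewed theta. Your attempts all placed one branch vertex on each cycle, which forces two of the three linking paths to traverse the gap between $C_1$ and $C_2$ and hence to share $P$ or $Q$; putting both branch vertices on $C_1$ avoids this entirely.
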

\begin{proof}
Let $C_1$ and $C_2$ be two edge-disjoint odd cycles in $G$,
which then are also vertex-disjoint as the graph is assumed to be subcubic.
Since $G$ is $2$-connected there are two disjoint $C_1$--$C_2$-paths $P_1,P_2$.
The endvertices of $P_1$ and $P_2$ subdivide
$C_2$ into two subpaths, and
one of these subpaths together with $P_1$ and $P_2$ yields an odd
$C_1$-path, and thus a skewed theta. 
\end{proof}

For any bipartition $\mathcal P$ of $G$ define $G_\mathcal P$ 
to be the (bipartite) subgraph on $V(G)$ together with all the $\mathcal P$-even
edges. We formulate a second set of conditions that implies 
the presence of a skewed theta.

Let $C$ be a cycle and let $P$ and $Q$ be two disjoint $C$-paths.
Let $p_1,p_2$ be the endpoints of $P$ and $q_1,q_2$ be the endpoints of $Q$.
We say that $P$ and $Q$
are \emph{crossing on $C$} if $p_1,q_1,p_2,q_2$ appear in this order on $C$.

\begin{lemma}\label{lem:ThreeOddEdgeCut}
Let $G$ be a subcubic graph with a bipartition $\mathcal P$.
Let there be three $\mathcal P$-odd edges $o_1,o_2,o_3$ and two disjoint trees $T_1,T_2\subseteq G_\mathcal P$,
each containing an endvertex of each of $o_1,o_2,o_3$.

Assume the trees are minimal subject to the above description.
If $G_\mathcal P$ contains three edge-disjoint $T_1$--$T_2$-paths
then $G$ contains a skewed theta.
\end{lemma}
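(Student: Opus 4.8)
The plan is to build a skewed theta by combining the three edge-disjoint $T_1$--$T_2$-paths in $G_\mathcal P$ with short paths inside the trees and with the three $\mathcal P$-odd edges. First I would set up notation: write $o_i=a_ib_i$ with $a_i\in V(T_1)$ and $b_i\in V(T_2)$ for $i=1,2,3$ (the trees were chosen to contain an endvertex of each $o_i$, and by minimality they contain no other vertices of degree~$\leq 1$ that are not such endvertices, so each $T_j$ is the union of three paths from a common structure — in a tree, the three leaves $a_1,a_2,a_3$ span a subtree that is either a path or a "Y" with a single branch vertex). The key point about $G_\mathcal P$ is that it is bipartite with the bipartition $\mathcal P$, so \emph{every} path in $G_\mathcal P$ has a parity determined by the $\mathcal P$-classes of its endpoints; in particular all three $o_i$ are edges between the two classes of $\mathcal P$ only in the sense of being $\mathcal P$-odd, i.e.\ their endvertices lie in the \emph{same} class, so adding an $o_i$ to a path of $G_\mathcal P$ flips the expected parity. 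I would use this bookkeeping throughout: a closed walk using an even number of the $o_i$ is even, using an odd number is odd.

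Next I would exploit the three edge-disjoint $T_1$--$T_2$-paths $R_1,R_2,R_3$ in $G_\mathcal P$. Together with the trees $T_1,T_2$ these form a connected subgraph in which $a_1,a_2,a_3,b_1,b_2,b_3$ all appear. Contracting $T_1$ and $T_2$ each to a point, the three $R_i$ become three parallel edges, i.e.\ a theta; uncontracting, we get two vertices of $T_1\cup T_2$ joined by three internally disjoint paths through $G_\mathcal P$. The subcubic hypothesis is what forces the $R_i$ to be vertex-disjoint (not merely edge-disjoint) away from the trees, and — more importantly — it sharply limits how $R_1,R_2,R_3$ attach to $T_1$ and $T_2$: since every vertex has degree $\leq 3$, at a tree-vertex of degree~$2$ in the tree at most one $R_i$ can attach, and an endvertex $a_i$ of an $\mathcal P$-odd edge already uses one of its three slots for $o_i$. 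I expect to do a short case analysis on the shapes of $T_1,T_2$ (path vs.\ Y) and on which of the $R_i$ attach where; minimality of the trees keeps the number of cases small.

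The heart of the argument is then a parity-adjusting construction. I would pick two of the three $\mathcal P$-odd edges, say $o_1,o_2$, and form the cycle $C$ obtained by going $a_1 \to a_2$ inside $T_1$, then $o_2$ from $a_2$ to $b_2$, then $b_2\to b_1$ inside $T_2$, then $o_1$ back to $a_1$; this cycle uses exactly two $\mathcal P$-odd edges and lies otherwise in $G_\mathcal P$, hence it is \emph{even}. Now I use the third path $R_j$ that realises a connection from $T_1$ to $T_2$ avoiding $C$ as much as possible, together with $o_3$, to produce a $C$-path; by routing the two ends of this auxiliary structure so that it uses an \emph{odd} number (namely one) of the remaining $\mathcal P$-odd edges, the $C$-path splits $C$ into two arcs, one of which closes up with the $C$-path into an \emph{odd} cycle while the other closes up into an \emph{even} cycle — and these two cycles together with the $C$-path are precisely the three edge-disjoint paths between the two endpoints of the $C$-path, i.e.\ a skewed theta (two odd, one even, or the parity-complementary arrangement, which is the same thing up to relabelling). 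The subcubic degree bound guarantees the required endpoints on $C$ are distinct so that we genuinely get three edge-disjoint paths between two branch vertices rather than a degenerate configuration.

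The main obstacle I anticipate is precisely this last routing step: making sure that the third $T_1$--$T_2$ connection, after being rerouted through the trees to hit $C$ at two good vertices, actually meets $C$ in exactly two points (so it is a legitimate $C$-path) and carries the correct parity, i.e.\ passes through exactly one of $\{o_1,o_2,o_3\}$. Both the minimality of $T_1,T_2$ and the subcubic hypothesis are essential here — minimality prevents the trees from containing superfluous detours that could sabotage the parity count, and subcubicity both forces vertex-disjointness of the $R_i$ off the trees and restricts the attachment pattern enough that only finitely many configurations need to be checked. If a particular configuration fails to give the parity directly, I would instead swap the roles of which two of $o_1,o_2,o_3$ build the even cycle $C$ and which one is used for the crossing path; since flipping this choice changes the parity of the relevant arc, one of the (at most three) choices must work.
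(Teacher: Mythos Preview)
Your sketch has a genuine gap: the hypothesis that there are \emph{three} edge-disjoint $T_1$--$T_2$-paths is never really used. Your cycle $C$ is built from the two tree paths $a_1T_1a_2$ and $b_1T_2b_2$ together with $o_1,o_2$, and then you propose to attach a $C$-path using $o_3$ and ``the third path $R_j$''. But this only consumes \emph{one} of the $R_i$'s; the other two play no role. If a single $T_1$--$T_2$-path in $G_\mathcal P$ sufficed, the lemma would be trivially true whenever $G_\mathcal P$ is connected --- and it is not. More concretely, the naive theta one gets from the trees and the three odd edges alone (branch vertices $t_1\in T_1$, $t_2\in T_2$, linking paths $t_1T_1a_i\,o_i\,b_iT_2t_2$) always has all three linking paths of the \emph{same} parity: since $a_i,b_i$ lie in the same $\mathcal P$-class, the parity of each linking path depends only on the classes of $t_1$ and $t_2$, not on $i$. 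So no relabelling of $\{o_1,o_2,o_3\}$ can rescue this construction, and your ``one of the three choices must work'' step fails.

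The paper's argument is organised quite differently. It proceeds by contraposition: assuming there is no skewed theta, it shows there cannot be three edge-disjoint $T_1$--$T_2$-paths. Fix two of the odd edges and two disjoint $G_\mathcal P$-paths $R_1,R_2$ joining their endvertices, forming an even cycle. First, any two edge-disjoint $R_1$--$R_2$-paths must cross on this cycle (otherwise one finds two edge-disjoint odd cycles and invokes Lemma~\ref{2oddcycs}). Second --- and this is the step your sketch is missing --- any two such paths must have their $R_1$-endpoints in \emph{different} $\mathcal P$-classes: if they were in the same class one can write down an explicit skewed theta. A pigeonhole then finishes: three edge-disjoint $R_1$--$R_2$-paths would require three pairwise distinct $\mathcal P$-classes for their $R_1$-endpoints, impossible in a bipartition. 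A short final reduction turns three $T_1$--$T_2$-paths into three $R_1$--$R_2$-paths. The three-path hypothesis enters precisely through this pigeonhole.
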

\begin{proof}
Throughout the proof, we assume that $G$ does not contain a skewed theta.
Our aim is to show that $G_\mathcal P$ does not contain three edge-disjoint $T_1$--$T_2$-paths.

For this, we first prove a sequence of more general claims.
Let $r_1r_2$ and $s_1s_2$ be two $\mathcal P$-odd edges of $G$ such that
there are two disjoint paths 
$R_1=r_1\ldots s_1$, $R_2=r_2\ldots s_2$.
Let $C$ be the cycle $r_1R_1s_1s_2R_2r_2r_1$.


We claim that
\begin{equation}\label{2crossing}
\begin{minipage}[c]{0.8\textwidth}\em
any two edge-disjoint $R_1$--$R_2$-paths $P,Q$ are crossing on $C$.
\end{minipage}\ignorespacesafterend 
\end{equation} 
If $P$ and $Q$ are not crossing then we can easily find two edge-disjoint cycles
in $R_1\cup R_2\cup P\cup Q$, one through $r_1r_2$ and the other through $s_1s_2$.
By Lemma~\ref{2oddcycs}, however, this is impossible. Thus, $P$ and $Q$ are crossing.

Next, we show that
\begin{equation}\label{2colours}
\begin{minipage}[c]{0.8\textwidth}\em
the endvertices of any two edge-disjoint $R_1$--$R_2$-paths $P,Q$ 
in $R_1$ lie in distinct partitions classes of $\mathcal P$.
\end{minipage}\ignorespacesafterend 
\end{equation} 
Denote the endvertex of $P$ 
in $R_1$ by $p_1$ and denote the one in $R_2$ by $p_2$; define 
$q_1,q_2$ analogously for $Q$.

Suppose that $p_1$ and $q_1$ lie in the same partition class of $\mathcal P$.
Since $G$ is subcubic, $P$ and $Q$ are disjoint, and, by~\eqref{2crossing}, crossing. 
Assume that $p_1\in r_1R_1q_1$.
As $p_1$ and $q_1$ are contained in the same partition class,
 the path $p_1R_1q_1$ has even length. On the other hand, the following
two paths have odd length: $p_1Pp_2R_2s_2s_1R_1q_1$ and 
$q_1Qq_2R_2r_2r_1R_1p_1$. As, moreover, these three paths meet only in $p_1$
and $q_1$ we have found a skewed theta; this proves~\eqref{2colours}.

From this follows that
\begin{equation}\label{no3}
\emtext{
$G$ cannot contain three edge-disjoint $R_1$--$R_2$-paths.
}
\end{equation}
Indeed, by~\eqref{2colours}, the three endvertices of such paths in $R_1$
would need to lie in distinct partition classes, which is clearly impossible
as $\mathcal P$ is a bipartition.

To complete the proof, suppose now that $G_\mathcal P$ contains three edge-disjoint $T_1$--$T_2$-paths $P_1,P_2,P_3$.
Denote by $t_i$ the unique vertex that separates all the endvertices of $o_1,o_2,o_3$
in $T_i$ (unless $T_i$ is a path this is the vertex of degree~$3$ in $T_i$).
Observe that $t_i$ subdivides $T_i$ into three edge-disjoint paths $S^i_1,S^i_2,S^i_3$
(some of which might be trivial) so that $S^i_j$ contains the endvertex of $o_j$ (for $i=1,2$
and $j=1,2,3$).

Pick two distinct  $k,\ell\in\{1,2,3\}$ so that for $i=1,2$ at least two paths in 
$P_1,P_2,P_3$ the endvertex in $T_i$ is contained in $S^i_k\cup S^i_\ell=:R_i$. 
Let $\{m\}=\{1,2,3\}\sm \{k,\ell\}$.
Should now $P_j$ have its endvertex $p$ in $S^1_m-S^1_k- S^1_\ell$ 
concatenate the subpath $pS^1_mt_1$ with $P_j$,
and proceed in a similar way in $T_2$. 
In this way we turn the edge-disjoint $T_1$--$T_2$-paths into edge-disjoint $R_1$--$R_2$-paths.
Now, we obtain the desired contradiction from~\eqref{no3}.
\end{proof}

Next, we state a simple lemma that, however, is the key to reducing the number of 
$\mathcal P$-odd edges.

\begin{lemma}\label{lem:flip}
Let $G$ be a graph with a bipartition $\mathcal P$.
Given an edge-cut $F$ of $G$ that contains more $\mathcal P$-odd edges than $\mathcal P$-even edges, one can  compute a bipartition $\mathcal P'$ of $G$ with less $\mathcal P'$-odd edges in polynomial time.
\end{lemma}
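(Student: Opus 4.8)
The plan is to produce $\mathcal P'$ from $\mathcal P$ by \emph{flipping} the two partition classes on one shore of the cut. First I would recover from $F$ a set $X\subseteq V(G)$ whose cut is exactly $F$, i.e.\ such that $F$ is precisely the set of edges of $G$ with exactly one endvertex in $X$. To do this, delete the edges of $F$ and let $D_1,\dots,D_k$ be the components of $G-F$; no edge of $F$ joins a component to itself. Form the auxiliary multigraph $H$ on $\{D_1,\dots,D_k\}$ with one edge $D_iD_j$ for each edge of $F$ between $D_i$ and $D_j$. Since $F$ is an edge-cut, say $F=$ the cut of some set $Y$, every component of $G-F$ lies entirely inside $Y$ or entirely outside $Y$ (its only potential edges to the other side were removed), so colouring $D_i$ according to whether $D_i\subseteq Y$ is a proper $2$-colouring of $H$; in particular $H$ is bipartite. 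Compute any $2$-colouring of $H$ in polynomial time and let $X$ be the union of one colour class. Then every edge of $F$ joins differently coloured components, hence leaves $X$; and every edge leaving $X$ joins distinct components, hence was deleted, hence lies in $F$. So the cut of $X$ is exactly $F$.

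Now write $\mathcal P=(A,B)$ and set
$A'=(A\sm X)\cup(B\cap X)$ and $B'=(B\sm X)\cup(A\cap X)$,
so that $\mathcal P'=(A',B')$ is the bipartition agreeing with $\mathcal P$ outside $X$ and swapping the two classes inside $X$. The key observation is that an edge $e=uv$ has the same $\mathcal P'$-parity as its $\mathcal P$-parity when $e\notin F$, and the opposite parity when $e\in F$. Indeed, if $u,v\notin X$ nothing changed; if $u,v\in X$ then both endvertices were moved to the other class, so $u$ and $v$ lie in a common class of $\mathcal P'$ if and only if they do in $\mathcal P$; and if exactly one of $u,v$ lies in $X$ — which is exactly the case $e\in F$ — then exactly one endvertex was moved, so $e$ is $\mathcal P'$-even if and only if it is $\mathcal P$-odd.

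It remains to count. Let $o$ and $z$ denote the numbers of $\mathcal P$-odd and $\mathcal P$-even edges in $F$, respectively; by hypothesis $z<o$. By the observation, the $\mathcal P'$-odd edges are precisely the $\mathcal P$-odd edges outside $F$ together with the $z$ edges of $F$ that were $\mathcal P$-even. Hence the number of $\mathcal P'$-odd edges equals the number of $\mathcal P$-odd edges minus $o$ plus $z$, which is strictly smaller than the number of $\mathcal P$-odd edges. Every step above runs in polynomial time, which completes the proof. I do not expect a genuine obstacle here; the only point needing a little care is extracting a shore $X$ with cut exactly $F$ from the edge set $F$ alone, which is handled by the component-and-$2$-colouring argument in the first paragraph.
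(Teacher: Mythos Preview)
Your proof is correct and follows essentially the same approach as the paper: set $\mathcal P'=(A\triangle X,B\triangle X)$ for one shore $X$ of the cut, so that parities flip exactly on $F$ and the net change in odd edges is $z-o<0$. The only difference is that you add a careful (and correct) component-and-$2$-colouring step to recover $X$ from the edge set $F$ alone; in the paper this is unnecessary because every application of the lemma already supplies $F$ in the form $E_G(X,Y)$.
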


\begin{proof}
Let $F=E(X,Y)$ separate $X\subseteq V(G)$ 
from $Y\subseteq V(G)$ in $G$.
Then put $\mathcal P':=(A\triangle X,B\triangle X)$, and observe that every $\mathcal P$-odd
edge in $F$ becomes $\mathcal P'$-even, while the edges outside $F$ do not change.
\end{proof}

Putting together the lemmas presented so far, we arrive at the following procedure.

\begin{lemma}\label{triads}
There is a polynomial-time algorithm that takes as input a $2$-connected subcubic graph $G$, 
a bipartition $\mathcal P$ and three $\mathcal P$-odd edges $o_1,o_2,o_3$.
The algorithm: 
\begin{enumerate}[\rm (a)]
\item either correctly decides that $G$ contains
a skewed theta;
\item or  computes an edge cut $F$ that contains more
$\mathcal P$-odd edges than $\mathcal P$-even edges.
\end{enumerate}
\end{lemma}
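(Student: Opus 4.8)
The plan is, given $G$, $\mathcal P$ and $o_1,o_2,o_3$, to try to exhibit the configuration of Lemma~\ref{lem:ThreeOddEdgeCut} --- two disjoint trees $T_1,T_2\subseteq G_\mathcal P$ each meeting every $o_i$, together with three edge-disjoint $T_1$--$T_2$-paths in $G_\mathcal P$, which forces a skewed theta --- and, whenever this fails, to read off an edge cut of $G$ with strictly more $\mathcal P$-odd than $\mathcal P$-even edges (which then feeds into Lemma~\ref{lem:flip}). To begin with, if $G_\mathcal P$ is disconnected I would pick a component $D$: the cut $E_G\bigl(D,V(G)\sm D\bigr)$ is non-empty because $G$ is connected, and it consists entirely of $\mathcal P$-odd edges by definition of $G_\mathcal P$, so it is a valid output for~(b). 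From now on assume $G_\mathcal P$ is connected.

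Next I would enumerate the at most $2^3$ admissible \emph{designations}: for each $o_i$ a choice of which endvertex goes to the ``$T_1$-side'' and which to the ``$T_2$-side'', such that the $T_1$-endvertices form a set $S_1$ disjoint from the set $S_2$ of $T_2$-endvertices. If there is no admissible designation, then the three edges $o_1,o_2,o_3$ admit no proper $2$-colouring, hence contain an odd cycle; with only three edges this is a triangle. Then some vertex $x$ lies on two of the $o_i$, and by subcubicity $E_G\bigl(\{x\},V(G)\sm\{x\}\bigr)$ contains at most three edges, two of which are $\mathcal P$-odd --- so we output it under~(b). Otherwise, for every admissible designation I would contract $S_1$ and $S_2$ to vertices $s,t$ of $G_\mathcal P$ and compute a minimum $s$--$t$ edge cut by max-flow. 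If for some designation this cut has size at most two, it lifts to a bipartition $(X,Y)$ of $V(G)$ with $S_1\subseteq X$, $S_2\subseteq Y$; each $o_i$ then runs between $X$ and $Y$, so $E_G(X,Y)$ contains $o_1,o_2,o_3$ and at most two $\mathcal P$-even edges, and we output it under~(b).

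In the remaining case every admissible designation has a minimum cut of size at least three. Fixing one such designation, Menger's theorem gives three edge-disjoint $S_1$--$S_2$-paths in $G_\mathcal P$, from which --- using that $G$ is $2$-connected --- I would assemble two disjoint trees $T_1\supseteq S_1$, $T_2\supseteq S_2$ in $G_\mathcal P$; since any edge cut separating $T_1$ from $T_2$ separates $S_1$ from $S_2$ and hence has size at least three, $G_\mathcal P$ contains three edge-disjoint $T_1$--$T_2$-paths. Passing to minimal trees keeps these paths valid, so Lemma~\ref{lem:ThreeOddEdgeCut} certifies a skewed theta in $G$, which is reported under~(a). (If instead no disjoint trees can be obtained for any admissible designation, one checks that this forces a low-degree cut vertex of $G_\mathcal P$ whose neighbourhood exposes a cut of size at most two for some designation, returning us to the previous paragraph.)

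I expect the main obstacle to be precisely this last step: turning three edge-disjoint $S_1$--$S_2$-paths into the exact tree-plus-three-paths picture required by Lemma~\ref{lem:ThreeOddEdgeCut}, and dealing with the degenerate designations in which some $o_i$ share endvertices so that $|S_1|$ or $|S_2|$ is below three. Here one exploits that in a subcubic graph each of $T_1,T_2$ is a path or a subdivided claw --- so its shape is pinned down by a single branch vertex --- together with the $2$-connectedness of $G$. Everything else --- testing connectivity, the bounded enumeration of designations, the max-flow computations, and the extraction of the trees --- is clearly polynomial, which gives the claimed running time.
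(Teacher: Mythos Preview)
Your overall strategy --- either certify the hypotheses of Lemma~\ref{lem:ThreeOddEdgeCut} or exhibit a cut with at most two $\mathcal P$-even edges crossing all three $o_i$ --- is sound, and the opening cases (disconnected $G_\mathcal P$, non-bipartite $\{o_1,o_2,o_3\}$, some designation with min cut $\le 2$) are handled correctly. But the decisive step is left as a sketch and does not go through as written.

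The problem is the assembly of disjoint trees $T_1\supseteq S_1$, $T_2\supseteq S_2$ from three edge-disjoint $S_1$--$S_2$-paths. High edge-connectivity between $S_1$ and $S_2$ in $G_\mathcal P$ does not imply that $S_1$ can be spanned in $G_\mathcal P-S_2$: if $G_\mathcal P$ is a $6$-cycle $v_1\ldots v_6$ and $S_1=\{v_1,v_3,v_5\}$, $S_2=\{v_2,v_4,v_6\}$, the $S_1$--$S_2$ min cut is~$6$, yet every vertex of $S_1$ is isolated in $G_\mathcal P-S_2$. Your parenthetical fallback --- that failure to find trees for \emph{every} designation forces a cut of size $\le 2$ for \emph{some} designation --- is exactly the statement that needs proof, and it is circular as phrased (you are already assuming all designations have min cut $\ge 3$). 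Even if the implication is true, proving it appears to require precisely the kind of structural analysis you are trying to avoid. Note also that the paper's algorithm sometimes outputs a cut containing only \emph{two} of the $o_i$ and a single $\mathcal P$-even edge (its line\aref{singlecut}); such a cut does not separate any $S_1$ from $S_2$, so your enumeration would miss it.

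The paper sidesteps the tree-assembly issue entirely by manufacturing $T_1,T_2$ from a spanning tree $T$ of $G_\mathcal P$. The fundamental cycles $C_{o_1},C_{o_2},C_{o_3}$ either contain two edge-disjoint odd cycles (skewed theta by Lemma~\ref{2oddcycs}), or share a common edge $e$ (whereupon the two components of $\bigcup_i C_{o_i}-e$ are, after pruning, the desired trees), or --- using the Helly property for subtrees --- determine a unique cycle $D$ through all three $o_i$, from which the trees are again extracted after finding two $A_1$--$(A_2\cup A_3)$-paths and checking whether they cross on $D$. The point is that in every branch the trees are \emph{constructed}, so the min $T_1$--$T_2$ cut can be computed directly; your approach, by contrast, must \emph{prove existence} of suitable trees, which is the missing piece.
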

\begin{proof}
We describe the algorithm in the course of this lemma.
We omit a detailed discussion about the runtime complexity as the steps of the 
algorithm rely  on basic operations or reduce to solving  min-cut/max-flow
problems.
\astart 
\item If $G_\mathcal P$ is not connected, choose a component $X$ of $G_\mathcal P$
and return $F=E(X,G-X)$.
\suspend{enumerate}
Since $G$ is $2$-connected, $F$ contains at least two $\mathcal P$-odd edges,
which is condition~(b).
Let us now assume that $G_{\mathcal P}$ is connected.
\aresume
\item Compute a spanning tree $T$ of $G_\mathcal P$ and determine the fundamental cycles 
$C_{o_1}, C_{o_2}, C_{o_3}$ of $o_1,o_2,o_3$.
\item\label{exita} If any two of $C_{o_1}, C_{o_2}$ and $C_{o_3}$
are edge-disjoint, return \returnvalue{skewed theta}.
\suspend{enumerate}

The return value in line\aref{exita} is justified by Lemma~\ref{2oddcycs},
which means that we may assume the cycles $C_{o_1}, C_{o_2}, C_{o_3}$
to pairwise share an edge from now on.  
\aresume
\item\label{twotrees} If there is an edge $e$ shared by each of
$C_{o_1}, C_{o_2}, C_{o_3}$:
\begin{enumerate}[a.]
\item Let $T_1$ and $T_2$ be  
 the two components  of 
$\bigcup_{i=1}^3C_{o_i}-e$.
\item\label{pruning} Delete leaves from $T_1$ and $T_2$ until $T_1$ and $T_2$ have
the form of Lemma~\ref{lem:ThreeOddEdgeCut}.
\item Compute a smallest cut $F'=E_{G_\mathcal P}(X,Y)$ of $G_\mathcal P$ 
that separates $T_1$ from~$T_2$
\item\label{treeexit} If $|F'|\geq 3$, return \returnvalue{skewed theta}; otherwise return $F=E_G(X,Y)$.
\end{enumerate}
\suspend{enumerate}
Note that, for $i=1,2,3$, both components of $C_{o_i}-\{e,o_i\}$  contain
an endvertex of $o_i$, so that, 
after pruning,
$T_1$ and $T_2$ indeed
 conform with Lemma~\ref{lem:ThreeOddEdgeCut}.
Lemma~\ref{lem:ThreeOddEdgeCut} implies that $G$ contains
a skewed theta if $|F'|\geq 3$. Otherwise, $F$ contains at most
two $\mathcal P$-even edges and the three $\mathcal P$-odd edges $o_1,o_2,o_3$. 

Considering line\aref{twotrees}, 
we may from now 
on assume that there is no common edge of $C_{o_1},C_{o_2},C_{o_3}$.
Then
\begin{equation}\label{eq:FourPaths}
\begin{minipage}[c]{0.8\textwidth}\em
there
is a unique cycle $D$ in $\bigcup_{i=1}^3C_{o_i}$
that passes through each of $o_1,o_2,o_3$ and so that 
there is  a path in $G_\mathcal P$ between any 
two of the components of $D-\{o_1,o_2,o_3\}$
 that avoids the third.
\end{minipage}\ignorespacesafterend 
\end{equation} 
Indeed, each $C_{o_i}-o_i$ is a subpath of $T$ and families of 
subtrees of a tree
are known to have the Helly property, that is, if any two share
a vertex then there is also a common vertex to all.
Let $x$ be such a vertex.
Now, assume that $C_{o_1},C_{o_2},C_{o_3}$ do not have a common edge. 
Note that, for any $i \neq j$, $C_{o_i}$ and $C_{o_j}$ meet along a path. 
It follows that $C_{o_1}\cup C_{o_2}\cup C_{o_3}$
decomposes into a cycle $D$ that passes through all of $o_1,o_2,o_3$
and three internally disjoint $x$--$D$-paths that each end
in a different component of $D-\{o_1,o_2,o_3\}$.
Uniqueness of $D$ follows from the fact that $\bigcup_{i=1}^3C_{o_i}-\{o_1,o_2,o_3\}$
is a tree.
This proves~\eqref{eq:FourPaths}.

\aresume 
\item Determine the cycle $D$ in $\bigcup_{i=1}^3C_{o_i}$ that passes
through $o_1,o_2$ and $o_3$.
\suspend{enumerate}
Finding $D$ is easy, as this is done in the tree $\bigcup_{i=1}^3C_{o_i}-\{o_1,o_2,o_3\}$. 
(Alternatively, we may argue
that $E(D)$ is exactly the set of those edges in $\bigcup_{i=1}^3C_{o_i}$
that lie in only one of the cycles $C_{o_i}$.) Let $S_1,S_2,S_3$ be the 
three components of $D-\{o_1,o_2,o_3\}$.

\aresume 
\item \label{singlecut}Check whether there is a single edge $e'$ that separates $S_1$
from $S_2\cup S_3$ in $G_\mathcal P$. If yes, 
return $E_G(X,Y)$, where $X$ and $Y$ are the two components of $G_\mathcal P-e'$.
\suspend{enumerate}
Two of the edges $o_1,o_2,o_3$ are in the cut $E_G(X,Y)$,
while the only $\mathcal P$-even edge in it is $e'$.

\aresume 
\item Compute two 
edge-disjoint $S_1$--$(S_2\cup S_3)$-paths $P,Q$ in $G_\mathcal P$
so that one ends in $S_2$ and the other in $S_3$.
\suspend{enumerate}
Let us explain how $P$ and $Q$ can be computed. First, we use a standard algorithm 
to find two edge-disjoint $S_1$--$(S_2\cup S_3)$-paths $P,Q$ in $G_\mathcal P$;
these exist by Menger's theorem and line\aref{singlecut}. 
If already one ends in $S_2$ and the other in $S_3$, we use these. 
So, assume that $P$ and $Q$ both end in $S_2$, say. 
By~\eqref{eq:FourPaths}, we can find an $S_1$--$S_3$-path $R$ in $G_\mathcal P-S_2$.
If $R$ is disjoint from $P$ and $Q$, we replace $Q$ by $R$.
If not, we follow $R$ until we encounter for 
the last time a vertex of $P\cup Q$, where we see $R$ directed from $S_1$ to $S_3$. 
Let us say this last vertex $q$ is in $Q$. 
Then, we replace $Q$ by $QqR$. 

\aresume 
\item If $P$ and $Q$ are not crossing on $D$ then return \returnvalue{skewed theta}.
\item \label{gah}Otherwise, choose an edge $e''$  that separates the endvertices of $P$ and $Q$ in $S_1$
and apply lines~\ref{pruning}--\ref{treeexit} to
the two components $T_1$ and $T_2$ of $(D-\{o_1,o_2,o_3,e''\}) \cup P\cup Q$.
\end{enumerate}
If $P$ and $Q$ are not crossing then $D\cup P\cup Q$ contains two disjoint odd cycles, 
and thus $G$ contains a skewed theta, by Lemma~\ref{2oddcycs}. If, on the other hand, $P$ 
and $Q$ are crossing then each of the two components $T_1$ and $T_2$ as in
line\aref{gah}
is incident with an endvertex of each of $o_1,o_2,o_3$.
\end{proof}


We now prove that for line graphs $t$-perfection can be checked in polynomial-time.
\begin{proof}[Proof of Lemma~\ref{detectline}.]
Let $G$ be a given graph.
If $G$ has maximum degree at least~$4$, its line graph 
$L(G)$ is not $t$-perfect by Lemma~\ref{tpline}.
Otherwise, we apply the algorithm below to the blocks of $G$
to check whether $G$ contains a skewed theta.
Clearly, any skewed theta is completely contained in a block of $G$.

\astart
\item Set $\mathcal P:=(V(G),\emptyset)$.
\item While there are at least~$2$ distinct $\mathcal P$-odd edges, do the following:
\begin{enumerate}[a.]
\item Run the algorithm of Lemma~\ref{triads}.
\item If the algorithm returns a cut $F=E_G(X,Y)$ with more $\mathcal P$-odd
edges than $\mathcal P$-even edges, apply Lemma~\ref{lem:flip}.
\end{enumerate}
\item Apply Lemma~\ref{lem:LinkagesAlgo} to decide whether $G$ contains a skewed theta.
\end{enumerate}
The algorithm runs in polynomial-time, as the number of $\mathcal P$-odd edges decreases in each iteration of the while loop.

Correctness holds as Lemma~\ref{tpline} guarantees that $L(G)$ is $t$-perfect if and only if $G$ does not contain a skewed theta.
\end{proof}

\section{Proof of Lemma~\ref{lem:LinkagesAlgo}}\label{sec:nomorelinkages}

After having reduced the number of $\mathcal P$-odd edges, we are in 
this section in the situation that at most two remain. 
We note that, in this setting, checking for a skewed theta can be 
reduced to several applications of \textsc{$k$-Disjoint Paths} 
with a $k$ of at most~$5$. For any fixed $k$ a 
polynomial time algorithm is known to exist, see for instance
Kawarabayashi, Kobayashi and Reed~\cite{KKR12}. 
However, there does not seem to be a practical algorithm known if $k\geq 3$. 

We therefore give here an algorithm for Lemma~\ref{lem:LinkagesAlgo} 
that only relies on the solution of \textsc{$2$-Disjoint Paths}, 
for which several explicit algorithms are known that are independent 
of the heavy machinery of the graph minor project.
We start by treating the case when there is only one odd edge.

\begin{lemma}\label{lem:oneedge}
Let $G$ be a subcubic graph with a bipartition $\mathcal P$ such that there is at most one $\mathcal P$-odd edge.
Then it can be decided in polynomial time whether $G$ has a skewed theta.
\end{lemma}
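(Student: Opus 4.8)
The plan is to reduce the search for a skewed theta to a bounded number of \textsc{$2$-Disjoint Paths} instances. The guiding observation is the one recorded just before Lemma~\ref{lem:LinkagesAlgo}: a cycle is odd precisely when it contains an odd number of $\mathcal P$-odd edges, and a skewed theta consists of three edge-disjoint branch-vertex paths, two odd and one even. If $G$ has \emph{no} $\mathcal P$-odd edge at all, then $G_\mathcal P=G$ is bipartite, every cycle is even, hence no odd path of any kind exists and $G$ cannot contain a skewed theta; we simply answer ``no''. So assume there is exactly one $\mathcal P$-odd edge $o=xy$. Every odd cycle of $G$ must use $o$. In a skewed theta the two odd paths together with the even path form three cycles pairwise sharing one of the three paths; an easy parity count (the sum of the lengths of the three paths is even, and exactly two are odd) shows that among the three constituent cycles exactly two are odd. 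Since every odd cycle uses the single edge $o$, and the three paths are edge-disjoint, the edge $o$ lies on exactly one of the three paths. Thus in any skewed theta, $o$ is an interior edge (or an end edge) of one of the three linking paths, while the other two linking paths avoid $o$ entirely.

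Next I would guess which of the three paths contains $o$ and which parity each path carries. There are only two cases up to symmetry: either $o$ lies on the even path, or $o$ lies on one of the two odd paths. In the first case, the even path has the form $u\,P_1\,x\,o\,y\,P_2\,v$ where $u,v$ are the branch vertices and $P_1,P_2$ avoid $o$; in $G_\mathcal P=G-o$ the two segments $P_1$ and $P_2$ are paths whose total length has the same parity as the even path minus one, i.e.\ odd, so one of $P_1,P_2$ is even and the other odd (in $G_\mathcal P$, which is bipartite, the parity of a path is determined by the $\mathcal P$-classes of its endpoints). The remaining two linking paths live entirely in the bipartite graph $G_\mathcal P$ and are required to be odd; again, odd in a bipartite graph simply means the endpoints lie in opposite classes, a condition on the branch vertices $u,v$ that we can check directly. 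Hence, once we fix the four vertices $u,v,x$'s side and $y$'s side, the existence question becomes: do there exist four pairwise edge-disjoint paths in $G_\mathcal P$ with prescribed endpoints? That is too many disjoint paths to ask for directly, so instead I would fix more structure: guess the branch vertices $u,v$ (polynomially many choices) and guess, for each of the three linking paths, whether it uses $o$; then contract or split along $o$ appropriately so that all remaining path requirements become edge-disjointness of paths in $G_\mathcal P$. Because $G$ is subcubic, edge-disjoint and (internally) vertex-disjoint essentially coincide at the branch vertices after a standard vertex-splitting gadget, so we can pass to vertex-disjoint paths. With $u,v$ and the role of $o$ fixed, we are left asking for two vertex-disjoint paths in $G_\mathcal P$ between two fixed pairs of terminals — exactly \textsc{$2$-Disjoint Paths} — plus at most one extra path that is forced to run through $o$ and can be handled separately by deleting $o$'s endpoints and joining. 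Running \textsc{$2$-Disjoint Paths} (via any of the classical explicit algorithms) for each of the polynomially many guesses of $(u,v)$ and each of the boundedly many parity/incidence patterns decides the question in polynomial time.

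The main obstacle I anticipate is bookkeeping the parity constraints correctly after the reduction to $G_\mathcal P$: one must be careful that ``odd length'' for the two odd linking paths translates faithfully into an endpoint-side condition in the bipartite graph $G_\mathcal P$, and that the even linking path, once $o$ is removed, really does split into one even and one odd $G_\mathcal P$-segment with the correct endpoint sides. A second subtlety is ensuring the three linking paths of the putative skewed theta are pairwise edge-disjoint and not merely pairwise disjoint away from branch vertices; in a subcubic graph a shared internal vertex forces a shared edge, so after splitting each degree-$3$ vertex into a small gadget the edge-disjoint requirement becomes a vertex-disjoint requirement, which is what the \textsc{$k$-Disjoint Paths} machinery expects. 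I would also need to verify that a skewed theta need not be induced and that the reduction never spuriously ``creates'' a skewed theta — but since we only ever search for subgraphs of $G$ with the right parities, soundness is immediate, and completeness follows because an actual skewed theta must match one of the finitely many guessed patterns.
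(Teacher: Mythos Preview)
Your parity analysis is essentially right (and in fact sharper than you realise: the unique $\mathcal P$-odd edge $o$ must lie on the \emph{even} linking path, never on an odd one, since the cycle formed by the two odd paths is even and so cannot contain $o$). The real problem is the step where you claim the search reduces to \textsc{$2$-Disjoint Paths} once the branch vertices $u,v$ are guessed.

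After guessing $u,v$ you need, in the bipartite graph $G_\mathcal P=G-o$, four pairwise edge-disjoint paths: two $u$--$v$ paths (the odd ones), a $u$--$x$ path, and a $y$--$v$ path (the two halves of the even path). Equivalently, in $G$ you need three edge-disjoint $u$--$v$ paths one of which passes through the prescribed edge~$o$. This is genuinely a three-path problem: the path through $o$ cannot be ``handled separately by deleting $o$'s endpoints and joining'', because whichever path you route first may block the others. Concretely, if you first find two disjoint $u$--$v$ paths in $G-\{x,y\}$ and then look for the $o$-path in what remains, you may fail even though a skewed theta exists (the two odd paths you chose were the wrong ones); and if instead you first fix the $o$-path, there are exponentially many candidates. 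Your write-up also slips between ``two vertex-disjoint paths between two fixed pairs of terminals'' (genuine \textsc{$2$-Disjoint Paths}) and ``two disjoint paths with the same terminal pair $u,v$'' (plain $2$-connectivity); neither reading yields the required three mutually disjoint paths.

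The paper avoids this obstacle by a completely different, recursive strategy: it does \emph{not} guess both branch vertices. Instead it first reduces so that the endpoint $x$ of $o$ has degree~$3$, checks directly whether $x$ itself is a branch vertex (easy, since all three edges at $x$ are then forced), and otherwise exploits the block decomposition of $G-x$. The key structural fact is that each block of $G-x$ meets $N(x)\cup\{\text{cutvertices}\}$ in at most three vertices, and one can replace each block by a strictly smaller one-odd-edge instance. This recursion sidesteps the need for any $k$-Disjoint-Paths call with $k\ge 3$.
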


\begin{proof}

If $G$ does not have any $\mathcal P$-odd edge then it cannot 
contain a skewed theta, and if $G$ is not $2$-connected then 
any skewed theta lies in the block that contains the 
$\mathcal P$-odd edge. Thus, we may 
assume that the input graph $G$ is $2$-connected and contains 
a unique $\mathcal P$-odd edge, $xy$ say.
Let $\mathcal P = (A,B)$, and let $x,y\in A$. 
\astart
\item \label{smallinst}If $|V(G)|\leq 3$, return \returnvalue{no skewed theta}. 
\suspend{enumerate}

We perform, if possible, one of two reductions in order to make 
the instance size smaller. 
If both $x$ and $y$ are of degree~$2$, then we add an edge between the neighbour $x'\neq y$
of $x$ and the neighbour $y'\neq x$ of $y$, and we delete $x,y$. See Figure~\ref{singlereduction}~(a)
for an illustration. (Observe that $x'\neq y'$, as $G$ is not a triangle.)
Denoting the resulting graph by $\tilde G$ and the induced bipartition by $\mathcal{\tilde P}$,
we note that the only $\mathcal{\tilde P}$-odd edge of $\tilde G$ is $x'y'$.
Moreover, $\tilde G$ has a skewed theta if and only if $G$ has a skewed theta.

\begin{figure}[ht]
\centering
\includegraphics[scale=0.7]{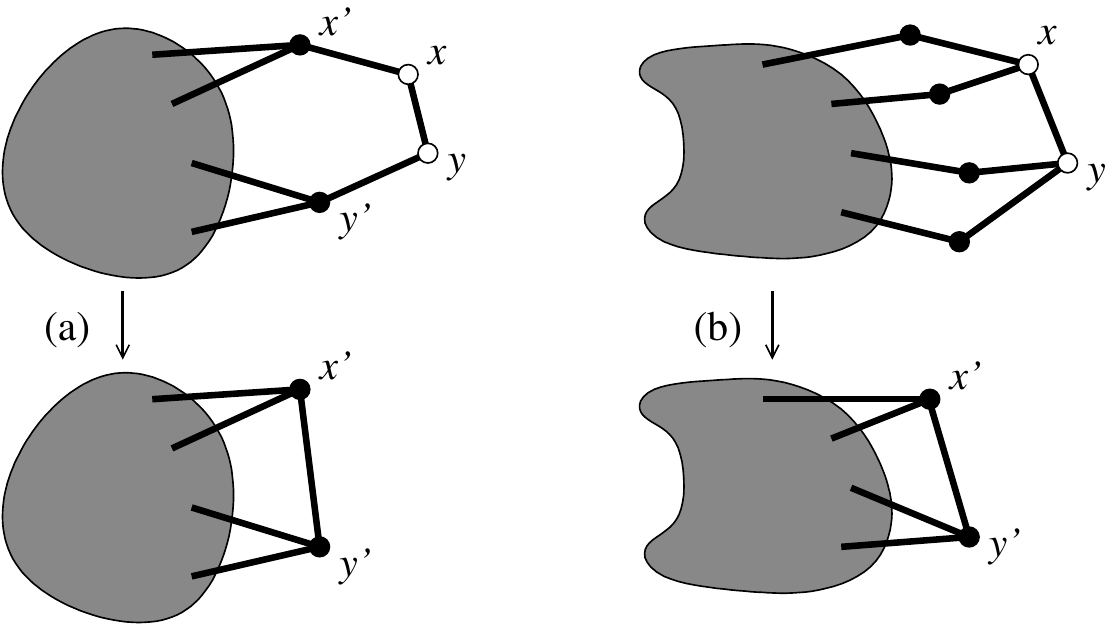}
\caption{Reduction $G\to\tilde G$}\label{singlereduction}
\end{figure}

In a similar way, we perform a reduction when 
both $x$ and $y$ have degree~$3$, and if each neighbour $u\notin\{x,y\}$
of~$x$ or of~$y$ has degree~$2$. 
Then we identify $x$ and $N(x)-y$ to a new vertex $x'$, 
and $y$ and $N(y)-x$ to a new vertex $y'$; see Figure~\ref{singlereduction}~(b).
Again, the resulting graph $\tilde G$ has a skewed theta precisely when $G$ has one;
and the only $\mathcal{\tilde P}$-odd edge is $x'y'$. 
\aresume
\item As long as possible, successively reduce $G$ to $\tilde G$.
\suspend{enumerate}
By exchanging $x$ and $y$, if necessary, we may therefore assume that
\begin{equation}\label{degcond}
\begin{minipage}[c]{0.8\textwidth}\em
$x$ has two neighbours $u,v\neq y$, and if $\deg_G(y)=3$ then $\deg_G(u)=3$ as well.
\end{minipage}\ignorespacesafterend 
\end{equation} 
The algorithm proceeds with
\aresume
\item\label{xnotbranchvx} Check whether $G$ contains a skewed theta, in which $x$ 
is a branch vertex.
\suspend{enumerate}
This is the case  if and only if there is a vertex $z \in B$ 
such that there are three paths between  $z$ and $\{y,u,v\}$ that have pairwise only $z$ in common.
Clearly, this can be checked for in polynomial time.

\aresume
\item If $G-x$ is $2$-connected return \returnvalue{skewed theta}.
\suspend{enumerate}
If $G-x$ is $2$-connected, then there is a cycle $C$ through $y$ and some other neighbour of $x$, say $u$.
Since $x \in A$ and thus $u \in B$, it follows that both paths from $y$ to $u$ in $C$ are of odd length.
Now, $C$ together with the path $yxu$ forms a skewed theta. 

So, we may assume that $G-x$ has cutvertices: Let their union with $N_G(x)$ 
be denoted with $S$.
Note that line\aref{xnotbranchvx} implies that
any skewed theta in $G$ has its two branch vertices in a common non-trivial block of $G-x$.
We prove:
\begin{equation}\label{eq:blockdegree}
\begin{minipage}[c]{0.8\textwidth}\em
every block of $G-x$ contains exactly two vertices of $S$, 
except for possibly one block, denoted by $X_*$, that contains three vertices of $S$.
\end{minipage}\ignorespacesafterend 
\end{equation} 
To prove the claim, consider the graph $H$ obtained from $G-x$ by adding three new
vertices $p_1,p_2,p_3$ each of which is precisely  adjacent to a distinct neighbour of $x$. 
Then the cutvertices of $H$ are exactly the vertices in $S$. 
Consider the block tree of $H$, that is, the graph defined on the blocks and cutvertices,
where a block $X$ and a cutvertex $w$ are adjacent if $w\in V(X)$.
Then, as $G$ is $2$-connected
every leaf in the block tree contains one of $p_1,p_2,p_3$. Thus, the block tree has at most
(in fact, precisely) three leaves, which directly gives Claim~\eqref{eq:blockdegree}.
\medskip

We use the following observation.
\begin{equation}\label{eq:non-trivialblocks}
\begin{minipage}[c]{0.8\textwidth}\em
if any non-trivial block $X$ of $G-x$ contains two vertices of $S$ 
in distinct classes of~$\mathcal P$ then $G$ contains a skewed theta.
\end{minipage}\ignorespacesafterend 
\end{equation} 
Suppose that~\eqref{eq:non-trivialblocks} is false.
Let $y'$ be a vertex of $S\cap V(X)$ for which there is a $y'$--$y$-path $P_y$
that is internally disjoint from $X$. Now, as~\eqref{eq:non-trivialblocks} is false
there is a vertex $z \in S\cap V(X)$ so that $y'$ and $z$ are not in the 
same bipartition class of $\mathcal P$. As $z\in S$, there is a path $P_z$ from $z$
to one of $u,v$, $u$ say,  that is internally disjoint from $X$. 
Let $C$ be a cycle in $X$ that contains $y'$ and $z$.
Then $C\cup P_y\cup P_z$ together with $uxy$ is a skewed theta with branch vertices $y',z$.
This proves~\eqref{eq:non-trivialblocks}.

\aresume
\item Compute the block decomposition of $G-x$, and check for~\eqref{eq:non-trivialblocks}.
\suspend{enumerate}

For every non-trivial block $X$ of $G-x$ we now construct a new graph $X'$,
so that
\begin{equation}\label{newblocksX}
\begin{minipage}[c]{0.8\textwidth}\em
 $G$ has a skewed theta both of whose branch vertices
are contained in $X$ if and only if $X'$ has a skewed theta. 
\end{minipage}\ignorespacesafterend 
\end{equation} 
Moreover, 
the bipartition $\mathcal P$ extends in a natural way to $X'$
so that there is precisely one $\mathcal P$-odd edge in $X'$.
The construction is sketched in Figure~\ref{blocksdecom}.

First consider a non-trivial block $X$ that contains exactly two vertices
of $S$, say $r$ and $s$. 
We observe that there is 
 an $r$--$s$-path $P$ in $G$ that is internally disjoint from $X$ 
and that passes through $xy$. As $r$ and $s$ are in the same class of $\mathcal P$,
the path $P$ has odd length. 
We set $X':=X+rs$. Clearly, any skewed theta of $X'$ contains $rs$. 
By replacing $rs$ with $P$, we then obtain a skewed theta of $G$. 
Conversely, a skewed theta of $G$ with both branch vertices in $X$ has a 
subpath from $r$ to $s$ that passes through $xy$. Substituting this subpath 
by $rs$ yields a skewed theta of $X'$. Thus,
we see that~\eqref{newblocksX} is satisfied.

\begin{figure}[ht]
\centering
\includegraphics[scale=0.7]{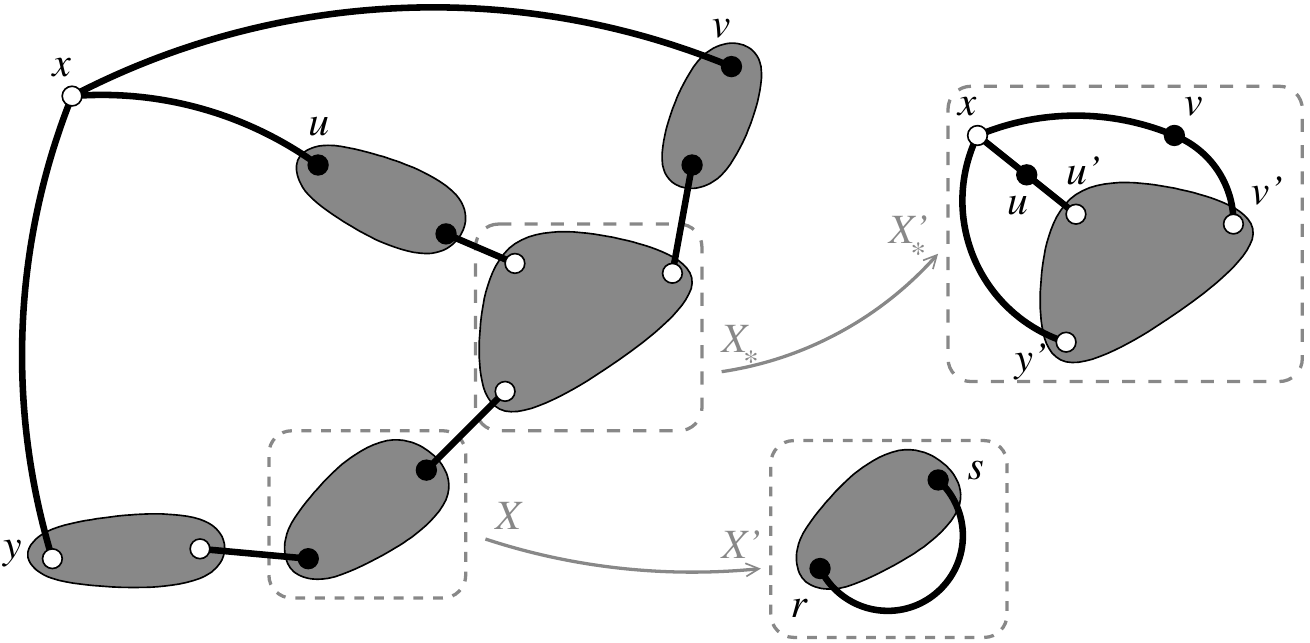}
\caption{The reduction of the blocks}\label{blocksdecom}
\end{figure}

Second, we treat the unique block $X_*$ containing three vertices 
of $S$, if there is such a block. Let the three vertices of $S$ in $X_*$
be $u',v',y'$, where the names are chosen such that there are 
disjoint paths $P_u,P_v,P_y$ linking $u$ to $u'$, $v$ to $v'$ and $y$ to $y'$,
and so that each of these paths is internally disjoint from $X_*$.

We claim that
\begin{equation}\label{3blockS}
\text{\em 
$\{u',v',y'\}\subseteq A$.
}
\end{equation}
Indeed, since we already checked for~\eqref{eq:non-trivialblocks}, 
either all of $\{u',v',y'\}$ are contained in $A$ or in $B$.
So, suppose that $\{u',v',y'\}\subseteq B$.
Now we find three internally disjoint paths between $y'$ and $x$,
which means that $x$ is a branch vertex of a skewed theta.
This, however, is impossible by~\eqref{xnotbranchvx}. 
To obtain the paths, start with the three paths $xyP_yy'$, $xuP_uu'$
and $xvP_vv'$, and extend the two latter paths by internally disjoint 
$\{u',v'\}$--$y'$-paths in $X_*$. These exists, since $X_*$ is a non-trivial block. 
This proves~\eqref{3blockS}.

We let now $X_*'$ be the graph obtained from $X_*$ by adding $x,u,v$
and the edges $xy',uu',vv'$. (Note, that $y=y'$ is possible, while 
$u,v\in B$ always implies $u\neq u'$ and $v\neq v'$.)
With this,~\eqref{newblocksX} is satisfied.

\aresume
\item Compute for every block $X$ of $G-x$ the graph $X'$
and apply line\aref{smallinst} to every $X'$ independently.
\suspend{enumerate}

In order to bound the total number of recursions called, we observe that
\begin{equation}\label{recurrenceeq}
\begin{minipage}[c]{0.8\textwidth}\em
$|V(X')|<|V(G)|$
for every non-trivial block $X$ of $G-x$,
and
$\sum_{X}|V(X')|\leq |V(G)|+2$, 
where the sum ranges over the  non-trivial blocks.
\end{minipage}\ignorespacesafterend 
\end{equation} 
Indeed, the second claim is immediate as $G$ is subcubic, which
is maintained throughout the algorithm, implies that no two non-trivial blocks of $G-x$
share a vertex. The only vertices that may appear in two $X_1',X_2'$ are $u,v$, and then 
only if one of $X_1,X_2$ is equal to $X_*$. The first claim needs only proof for $X_*$. 

So, suppose that $|V(X_*')|=|V(G)|$. Since in constructing $X_*'$ we add to $X_*$
the three vertices $x,u,v$, this is only possible if $y\in V(X_*)$, that is, if $y=y'$.
Then, since $X_*$ is a non-trivial block but $y$ is adjacent to $x\notin V(X_*)$, 
we deduce that $\deg_G(y)=3$, which by~\eqref{degcond} gives $\deg_G(u)=3$ as well. 
If $u$ had two of its neighbours in $X_*$ then $u$ itself would be contained in $X_*$, 
which is impossible as then $u=u'\in A$, by~\eqref{3blockS}, 
but $x\in A$ implies $u\in B$. Thus, $u$ has
besides $x$ a second neighbour outside $X_*$, which then also lies outside $X_*'$.
This shows that $|V(X_*')|<|V(G)|$.
\medskip

Using a standard analysis of the recurrence relation\footnote{See for example the textbook by Cormen et al.~\cite[Ch.~I.4]{CLRS09}.} 
given by~\eqref{recurrenceeq}, 
we get that the total number of recursions is $\mathcal O(|V(G)|^2)$.
Indeed, the input graph is split up into, essentially, disjoint parts, each of which is properly smaller than~$G$.
\end{proof}

\begin{lemma}\label{lem:2oddedges}
There is a polynomial-time algorithm that, 
given a $2$-connected subcubic graph $G$
and given a bipartition $\mathcal P$ of its 
vertex set so that there are exactly two $\mathcal P$-odd edges $o_1,o_2$,
either
\begin{enumerate}[\rm (a)]
\item\label{2odda} decides correctly that $G$ has a skewed theta;
\item\label{2oddb} or computes a minimal cut $F$ containing $o_1,o_2$ and at most
two other edges.
\end{enumerate}
\end{lemma}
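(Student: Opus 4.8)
The plan is to mimic the structure of the proof of Lemma~\ref{triads}, but now with two $\mathcal P$-odd edges instead of three. Write $o_1=a_1b_1$ and $o_2=a_2b_2$. As in the earlier argument, throughout I assume that $G$ has no skewed theta and try to produce the cut $F$ in outcome~\eqref{2oddb}; whenever the structural analysis fails, I will be able to exhibit a skewed theta, giving outcome~\eqref{2odda}. First I deal with the easy degenerate situation: if $G_\mathcal P$ is disconnected, return the cut between a component and its complement; since $G$ is $2$-connected this cut has at least two edges, all of which must be $\mathcal P$-odd, hence it is exactly $\{o_1,o_2\}$, a minimal cut with no other edges, so we are in case~\eqref{2oddb}. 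So assume $G_\mathcal P$ is connected. Take a spanning tree $T$ of $G_\mathcal P$ and let $C_{o_1}, C_{o_2}$ be the fundamental cycles of $o_1,o_2$. If $C_{o_1}$ and $C_{o_2}$ are edge-disjoint, then by Lemma~\ref{2oddcycs} (applied inside the relevant block, noting these are genuine odd cycles of $G$ since each contains exactly one $\mathcal P$-odd edge) $G$ contains a skewed theta; return \returnvalue{skewed theta}. Hence we may assume $C_{o_1}$ and $C_{o_2}$ meet, and since the paths $C_{o_i}-o_i$ are subpaths of the tree $T$, they meet along a common subpath; contracting that subpath, $C_{o_1}\cup C_{o_2}$ is a single cycle $D$ through both $o_1$ and $o_2$ together with at most two pendant tree-paths. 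Let $T_1, T_2$ be the two components of $D-\{o_1,o_2\}$, pruned (as in line~\ref{pruning} of Lemma~\ref{triads}) so that each $T_i$ is a minimal subtree of $G_\mathcal P$ containing one endvertex of $o_1$ and one of $o_2$; each $T_i$ is then a path.

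The next step is the max-flow computation. Compute a smallest edge-cut $F'=E_{G_\mathcal P}(X,Y)$ of $G_\mathcal P$ separating $T_1$ from $T_2$. I claim: \emph{if $|F'|\geq 3$ then $G$ has a skewed theta.} This is the analogue of Lemma~\ref{lem:ThreeOddEdgeCut}, and I would prove it by essentially the same argument as the one establishing \eqref{no3}: three edge-disjoint $T_1$--$T_2$-paths $P_1,P_2,P_3$ in $G_\mathcal P$ can be rerouted (extending along the paths $T_i$ toward the appropriate endvertices of $o_1,o_2$) so that at least two of them become edge-disjoint $R_1$--$R_2$-paths for the two disjoint paths $R_1=a_1\ldots a_2$, $R_2=b_1\ldots b_2$ obtained from $D$ by deleting $o_1,o_2$; then property \eqref{2colours} forces the two endvertices in $R_1$ into distinct classes of $\mathcal P$, while the third path's endvertex in $R_1$ would have to occupy a third class — impossible. (Here one must take a little care: with only two odd edges there are only two endvertices to distribute, so the contradiction is slightly different in flavour from \eqref{no3}; I expect one actually gets a skewed theta directly from two suitably crossing rerouted paths via the computation displayed after \eqref{2colours}, namely $p_1Pp_2R_2b_2b_1R_1q_1$ and $q_1Qq_2R_2a_2a_1R_1p_1$ together with $p_1R_1q_1$.) If instead $|F'|\leq 2$, return $F=E_G(X,Y)$: this cut contains $o_1,o_2$ (since $X$ separates $T_1$ from $T_2$ and the edges $o_1,o_2$ run between $T_1$ and $T_2$) and at most $|F'|\leq 2$ further, $\mathcal P$-even, edges; minimality of $F$ as a cut can be arranged by taking $F'$ a smallest cut and noting the $\mathcal P$-odd edges lie outside $G_\mathcal P$. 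That is exactly outcome~\eqref{2oddb}.

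Finally, as in Lemma~\ref{triads}, all the work reduces to basic graph operations (spanning tree, fundamental cycles, pruning) and a single min-cut/max-flow computation in $G_\mathcal P$, so the algorithm runs in polynomial time; I would omit a detailed runtime analysis for the same reason the earlier proof does. The main obstacle I anticipate is verifying the rerouting step cleanly when $|F'|\geq 3$: with two odd edges the ``three disjoint classes'' contradiction of \eqref{no3} does not transfer verbatim, so one must instead carefully choose two of the three edge-disjoint $T_1$--$T_2$-paths, argue they can be taken crossing on $D$ (via \eqref{2crossing}), and then read off a skewed theta from the displayed odd/even length bookkeeping — exactly the kind of case analysis already carried out in the proof of Lemma~\ref{lem:ThreeOddEdgeCut}, which is the result I would lean on, possibly after a one-line reduction of the two-odd-edge situation to its hypothesis by formally regarding one of $o_1,o_2$ as playing the role of two coincident odd edges.
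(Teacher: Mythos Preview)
Your plan is correct in outline, but it takes a detour where the paper goes straight. You mimic the spanning-tree and fundamental-cycle machinery of Lemma~\ref{triads}, then compute a min-cut in $G_{\mathcal P}$ separating the two arcs $T_1,T_2$ of the cycle $D$. The paper bypasses all of this: since $G$ is $2$-connected, Menger gives two disjoint paths $P_1,P_2$ directly, each joining an endvertex of $o_1$ to an endvertex of $o_2$; one then computes a minimal cut $F$ in $G$ separating $P_1$ from $P_2$. That cut automatically contains $o_1,o_2$, so $|F|\leq 4$ is outcome~(b) and $|F|\geq 5$ gives three edge-disjoint (hence, by subcubicity, disjoint) $P_1$--$P_2$-paths. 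Your route arrives at the same configuration, just via more bookkeeping.

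Your hedging on the skewed-theta step is unnecessary. Claims~\eqref{2crossing}--\eqref{no3} in the proof of Lemma~\ref{lem:ThreeOddEdgeCut} only use that $r_1r_2$ and $s_1s_2$ are two $\mathcal P$-odd edges and that $R_1,R_2$ consist of $\mathcal P$-even edges; the third odd edge plays no role there. So~\eqref{no3} applies verbatim to your $R_1,R_2$, and the pigeonhole contradiction (three endvertices on $R_1$, pairwise in distinct classes of a bipartition) is exactly right. There is no need to ``regard one of $o_1,o_2$ as two coincident odd edges''. The paper phrases the same argument contrapositively: among three crossing paths, two have their $P_1$-endvertices in the same class, and those two together with pieces of the cycle $P_1\cup P_2+o_1+o_2$ form a skewed theta outright.
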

\begin{proof}
Since the algorithm below can be reduced to min cut/max-flow problems,
it clearly can be implemented to run in polynomial time. 
\astart
\item\label{P1P2} Compute two disjoint paths $P_1,P_2$, each of which linking
one endvertex of $o_1$ to one endvertex of $o_2$.
\item Compute a minimal cut $F$ in $G$ separating $P_1$ from $P_2$.
\item If $|F|\leq 4$, return $F$.
\item If $|F|\geq 5$, return \returnvalue{skewed theta}.
\end{enumerate}
For the proof of correctness, observe first that paths $P_1,P_2$ as in
line\aref{P1P2} exists as $G$ is $2$-connected. Moreover, $F$ contains $o_1,o_2$.
Thus, if $|F|\leq 4$ we have indeed  outcome~\eqref{2odda}. So,
suppose that $|F|=5$, which implies that there
is a set $\mathcal Q$ of three
edge-disjoint $P_1$--$P_2$-paths. From $\Delta(G)\leq 3$
it follows that the paths  in $\mathcal Q$ are, in fact, pairwise disjoint. 
Now, if any two of them are not crossing on the cycle $C:=P_1\cup P_2+o_1+o_2$
then $G$ contains two odd disjoint cycles and therefore a skewed theta, 
by Lemma~\ref{2oddcycs}. So, we may assume that any two of them cross on $C$.

We observe
that two of the paths in $\mathcal Q$, let us say $R,S$, have their endvertices on
$P_1$ in the same class of $\mathcal P$. Let the endvertices of $R$ be $r_1$
and $r_2$, and $s_1$ and $s_2$ those of $S$, where $r_1$ and $s_1$ lie in $P_1$.
Then deletion of the internal vertices of $r_2P_2s_2$ from $C\cup R\cup S$
yields a skewed theta with $r_1,s_1$ as branch vertices.
\end{proof}

\begin{lemma}\label{lem:twoedges}
There is a polynomial-time algorithm that, 
given a subcubic graph $G$ with a bipartition $\mathcal P$ of its 
vertex set so that there are exactly two $\mathcal P$-odd edges $o_1,o_2$
and given a minimal cut $F$ containing $o_1,o_2$ and at most two other edges, decides whether $G$ has a skewed theta.
\end{lemma}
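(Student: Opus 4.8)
The plan is to exploit the cut $F$ to split $G$ into two sides, $G_1$ and $G_2$, and to argue that any skewed theta either lives entirely on one side, or uses the cut $F$ in a very controlled way — crossing it at most four times. Concretely, write $F=E_G(X,Y)$ with $o_1,o_2\in F$ and $|F|\le 4$. Let $G_i$ be the graph induced on $X$ (resp.\ $Y$), with the endvertices of the edges of $F$ marked as terminals; since $\Delta(G)\le 3$ and $|F|\le 4$, there are at most four terminals on each side, and they are pairwise distinct unless forced otherwise. The key structural observation is that a skewed theta, being the union of three edge-disjoint paths between two branch vertices, meets the edge-cut $F$ in an \emph{even} number of edges along each of its three constituent paths, hence in at most four edges of $F$ in total (as $|F|\le 4$). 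So it decomposes into a bounded collection of subpaths in $G_1$ and $G_2$, glued across a bounded number of terminal pairs.

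First I would enumerate the possible ``crossing patterns'': the ways in which the three paths of a putative skewed theta can be cut by the at most four edges of $F$, recording for each pattern (i) which terminals on the $X$-side are joined to which on the $Y$-side by a single edge of $F$, and (ii) what the induced subpath structure and parity requirements are on each side. There are only constantly many such patterns, because $|F|\le 4$ bounds everything. For each pattern, checking whether it can be realised reduces to asking, on each side separately, for a small number of vertex-disjoint (or internally-disjoint) paths between prescribed terminal pairs with prescribed parities — and since the total number of such paths is at most, say, $4$ or $5$, and we already have Lemma~\ref{lem:oneedge} and Lemma~\ref{lem:2oddedges} (indeed the whole of Section~\ref{sec:nomorelinkages}) to handle bounded instances, this is tractable. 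In particular, each side $G_i$ together with the marked terminals, after adding suitable artificial edges to encode the parity constraints coming from the glue, becomes a subcubic graph with at most two $\mathcal P$-odd edges, to which the earlier lemmas of this section apply; alternatively one invokes \textsc{$2$-Disjoint Paths} a constant number of times as the paper does elsewhere.

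The remaining case is when the skewed theta does not cross $F$ at all, i.e.\ lives entirely in $G_1$ or entirely in $G_2$. But $G_1$ and $G_2$ each inherit at most the two $\mathcal P$-odd edges $o_1,o_2$ — in fact, since $o_1,o_2\in F$, at least one of $G_1,G_2$ contains \emph{no} $\mathcal P$-odd edge, so a skewed theta confined to that side is impossible, and on the other side one applies Lemma~\ref{lem:2oddedges} (or recursively the present lemma after re-cutting) to a properly smaller graph. To make the recursion terminate one argues, as in the proof of Lemma~\ref{lem:oneedge}, that the pieces one recurses on are strictly smaller and their sizes sum to roughly $|V(G)|$ plus a constant, giving a polynomial bound via the standard recurrence analysis.

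The main obstacle I expect is the bookkeeping in the case analysis of crossing patterns: one must be careful that the subpaths obtained on the two sides are genuinely edge-disjoint and combine into a skewed theta with the correct parity (two odd paths, one even path between the two branch vertices), and that the branch vertices are correctly placed — they may lie on either side or even be terminals of $F$. Handling the degenerate configurations (branch vertex on the cut, two terminals coinciding, a constituent path of length zero, $o_1$ and $o_2$ sharing an endvertex) without missing or double-counting a case is where the real work lies; the algorithmic primitives themselves are all already available from earlier in the section.
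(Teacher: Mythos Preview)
Your decomposition across the cut $F$ is the right starting point, and you are correct that any skewed theta crosses $F$ in a bounded number of edges. But the central reduction does not land. You describe the per-side subproblem as ``disjoint paths between prescribed terminal pairs with prescribed parities'' and claim this is handled by Lemma~\ref{lem:oneedge}, Lemma~\ref{lem:2oddedges}, or \textsc{$2$-Disjoint Paths}. Two things go wrong. First, on the side that contains the (unknown) branch vertices the subproblem is \emph{not} a linkage problem between fixed terminals: you must locate two internal vertices and three internally disjoint paths between them that also reach the terminals correctly --- that is a skewed-theta search, not a terminal-pairs search. Second, Lemmas~\ref{lem:oneedge} and~\ref{lem:2oddedges} detect skewed thetas, not parity-constrained linkages, and \textsc{$2$-Disjoint Paths} ignores parity. (In fact parity on each side is free, since $o_1,o_2\in F$ means neither side contains a $\mathcal P$-odd edge; you never exploit this, and it is precisely what makes the far side tractable.) Your alternative --- add artificial edges and re-invoke ``the earlier lemmas'' --- is closer, but on an augmented side with two $\mathcal P$-odd edges the lemma you would invoke is this very one, and you have not set up that recursion: you need a new cut for the augmented instance and, crucially, a proof that the augmented graph has a skewed theta iff $G$ has one realising the given pattern. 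Your last paragraph is also off: since $o_1,o_2\in F$, \emph{both} sides are free of $\mathcal P$-odd edges, so a skewed theta confined to one side is simply impossible and there is nothing to recurse on there.

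The paper sidesteps the full crossing-pattern enumeration by two moves you are missing. First, it deletes each edge of $F$ in turn and applies Lemma~\ref{lem:oneedge} (after a flip via Lemma~\ref{lem:flip} if the deleted edge is $\mathcal P$-even); afterwards any remaining skewed theta must use \emph{all four} edges of $F$, which forces both branch vertices to one side and collapses your pattern enumeration to a single case. Second, a single call to \textsc{$2$-Disjoint Paths} rules out two disjoint odd cycles; combined with the structural claims~\eqref{linkingpaths} and~\eqref{nosecondlinkage} this pins down the linkage on the far side uniquely. Only then does the paper augment --- replacing the far side by two short paths of the correct parity --- and recurse on the two resulting instances, with the equivalence proved as~\eqref{divide} and termination via~\eqref{stuffgetssmaller}.
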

\begin{proof}
We first reduce to the relevant blocks of the graph.
\astart
\item\label{twoblocks} If the $\mathcal P$-odd edges are in separate blocks, 
apply Lemma~\ref{lem:oneedge} to both blocks in order to decide whether $G$ contains a skewed theta.
\item\label{singleblock} If both edges are in a single block, say $B$, set $G:=B$ and continue.
\suspend{enumerate}
So we may assume that $G$ is 2-connected.
Next we try to find an even smaller cut containing $o_1,o_2$.
\aresume
\item\label{smallercut} Check whether there is an edge $e$, so that $\{o_1,o_2,e\}$ is 
a cut, and if yes, apply Lemma~\ref{lem:flip} to $F'=\{o_1,o_2,e\}$
and then Lemma~\ref{lem:oneedge} in order to decide whether $G$ contains a skewed theta.
\suspend{enumerate}
We allow here  that $e\in\{o_1,o_2\}$.
From line\aref{smallercut} follows, in particular, that $|F|=4$, say $F=\{o_1,o_2,e_1,e_2\}$.
\aresume
  \item\label{simpletheta1} Apply Lemma~\ref{lem:oneedge} to $G-o_1$ and to $G-o_2$.
  \item\label{simpletheta2} Apply Lemma~\ref{lem:flip} and then Lemma~\ref{lem:oneedge} to $G-e_1$ 
and to $G-e_2$.
\suspend{enumerate}
Since $F$ is minimal, there are two components $C_1$, $C_2$ of $G-F$.
After lines\aref{simpletheta1} and\aref{simpletheta2}, we are sure that
\begin{equation}\label{Fskewedthetas}
\begin{minipage}[c]{0.8\textwidth}\em
any skewed theta of $G$ contains every edge of $F$. In particular, 
both branch vertices either lie in $C_1$ or in $C_2$. 
\end{minipage}\ignorespacesafterend 
\end{equation} 

We know from Lemma~\ref{2oddcycs} that two disjoint odd cycles imply the presence 
of a skewed theta. As we have only two $\mathcal P$-odd
edges, the problem reduces here to the \textsc{$2$-Disjoint Paths} problem, which may
be handled, for example, with the algorithm of Tholey~\cite{Tho06}.
\aresume
\item\label{exclude2oddcycs} Check whether $G$ contains two disjoint odd cycles, and if 
yes return \returnvalue{skewed theta}.
\suspend{enumerate}
Next, we prove that
\begin{equation}\label{linkingpaths}
\begin{minipage}[c]{0.8\textwidth}\em
for $i=1,2$, in $C_i$ there are internally disjoint paths $P_i=x_i\ldots u_i$ and $Q_i=y_i\ldots v_i$,
where $x_i,y_i$ are distinct endvertices of $o_1,o_2$ and $u_i,v_i$ are distinct endvertices of $e_1,e_2$.
\end{minipage}\ignorespacesafterend 
\end{equation} 
Indeed, suppose that there are no such paths in $C_1$, say. As $G$ is subcubic,
there are then also no two such paths that are merely edge-disjoint rather than vertex-disjoint.
Moreover, because $C_1$ is connected and $G$ subcubic,  no three edges
of  $o_1,o_2,e_1,e_2$ can have the same endvertex.
Thus there is an edge $e$ that separates in $C_1$ the endvertices of $o_1,o_2$
from the endvertices of $e_1,e_2$. Consequently, $\{o_1,o_2,e\}$ is a cut of $G$,
which is a case we had already discarded in line\aref{smallercut}.

\aresume
\item Compute $P_i,Q_i$ as in~\eqref{linkingpaths}.
\suspend{enumerate}
As $G$ does not contain any two disjoint odd cycles, we may assume that
\[
o_1=x_1x_2,\, o_2=y_1y_2,\, e_1=v_1u_2\emtext{ and }e_2=u_1v_2.
\] 
See Figure~\ref{fig2oddedges} for these edges.
Using again the fact that $G$ does not possess any two disjoint odd cycles, 
we may deduce that
\begin{equation}\label{nosecondlinkage}
\begin{minipage}[c]{0.8\textwidth}\em
for $i=1,2$, there are no two disjoint paths in $C_i$ linking $y_i$ to $u_i$ and
$x_i$ to $v_i$.
\end{minipage}\ignorespacesafterend 
\end{equation}

In the remainder of the proof, we compute two subcubic graphs $G_1$ and $G_2$ 
such that
\begin{equation}\label{divide}\text{\em 
$G$ contains a skewed theta if and only if $G_1$ or $G_2$ does.
}\end{equation}
Moreover, the restriction of $\mathcal P$ to $V(G_i)$ 
gives a bipartition $\mathcal P_i$ of $G_i$ with two $\mathcal P_i$-odd edges.

We only describe the construction of $G_1$; $G_2$ is obtained by reversing 
the sides $C_1$ and $C_2$.
We define a path $P_2'$ that is used to replace the path $x_1x_2P_2u_2v_1$ in $G_1$. 
If $P_2$ has odd length, we set  $P'_2:=x_1v_1$.
By considering the bipartition classes of $\mathcal P$,
we may see that $x_1\neq v_1$ and that the resulting new edge $x_1v_1$ is a $\mathcal P_1$-odd edge.
On the other hand, if $P_2$ has even length we 
set $P'_2:=x_1x_2v_1$. 
Note that in both cases the path $P'_2$
has the same parity as the path $x_1x_2P_2u_2v_1$ in $G$.
We define $Q_2'$ analogously and set $G_1:=C_1\cup P'_2\cup Q'_2$.
\begin{figure}[ht]
\centering
\includegraphics[scale=0.7]{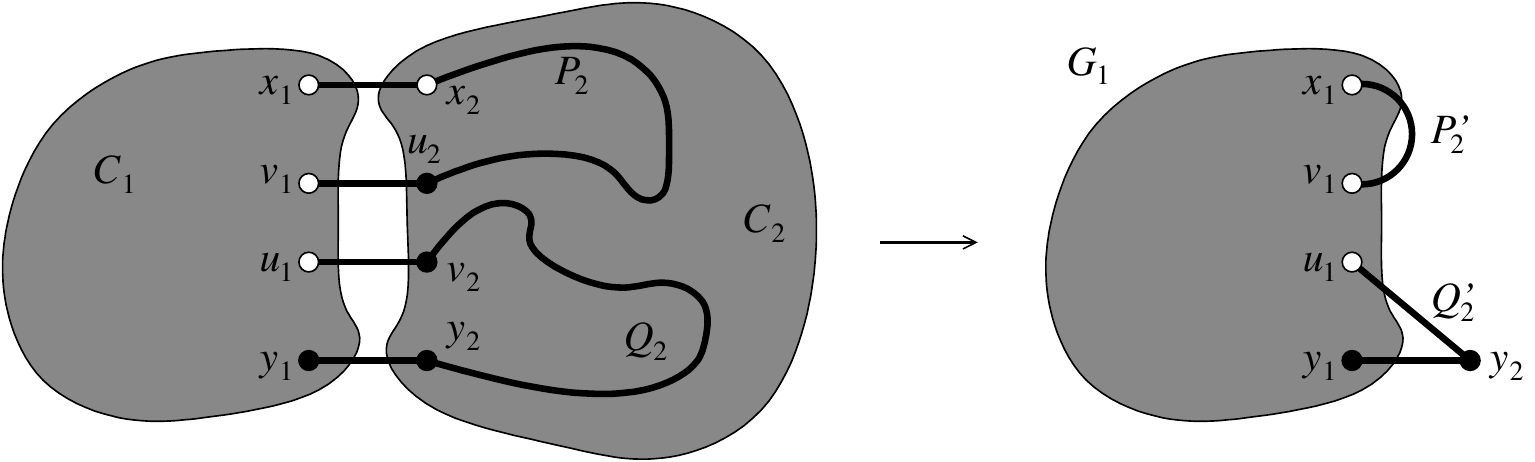}
\caption{Construction of $G_1$ if $P_2$ has odd length and $Q_2$ even length}\label{fig2oddedges}
\end{figure}

We note that for $i=1,2$
\begin{equation}\label{stuffgetssmaller}
|E(G_i)|<|E(G)|,\, \Delta(G_i)\leq 3\emtext{ and }
|E(G_1)|+|E(G_2)|\leq |E(G)|+4.
\end{equation}
While the last two inequalities should be clear, the first needs proof. As $G$ is subcubic
but $|F|=4$,
we deduce that $C_2$ has at least two vertices. As, on the other hand, $C_2$ is connected
we see that $C_2$ contains at least one edge. That edge, however, is missing in $G_1$,
which implies $|E(G_1)|<|E(G)|$. The proof for $G_2$ is the same.

To prove~\eqref{divide}, we first assume that $G$ contains a skewed theta $T$.
By~\eqref{Fskewedthetas}, $T$ has its two branch vertices $r,s$ either in $C_1$ or in $C_2$,
let us say that $r,s\in V(C_1)$. Moreover, each of the two odd paths
of $T$ between $r$ and $s$ passes through exactly one of $o_1,o_2$. 
Thus, the two odd paths contain subpaths $R,S\subseteq C_2$  
linking $\{x_2,y_2\}$ to $\{u_2,v_2\}$ in $C_2$.
From~\eqref{nosecondlinkage} it follows that one of $R$ and $S$, $R$ say, starts in $x_2$ and 
ends in $u_2$, while the other, $S$ in this case, connects $y_2$ to $v_2$.
Since the parity of the length of $R$ is determined by the classes of $\mathcal P$
that contain $x_2$ and $u_2$, it follows that the parity of the length of $R$
is the same as that of $P_2$, which is the same as that of $P'_2$.
Since the same reasoning holds for $S$ and $Q'_2$, we see that we obtain a skewed
theta of $G_1$ from $T$ by replacing $x_1x_2Ru_2v_1$ by $P'_2$ 
and $y_1y_2Sv_2u_1$ by $Q'_2$.

For the other direction, observe that any skewed theta of $G_1$ contains at least one
of $P'_2$ and $Q'_2$ (in fact both, but we do not need that observation). By replacing, if necessary,
$P'_2$ by $x_1x_2P_2u_2v_1$ and/or $Q'_2$ by $y_1y_2Q_2v_2u_1$,
we turn the skewed theta of $G_1$ into one of $G$.

\aresume
  \item Compute $G_1$ and $G_2$ and re-apply the algorithm to~$G_1$ and~$G_2$.
\end{enumerate}

Correctness of the algorithm follows from~\eqref{divide}.
It remains to analyse the running time of the algorithm.
Each line can be performed in polynomial time, 
so it suffices to bound the recursion.
Here,~\eqref{stuffgetssmaller} shows that the graph is split into two parts which are properly smaller and, essentially, disjoint.
A standard analysis of the recurrence relation shows that the total number of recursions called is $\mathcal O(|E(G)|^2)$.
\end{proof}

\begin{proof}[Proof of Lemma~\ref{lem:LinkagesAlgo}.]
The algorithm performs the following steps.
\astart
	\item If $G$ is not 2-connected, compute the blocks of $G$ and re-apply the algorithm to each block separately.
	\item If $G$ does not have any $\mathcal P$-odd edge, return \returnvalue{no skewed theta}.
	\item If $G$ has a single $\mathcal P$-odd edge, apply Lemma~\ref{lem:oneedge} to decide whether $G$ has a skewed theta.
	\item If $G$ has two $\mathcal P$-odd edges, apply Lemma~\ref{lem:2oddedges} to $G$, to compute the promised cut $F$. Then apply Lemma~\ref{lem:twoedges} to decide whether $G$ has a skewed theta. 
\end{enumerate}
Correctness and polynomial running time follow from the respective lemmas.
\end{proof}

\section{Claw-free graphs}\label{clawsec}

We now describe an algorithm that, given a claw-free graph $G$, decides 
in polynomial time whether $G$ is $t$-perfect or not. We present the algorithm 
in a number of steps over the course of this section. First, we use that 
we can already decide $t$-perfection for line graphs, and that we can 
detect whether a graph is a line graph efficiently:

\begin{theorem}[Roussopoulos~\cite{Rouss73}]
\label{sourcegraph}
It can be checked in linear time whether a given graph is a line graph.
Moreover, given a line graph $G$, a graph $H$ with $L(H)=G$ can be found in linear time.
\end{theorem}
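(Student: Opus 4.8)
The plan is to base the algorithm on Krausz's characterisation of line graphs: a graph $G$ is a line graph if and only if $E(G)$ admits a partition $\mathcal{K}$ into cliques such that no vertex of $G$ lies in more than two members of $\mathcal{K}$. Such a family is usually called a \emph{Krausz partition}, and from it one recovers a source graph in linear time. Take $V(H)$ to be $\mathcal{K}$, add one private pendant vertex $p_v$ for each $v\in V(G)$ that lies in exactly one clique of $\mathcal{K}$, and turn each $v\in V(G)$ into the edge of $H$ joining the two objects containing it (the two cliques through $v$, or the single clique through $v$ together with $p_v$). One checks directly that $L(H)=G$: two edges of $H$ corresponding to $v,w$ share a clique-endpoint $K$ iff $v,w\in K$ iff $vw\in E(G)$, since the cliques induce complete subgraphs and partition the edge set, and the pendant endpoints are private. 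Conversely, for a line graph $L(H)$ the families $\{\,e\in E(L(H)) : v\in e\,\}$, over $v\in V(H)$, form a Krausz partition. So the whole task reduces to producing, in linear time, a Krausz partition of $G$ or a certificate that none exists; the recovery of $H$ is then an easy linear-time postprocessing step, with the output essentially unique by Whitney's isomorphism theorem (the only genuine ambiguity, for the triangle, is handled by hand).

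For the construction of the Krausz partition I would proceed as follows. Dispose of disconnected graphs (handle each component) and of graphs with at most four vertices by brute force. In the remaining case, pick a vertex $v$ of minimum degree. In any Krausz partition the closed neighbourhood $N[v]$ is covered by at most two cliques through $v$, which therefore partition $N(v)$; the crucial local step is to compute this partition of $N(v)$, or detect its impossibility, from the adjacencies inside $N[v]$ alone. Two neighbours $a,b$ of $v$ can be in the same clique through $v$ only if $ab\in E(G)$, and the residual freedom is resolved by a bounded number of tests that distinguish the ``two sides'' at $v$, with degenerate subcases (for instance when $N(v)$ is itself a clique, or when $v$ has twin neighbours) singled out explicitly. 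Once the cliques through $v$ are fixed, grow the partition greedily: any edge that already meets a determined clique forces the clique on its other endpoint, because a vertex lies in at most two cliques. Propagating this rule visits every edge a bounded number of times. Finally verify that the constructed families are cliques, partition $E(G)$, and leave every vertex in at most two of them; if some check fails then — because every choice made was forced — $G$ is not a line graph, and otherwise output the recovered $H$.

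The main obstacle is precisely the local step at the minimum-degree vertex together with the bookkeeping that keeps the propagation linear. One must argue that, modulo the handful of degenerate configurations, the Krausz partition is forced (this is where a Whitney-type uniqueness argument enters, justifying that ``the greedy construction got stuck'' certifies non-existence), and one must organise the data structures — adjacency lists, a per-vertex counter of the cliques already assigned to it, and a queue of edges whose clique is known — so that the total work is $O(|V(G)|+|E(G)|)$ rather than, say, $O(\sum_v\deg(v)^2)$. As a conceptual backstop for the ``not a line graph'' verdicts one may invoke Beineke's theorem that line graphs are exactly the graphs avoiding nine fixed forbidden induced subgraphs, but that route yields only a crude polynomial bound; the linear running time asserted in the statement comes from the forced-propagation analysis above.
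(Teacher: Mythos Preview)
The paper does not give its own proof of this statement: it is quoted from the literature (Roussopoulos~\cite{Rouss73}) and used as a black box in line~\aref{compsource} of the main algorithm. So there is no ``paper's proof'' to compare your proposal against.

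Your sketch is, in broad strokes, along the lines of the classical constructions of Roussopoulos and of Lehot: characterise line graphs via Krausz partitions, determine the (essentially unique) partition by a local analysis followed by forced propagation, and read off a source graph from the partition. That is the right plan. What you have written, however, is an outline rather than a proof: the two places you yourself flag --- the local step that pins down the two cliques through the start vertex, and the bookkeeping that keeps the total work linear rather than $\sum_v \deg(v)^2$ --- are exactly where all the content lies, and you have not actually carried them out. In particular, the claim that ``a bounded number of tests'' resolves the partition of $N(v)$ needs a careful case analysis (the degenerate cases where $N(v)$ is a clique, or where $v$ has true twins, are genuinely delicate), and the propagation argument must explain why each edge is touched $O(1)$ times. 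For the purposes of this paper none of this is needed, since the result is simply cited; if you want to fill in the details, the cleanest route is to follow Roussopoulos's original argument or the later linear-time algorithm of Degiorgi and Simon.
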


Thus, the first step in the algorithm becomes:
\astart
	\item\label{compsource} Use Theorem~\ref{sourcegraph} to check whether $G$ is a line graph. If yes, 
compute $H$ with $L(H)=G$ and apply the algorithm 
of Lemma~\ref{detectline} to $H$. If no, proceed to the next line below.
\suspend{enumerate}
Next, we observe that we can assume the input graph to be $2$-connected. For this,
we say that 
a pair $(G_1,G_2)$ of proper induced subgraphs of a graph $G$ is 
a \emph{separation of $G$}, if $G=G_1\cup G_2$. The \emph{order}
of the separation is equal to $|V(G_1\cap G_2)|$.  

The following lemma may be deduced directly from the definition of $t$-perfection.
We only apply it to claw-free graphs, where it becomes a simple consequence of 
Theorem~\ref{thm:Char}.  
\begin{lemma}\label{completesep}
Let $(G_1,G_2)$ be a separation of a graph $G$ so that $G_1\cap G_2$ is complete.
Then $G$ is $t$-perfect if and only if $G_1$ and $G_2$ are $t$-perfect.  
\end{lemma}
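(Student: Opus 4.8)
The plan is to argue directly from the polytope definition, proving both inclusions between $\mathrm{TSTAB}(G)$ and the stable set polytope, which I denote $\mathrm{STAB}(G)$. Throughout I use that $\mathrm{STAB}(G)\subseteq\mathrm{TSTAB}(G)$ holds for every graph, since the characteristic vector $\chi^S$ of a stable set $S$ satisfies every non-negativity, edge and odd-cycle inequality. Write $K:=V(G_1\cap G_2)$; by hypothesis $K$ induces a complete graph, every edge of $G$ lies in $G_1$ or in $G_2$, and there is no edge of $G$ between $V(G_1)\setminus K$ and $V(G_2)\setminus K$. For the forward direction I would simply observe that $G_1$ and $G_2$ are induced subgraphs of $G$, hence $t$-minors obtained by deleting the vertices of the opposite side; as the class of $t$-perfect graphs is closed under $t$-minors, $t$-perfection of $G$ passes to $G_1$ and $G_2$.

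For the converse, suppose $G_1$ and $G_2$ are $t$-perfect; I must show $\mathrm{TSTAB}(G)\subseteq\mathrm{STAB}(G)$. One cannot expect $\mathrm{TSTAB}(G)$ to split as a product over the two sides, since an odd cycle of $G$ may cross the separator $K$; I therefore argue through convex combinations of stable sets rather than through the inequality description. Fix $x\in\mathrm{TSTAB}(G)$ with restrictions $x^1,x^2$ to $V(G_1),V(G_2)$. Since $G_i$ is an induced subgraph, each vertex, edge and odd cycle of $G_i$ is also one of $G$, so every defining inequality of $\mathrm{TSTAB}(G_i)$ occurs among those of $\mathrm{TSTAB}(G)$; hence $x^i\in\mathrm{TSTAB}(G_i)$. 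As $G_i$ is $t$-perfect, this gives convex representations $x^1=\sum_i\lambda_i\chi^{S_i}$ and $x^2=\sum_j\mu_j\chi^{T_j}$ with the $S_i$ stable in $G_1$ and the $T_j$ stable in $G_2$. Because the only edges joining the two sides lie inside the complete graph $K$, the union $S_i\cup T_j$ is a stable set of $G$ exactly when $S_i$ and $T_j$ have the same \emph{trace} on $K$, i.e.\ the same intersection with $K$, which is empty or a single vertex as $K$ is complete.

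The main difficulty is to glue the two decompositions across $K$, as a priori they need not be compatible there. The resolving observation is that the weight placed on each trace is already forced by $x$: for $v\in K$, at most one vertex of $K$ lies in any stable set, so the total $\lambda$-weight of the $S_i$ containing $v$ equals the coordinate $x_v$, and the same holds on the $G_2$-side; thus both decompositions assign weight $x_v$ to the trace $\{v\}$ and, by complementation, the common weight $1-\sum_{v\in K}x_v$ to the empty trace. Denoting by $\gamma_\kappa$ this shared weight of a trace $\kappa$, within each $\kappa$ with $\gamma_\kappa>0$ I would use the product coupling $\nu_{ij}:=\lambda_i\mu_j/\gamma_\kappa$ over the pairs $S_i,T_j$ of trace $\kappa$; its marginals are exactly $\lambda_i$ and $\mu_j$. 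Each such $S_i\cup T_j$ is stable in $G$, the weights $\nu_{ij}$ sum to $\sum_\kappa\gamma_\kappa=1$, and a routine coordinate check, recovering $x^1$ on $V(G_1)\setminus K$, $x^2$ on $V(G_2)\setminus K$, and $x_v$ on each $v\in K$, yields $x=\sum\nu_{ij}\chi^{S_i\cup T_j}\in\mathrm{STAB}(G)$. This establishes the converse. I would emphasise that the argument uses neither Theorem~\ref{thm:Char} nor claw-freeness, and that the size of $K$ plays no role: the matching of the trace weights follows automatically from $x^i\in\mathrm{STAB}(G_i)$.
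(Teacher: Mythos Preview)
Your proof is correct. The paper does not actually supply a proof of this lemma; it merely remarks that the statement ``may be deduced directly from the definition of $t$-perfection'' and that, restricted to claw-free graphs, it is ``a simple consequence of Theorem~\ref{thm:Char}''. Your argument is precisely the direct polytope derivation the paper alludes to, carried out in full: restrict $x\in\mathrm{TSTAB}(G)$ to each side, use $t$-perfection of $G_i$ to obtain stable-set decompositions, observe that completeness of $K$ forces both decompositions to place the same total weight on each trace (empty or a single vertex of $K$), and couple the two decompositions trace by trace. The check that $S_i\cup T_j$ is stable in $G$ uses exactly that there are no edges between $V(G_1)\setminus K$ and $V(G_2)\setminus K$, and that matching traces guarantee $(S_i\cup T_j)\cap V(G_1)=S_i$ and $(S_i\cup T_j)\cap V(G_2)=T_j$.

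Compared to the paper's alternative route for claw-free graphs via the forbidden-$t$-minor characterisation, your argument is both more elementary and more general: it makes no appeal to Theorem~\ref{thm:Char}, no assumption of claw-freeness, and no restriction on $|K|$. The only point worth flagging explicitly in a write-up is that $\sum_{v\in K}x_v\le 1$ (needed so that the empty-trace weight is nonnegative) is not a TSTAB inequality when $|K|\ge 4$, but follows a posteriori from $x^i\in\mathrm{STAB}(G_i)$; you do implicitly use this, and it is correct.
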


\aresume
	\item Determine the blocks of $G$, and apply the rest of the algorithm 
to each block independently. Return \returnvalue{not $t$-perfect} if one of the blocks  
is not $t$-perfect; otherwise return \returnvalue{$t$-perfect}.
\suspend{enumerate}
Clearly, this step can be performed efficiently, and is, by Lemma~\ref{completesep},
correct. Thus, we may from now on assume $G$ to be $2$-connected.
Moreover, it is easy to see that $G$ is not $t$-perfect, if it contains a vertex of degree at least~$5$.
Indeed,
as $G$ is claw-free, the neighbourhood of any vertex $v$ of degree at least~$5$
always contains either a triangle or an induced $5$-cycle.
In the former case, the graph contains a $K_4$ and in the latter case a $5$-wheel
as induced subgraph. 

\aresume
	\item If $\Delta(G)\geq 5$ or if $G\in\{C^2_7,C^2_{10}\}$ return \returnvalue{not $t$-perfect}.
	\item\label{yesoutcome} If $G\in\{C^2_6-v_1v_6,C^2_7-v_7,C^2_{10}-v_{10}\}$ return \returnvalue{$t$-perfect}.
\suspend{enumerate}
That the three graphs in line\aref{yesoutcome} are $t$-perfect is proved in~\cite{BS12}. (In fact, $C_7^2$ and $C_{10}^2$ are {\em minimally $t$-imperfect},
that is, they are $t$-imperfect but every proper $t$-minor is $t$-perfect. The
graph $C^2_6-v_1v_6$ can be seen to be a $t$-minor of $C_{10}^2$.)

The remainder of the algorithm is based on the following lemma.
\begin{lemma}[Bruhn and Stein~\cite{BS12}]
\label{lem:outcomes}
Let $G$ be a $3$-connected claw-free graph of maximum degree at most~$4$.
If $G$ does not contain $K_4$ as $t$-minor 
then one of the following statements holds true:
\begin{enumerate}[\rm (a)]
\item $G$ is a line graph; or
\item $G\in\{C^2_6-v_1v_6,C^2_7-v_7,C^2_{10}-v_{10},C^2_7,C^2_{10}\}$.
\end{enumerate}
\end{lemma}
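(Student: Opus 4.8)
The plan is to prove the structural dichotomy (Lemma~\ref{lem:outcomes}) by invoking the deep structure theory for claw-free graphs, as developed by Chudnovsky and Seymour, specialised to the very restrictive setting at hand: $3$-connected, $\Delta(G)\le 4$, and $K_4$-free as a $t$-minor. The first step is to extract what these hypotheses buy us locally. Since $G$ is claw-free, the neighbourhood of every vertex is the complement of a bipartite graph; with $\deg(v)\le 4$ this means $N(v)$ induces one of a tiny list of graphs on $\le 4$ vertices. Moreover, having no $K_4$ $t$-minor is a strong constraint: a $K_4$ subgraph is already forbidden, and so is any configuration that $t$-contracts to $K_4$. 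In particular no vertex of degree~$\le 2$ may have a stable neighbourhood sitting on a short cycle that would contract badly, and no triangle may have a ``private'' degree-$2$ neighbour at each corner. I would first catalogue, for each possible value of $\deg(v)$, exactly which local neighbourhood structures survive the $K_4$-$t$-minor ban; this already forces $N(v)$ to be, for degree~$4$, either a perfect matching ($2K_2$), a path $P_3$ plus an isolated vertex, or a $C_4$ — the cases $K_3+K_1$, $P_4$, $K_1\cup K_3$ etc. all produce a $K_4$ or a $W_5$-like obstruction, or a $K_4$ $t$-minor after a contraction.

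The second and central step is to apply the Chudnovsky–Seymour decomposition. Every claw-free graph is built from ``basic'' claw-free graphs (line graphs, graphs that are subsets of the icosahedron or of certain circular-interval/long-circular-interval graphs, and a few sporadic pieces) by gluing along small cutsets and by admissible expansions. Under $3$-connectedness the gluing operations collapse dramatically — there are no clique cutsets of size~$\le 2$ to exploit and, more importantly, the maximum-degree bound $\Delta\le 4$ rules out almost every non-trivial expansion, since a generic expansion of a vertex into a larger gadget blows up degrees well past~$4$. So the task reduces to: classify the \emph{basic} $3$-connected claw-free graphs of maximum degree $\le 4$ with no $K_4$ $t$-minor. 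Line graphs of subcubic graphs (possibly with a few extra adjacencies) give outcome~(a). The circular-interval and long-circular-interval graphs of bounded degree are precisely squares (or near-squares) of cycles, and among these the $K_4$-$t$-minor-free ones are exactly the graphs $C^2_n$ for small $n$ together with the edge-deleted variants $C^2_6-v_1v_6$, $C^2_7-v_7$, $C^2_{10}-v_{10}$; for $n\notin\{5,6,7,10\}$ (and for the full squares $C^2_8$, $C^2_9$, etc.) one exhibits an explicit $K_4$ $t$-minor by a short sequence of $t$-contractions, and $C^2_5=W_5$ contains $K_4$ outright, while $C^2_6=W_5$-free but still has a $K_4$ subgraph unless one edge is removed. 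The icosahedron and its subgraphs have vertices of degree~$5$ and hence are excluded by $\Delta\le 4$ (or they $t$-contract to $K_4$), and the handful of sporadic basic graphs are checked by hand to contain $K_4$ or a $K_4$ $t$-minor. Collecting the survivors yields exactly list~(b) plus the line-graph case.

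The hard part — and the reason this is stated as a lemma imported from~\cite{BS12} rather than reproved here — is marshalling the full claw-free structure theorem and doing the bookkeeping on the expansion/gluing operations to confirm that \emph{nothing} outside line graphs and the short sporadic list can be $3$-connected, have $\Delta\le 4$, and avoid a $K_4$ $t$-minor. In practice one wants to avoid invoking the entire structure theorem and instead argue directly: take a vertex $v$ of maximum degree, use the local analysis from step one to understand $N(v)$ and $N(N(v))$, and ``grow'' the graph outward, repeatedly using that every newly encountered vertex has a tightly constrained neighbourhood and that $3$-connectedness forbids premature separations. This is essentially a finite but delicate case analysis: each branch either closes off (yielding one of the named graphs) or produces a forbidden $K_4$ $t$-minor, often only after performing one or two $t$-contractions, so one must keep careful track of which contractions are legal (the contracted vertex must have a \emph{stable} neighbourhood, which in the claw-free world means degree~$\le 2$). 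I would organise the write-up around the value of $\Delta(G)$ and, within that, around the isomorphism type of a maximum-degree neighbourhood, pushing each configuration until it resolves; the genuinely new content over the line-graph results of Sections~\ref{sec:linegraphs}–\ref{sec:nomorelinkages} is precisely this local-to-global growth argument.
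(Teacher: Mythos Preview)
This lemma is not proved in the present paper: it is quoted from Bruhn and Stein~\cite{BS12} and used as a black box. Consequently there is no proof here to compare your proposal against, a point you yourself acknowledge when you write that ``this is stated as a lemma imported from~\cite{BS12} rather than reproved here''.

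Your sketch is a reasonable high-level plan for how such a structure result might be approached, but several of the concrete assertions would need repair in any actual write-up. For instance, $C_5^2$ is $K_5$, not $W_5$ (the conclusion that it contains $K_4$ is of course still correct); and $C_6^2$ is the octahedron $K_{2,2,2}=L(K_4)$, which is $K_4$-free and already a line graph, so it falls under outcome~(a) rather than requiring an edge deletion to kill a $K_4$ subgraph. These slips indicate that the circular-interval/long-circular-interval case analysis would need considerably more care than the outline suggests. For the actual argument you should consult~\cite{BS12}.
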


Thus,  we may assume that the input graph $G$ is $2$-connected
but not $3$-connected. That is, $G$ has a separation of order~$2$. 
\aresume
	\item \label{not3connline}If $G$ is $3$-connected, return \returnvalue{not $t$-perfect}. 
	\item Otherwise, find a separation $(G_1,G_2)$ of $G$ of order~$2$. 
Let $u,v$ be the two vertices in $G_1\cap G_2$.
\suspend{enumerate}
Line\aref{not3connline} is correct, as we had already excluded
that $G$ is a line graph, nor one of the exceptional graphs in~(b) of Lemma~\ref{lem:outcomes}.

To continue, we use a result that allows us to reduce the $t$-perfection of $G$
to the $t$-perfection of the two sides of the separation. 
For this, we write  $G_i\ide{u}{v}$ for 
the graph obtained from $G_i$ by identifying $u$ and $v$.
\begin{lemma}\label{lem:ClawFree2Connected}
Let $G$ be a $2$-connected claw-free graph of maximum degree at most~$4$. 
Assume $(G_1,G_2)$ to be a separation of $G$ with $V(G_1\cap G_2)=\{u,v\}$. Then:
\begin{enumerate}[\rm (i)]
\item\label{cond:BothParitiesNotPerfect}
If $G_1$ and $G_2$ each contain induced $u$--$v$-paths of both even and odd length, 
then $G$ is not $t$-perfect.
\end{enumerate}
Otherwise  $G$ is 
$t$-perfect if and only if $\tilde G_1$ and $\tilde G_2$ are $t$-perfect,
where 
\begin{enumerate}[\rm (i)]\setcounter{enumi}{1}
\item\label{cond:ono}
$\tilde G_1=G_1\ide{u}{v}$ and $\tilde G_2=G_2+uv$, 
if $G_1$ contains an odd induced $u$--$v$-path but $G_2$ does not;

\item\label{cond:nono}
$\tilde G_1=G_1$ and $\tilde G_2=G_2$, 
if neither of $G_1$ and $G_2$ contains an odd induced $u$--$v$-path;
 
\item\label{cond:ene}
$\tilde G_1=G_1+uv$ and $\tilde G_2=G_2\ide{u}{v}$,
if $G_1$ contains an even induced $u$--$v$-path but $G_2$ does not; and

\item\label{cond:nene}
$\tilde G_1=G_1$ and $\tilde G_2=G_2$,
if neither of $G_1$ and $G_2$ contains an even induced $u$--$v$-path.
\end{enumerate}
\end{lemma}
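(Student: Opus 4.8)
The proof will rest on the forbidden-$t$-minor characterisation, Theorem~\ref{thm:Char}: recall that $K_4$, $W_5$, $C^2_7$ and $C^2_{10}$ are each $3$-connected, and that by~\eqref{keepscf} every $t$-minor of a claw-free graph is again claw-free. First I would dispose of the case $uv\in E(G)$: there $G_1\cap G_2$ is a complete graph and Lemma~\ref{completesep} applies directly, so assume from now on $uv\notin E(G)$, whence every induced $u$--$v$-path in a $G_i$ has length at least $2$. I would also record the elementary fact that a shortest $u$--$v$-path in any graph is induced; in particular, if $G_i$ has no odd induced $u$--$v$-path, then its shortest $u$--$v$-path is even, so $G_i$ does have an even induced $u$--$v$-path (and symmetrically with the parities exchanged). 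This is what makes the case distinction in (ii)--(v) exhaustive.

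\emph{The reduction, easy direction.} To see that a forbidden $t$-minor of $\tilde G_1$ or $\tilde G_2$ produces one in $G$, it suffices to check that each $\tilde G_i$ is itself a $t$-minor of $G$. This is clear when $\tilde G_i=G_i$, since $G_i$ is an induced subgraph of $G$. When $\tilde G_i=G_i+uv$, the case hypothesis (using the observation above where needed) supplies an odd induced $u$--$v$-path $P$ in the other side $G_{3-i}$; deleting $V(G_{3-i})\setminus V(P)$ and then $t$-contracting the (degree-$2$) interior of $P$ down to the single edge $uv$ realises $G_i+uv$. When $\tilde G_i=G_i\ide{u}{v}$, the hypothesis instead guarantees an even induced $u$--$v$-path $Q$ in $G_{3-i}$; contracting the interior of $Q$ down to a length-$2$ path $u$--$w$--$v$ and then performing the $t$-contraction at $w$ (legitimate since $uv\notin E$) identifies $u$ and $v$ and realises $G_i\ide{u}{v}$.

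\emph{The reduction, hard direction.} Suppose $G$ has a forbidden $t$-minor $H$, realised via branch sets $(B_w)_{w\in V(H)}$ (the sets of vertices of $G$ identified to the vertices of $H$ along the $t$-contractions); let $B_u$, $B_v$ be those containing $u$, $v$. Every branch set disjoint from $\{u,v\}$ lies entirely in the interior of $G_1$ or in that of $G_2$, and $G$ has no edge between these two interiors, so deleting from $H$ the at most two vertices corresponding to $B_u$ and $B_v$ separates the branch sets inside $G_1$'s interior from those inside $G_2$'s. Since $H$ remains connected after deleting two vertices, one of these families is empty — say no branch set lies in the interior of $G_2$ — so the part of $B_u\cup B_v$ inside $G_2$'s interior is the only trace of $G_2$ in the realisation. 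Replacing that part by a shortest $u$--$v$-path $R$ of $G_2$ contained in it (hence induced in $G_2$), split between $B_u$ and $B_v$, exhibits $H$ as a $t$-minor of $G_1\cup R$; contracting $R$ to the edge $uv$, or — when $R$ has even length — to a length-$2$ path followed by identifying $u$ and $v$, shows $H$ to be a $t$-minor of $\tilde G_1$ (and symmetrically $\tilde G_2$). The points needing care are that this surgery keeps each branch set connected and all $H$-adjacencies intact, and that the case hypotheses force the parity of $R$ so that the correct one of $G_1+uv$, $G_1\ide{u}{v}$ comes out.

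\emph{Part (i).} I expect this to be the main obstacle. Here both $G_1$ and $G_2$ contain induced $u$--$v$-paths of each parity, and the aim is to exhibit a forbidden $t$-minor of $G$ directly. Combining an even induced $u$--$v$-path of $G_1$ with an odd induced $u$--$v$-path of $G_2$ (which meet only in $u,v$ and have no chords across the separation) already gives an induced odd cycle $C$ of length at least $5$ in $G$, and exchanging the parities gives a second induced odd cycle $C'$ through $u$ and $v$. The delicate point is how the (up to four) relevant paths leave $u$ and $v$: since $\deg_G(u)\le 4$ and $G$ is claw-free, the first vertices of these paths at $u$ cannot form a stable triple, which forces small triangles at $u$, and symmetrically at $v$; one then $t$-contracts the interiors of the paths and uses these extra triangles to collapse the configuration onto one of $K_4$, $W_5$, $C^2_7$, $C^2_{10}$. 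This requires a case analysis according to which of the path-ends at $u$ (and at $v$) coincide and which triangles claw-freeness supplies, keeping in mind via~\eqref{keepscf} that every intermediate graph stays claw-free; the substantive difficulty is ensuring the collapse lands on one of the four forbidden graphs rather than on a $t$-perfect graph such as $K_4-e$.
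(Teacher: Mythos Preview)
Your easy direction (showing each $\tilde G_i$ is a $t$-minor of $G$) is correct and matches the paper's argument.

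The other two parts have genuine gaps.

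For the hard direction, even granting that your branch-set surgery exhibits $H$ as a $t$-minor of $G_1\cup R$ for some induced $u$--$v$-path $R\subseteq G_2$, the final step does not land where you need it. $t$-Contracting the interior of $R$ yields $G_1+uv$ when $R$ is odd and $G_1\ide{u}{v}$ when $R$ is even; but in case~(iii) every induced $u$--$v$-path of $G_2$ is even, so you obtain $G_1\ide{u}{v}$, whereas $\tilde G_1=G_1$, and there is no implication from $t$-imperfection of $G_1\ide{u}{v}$ to that of $G_1$. (There is also the prior issue that branch-set surgery is an argument about ordinary minors; you still owe a sequence of deletions and $t$-contractions, and it is not clear the original one survives.) The paper circumvents all of this by first invoking Lemmas~\ref{onlyK4counts} and~\ref{K4sface}: for a $2$-connected claw-free graph other than $C^2_7,C^2_{10}$, $t$-imperfection is equivalent to containing a \emph{skewed prism} as an induced subgraph. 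One then analyses how this concrete induced subgraph meets the separation, and the parity pattern of its three linking paths (two even, one odd) is exactly what rules out a crossing configuration in cases~(iii) and~(v).

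For part~(i), your sketch is not a workable plan. Gluing one path from each side produces two induced odd cycles through $u$ and $v$, but these may overlap arbitrarily within each $G_i$, and claw-freeness at $u,v$ alone does not force a forbidden $t$-minor out of that. The paper's argument is structurally different and does nearly all the work inside a single side: Lemma~\ref{lem:MixedParitiesLink} shows, via a minimality argument on a pair of mixed-parity induced $u$--$v$-paths in $G_1$, that $G_1$ already contains a specific induced configuration (a ``$u$--$v$-linked obstruction'', essentially a prism with pendant paths out to $u$ and~$v$). Lemma~\ref{lem:Obstructions} then uses the parity freedom in $G_2$ to choose one path completing this obstruction to a skewed prism, hence to a $K_4$ $t$-minor. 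Your case analysis at the cut vertices has no analogue of the linked-obstruction lemma, and that lemma is where the substance lies.
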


We defer the proof of Lemma~\ref{lem:ClawFree2Connected} to the next section.
We combine the lemma with the following algorithm: 
\begin{theorem}[van 't Hof, Kami\'nski and Paulusma~\cite{HKP12}]\label{thm:IndPathClawFree}
Given a claw-free graph $G$ and $u,v \in V(G)$, it can be decided in polynomial time whether there is an induced $u$--$v$-path of even (or of odd) length.
\end{theorem}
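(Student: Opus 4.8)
The plan is to turn the parity-constrained induced-path problem into a reachability question that can be decided by exploiting the fact that, in a claw-free graph, an induced path interacts only \emph{locally} with the rest of the graph. The tempting shortcut of tracking parity by working in a product $G\times K_2$ fails at once: such a product is neither claw-free nor does it respect ``inducedness'', so the real difficulty is to control parity and chordlessness at the same time. I would therefore build the algorithm around a structural observation that is the heart of the matter, and then assemble it into a search.

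The structural ingredient is this. Since $G$ is claw-free, every neighbourhood $G[N(x)]$ has independence number at most~$2$. Hence, if $P=p_0p_1\ldots p_k$ is any induced $u$--$v$-path and $x\notin V(P)$, then $N(x)\cap V(P)$ induces a subgraph of the path $P$ whose independence number is at most~$2$. A subgraph of a path is a disjoint union of segments, and such a union has independence number at most~$2$ only if it is a single segment of at most four consecutive vertices of $P$, or two segments of at most two consecutive vertices each. In particular, every vertex outside $P$ is adjacent to at most four vertices of $P$, confined to at most two short intervals. This locality bounds the complexity of the ``interface'' between an induced path and any vertex that might create a chord.

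It is worth treating the line-graph case first, as there the reduction is transparent and ties back to the earlier sections. If $G=L(H)$, then induced $u$--$v$-paths of $G$ correspond exactly to simple paths in $H$ whose first and last edges are the edges represented by $u$ and $v$, and the length of the induced path is determined by the number of edges of the corresponding path in $H$ (the two parities simply differ by one). Thus for line graphs the statement reduces to finding a path of prescribed parity in the source graph $H$, a problem of precisely the flavour handled by Lemmas~\ref{tpline} and~\ref{2oddcycs} and the skewed-theta machinery behind Lemma~\ref{detectline}. For a general claw-free $G$ I would use the locality observation to reduce to polynomially many instances of a bounded disjoint-paths problem: one guesses, for the few vertices that could chord a candidate path, the at-most-two short intervals through which they touch it, and then asks for vertex-disjoint linkages realising the remaining path while a single parity bit records its length modulo~$2$.

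The main obstacle I anticipate lies exactly in combining the two constraints. Any search that extends an induced path must certify that appending a vertex never creates a \emph{long-range} chord back to an early part of the path, while simultaneously preserving the correct parity globally; a naive bounded-window dynamic program is \emph{not} valid, since an appended vertex must be checked for non-adjacency against \emph{all} earlier path vertices, not merely recent ones. The locality lemma is the tool that forecloses such long-range chords in a completed induced path, but converting it into a correct and genuinely polynomial procedure requires a careful case analysis — separating the single-interval from the two-interval interactions and tracking how each contributes to the parity — so that the ``state'' one carries is at once bounded and rich enough to detect every forbidden chord. That case analysis is where the substantive work sits.
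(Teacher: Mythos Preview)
This theorem is not proved in the paper at all: it is quoted from van~'t~Hof, Kami\'nski and Paulusma~\cite{HKP12} and used as a black box. There is therefore no ``paper's proof'' to compare your attempt against.

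That said, what you have written is not a proof but an outline with an explicitly acknowledged gap, and the gap is genuine. Your locality observation is correct: for any induced path $P$ in a claw-free graph and any $x\notin V(P)$, the set $N(x)\cap V(P)$ is one interval of length at most four or two intervals of length at most two. But this does not translate into the algorithm you sketch. You propose to ``guess, for the few vertices that could chord a candidate path, the at-most-two short intervals through which they touch it''. The difficulty is that \emph{every} vertex of $G-P$ is such a potential chord-creator, not just a bounded number of them, so the number of guesses is exponential rather than polynomial. Likewise, the reduction to ``polynomially many instances of a bounded disjoint-paths problem'' is asserted but not established: you never specify what the terminals are, why the number of instances is polynomial, or why solving a non-induced linkage problem certifies the existence of an \emph{induced} path of the right parity. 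Your own final paragraph concedes that the substantive work is still to be done; as it stands the proposal is a plausible heuristic discussion, not an argument.

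The actual proof in~\cite{HKP12} proceeds quite differently and does not rely on the locality bound you isolate; if you want to reconstruct it, that is the reference to consult.
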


With this, our algorithm continues as follows:
\aresume
	\item Use Theorem~\ref{thm:IndPathClawFree} to 
determine the  parities
of induced $u$--$v$-paths in $G_1$ and in $G_2$. 
	\item If $G_1$ and $G_2$ each contain induced $u$--$v$-paths of both even and odd length,
return \returnvalue{not $t$-perfect}. 
	\item \label{lastline}Otherwise, choose $\tilde G_1$ and $\tilde G_2$ as in Lemma~\ref{lem:ClawFree2Connected},
and apply line\aref{compsource} to $\tilde G_1$ and to $\tilde G_2$ independently. 
Return \returnvalue{$t$-perfect} if both are $t$-perfect, and \returnvalue{not $t$-perfect} otherwise.
\end{enumerate}

We can finally complete the proof of our main result, that $t$-perfection can be checked 
for in polynomial time if the input is restricted to claw-free graphs.
\begin{proof}[Proof of Theorem~\ref{thm:detect}]
We have already seen that the algorithm described in the course of this section
is correct. Moreover, as each single line is executed in polynomial time, 
we only need to bound the number of times each line is executed.
For this, observe that every time there is a branching in line\aref{lastline},
the graph $\tilde G_1$ contains a vertex 
of $G$ that does not lie in $\tilde G_2$ and vice versa. Again, standard analysis 
of the recurrence yields that the number of iterations is bounded by $\mathcal O(|V(G)|^2)$.
\end{proof}

\section{Proof of Lemma~\ref{lem:ClawFree2Connected}}\label{sec:ClawLemma}

All that remains is Lemma~\ref{lem:ClawFree2Connected}. 
The first step in its proof consists of 
the observation that $t$-perfection in a claw-free graph depends
essentially only on the existence of $K_4$ as a $t$-minor.

\begin{lemma}\label{onlyK4counts}
A connected claw-free graph $G$ is $t$-perfect if and only if
\begin{enumerate}[\rm (i)]
\item $\Delta(G)\leq 4$; 
\item $G\neq C_7^2$ and $G\neq C^2_{10}$; and
\item $G$ does not contain $K_4$ as a $t$-minor.
\end{enumerate}
\end{lemma}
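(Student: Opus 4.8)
\textbf{Proof plan for Lemma~\ref{onlyK4counts}.}

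The plan is to deduce this from the Bruhn--Stein characterisation (Theorem~\ref{thm:Char}): a claw-free graph is $t$-perfect if and only if it has none of $K_4$, $W_5$, $C^2_7$, $C^2_{10}$ as a $t$-minor. The ``only if'' direction is immediate: if $G$ is $t$-perfect then it contains no $K_4$ as $t$-minor (so (iii) holds), it is not equal to $C^2_7$ or $C^2_{10}$ since those are not $t$-perfect (so (ii) holds), and $\Delta(G)\leq 4$ because a claw-free graph with a vertex of degree $\geq 5$ has a triangle or an induced $C_5$ in that neighbourhood, yielding $K_4$ or $W_5$ as a (induced, hence) subgraph --- in fact this shows (i) is subsumed by the forbidden-$t$-minor condition. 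So the content is the ``if'' direction.

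For ``if'', I would assume (i)--(iii) and argue that $G$ contains none of the four obstructions as a $t$-minor. By (iii), $K_4$ is excluded. The remaining work is to show that $W_5$, $C^2_7$ and $C^2_{10}$ cannot occur as $t$-minors of a claw-free $G$ satisfying (i)--(iii). The key points are: first, $W_5$ contains $K_4$ as a $t$-minor (a $t$-contraction at a degree-2 vertex of the rim of $W_5$ --- note every rim vertex of $W_5$ has its two rim-neighbours non-adjacent, so the contraction is legal --- produces $W_4 = K_4$ minus... let me instead use that $W_5$ itself contains $K_4$ as a subgraph? it does not; but a $t$-contraction on $W_5$ yields a graph containing $K_4$). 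Hence if $G$ had $W_5$ as a $t$-minor it would have $K_4$ as a $t$-minor, contradicting (iii). Second, for $C^2_7$ and $C^2_{10}$: these are $3$-regular-ish dense graphs; one should check that if $H\in\{C^2_7,C^2_{10}\}$ is a proper $t$-minor of $G$ then already $K_4$ appears. Indeed, $C^2_7$ and $C^2_{10}$ are stated in the excerpt (in the discussion after Lemma~\ref{lem:outcomes}) to be \emph{minimally $t$-imperfect}, so every proper $t$-minor of either is $t$-perfect; in particular neither contains $K_4$ (or any obstruction) as a \emph{proper} $t$-minor. Thus if $C^2_7$ or $C^2_{10}$ arises as a $t$-minor of $G$ via a nonempty sequence of operations, then either it equals $G$ (excluded by (ii)) or it is a proper $t$-minor obtained from some intermediate graph $G'$; chasing this, the only way $C^2_7$ (resp.\ $C^2_{10}$) can be a $t$-minor of $G$ without $G$ itself being one of the four obstructions is ruled out --- more carefully, if $C^2_n$ ($n\in\{7,10\}$) is a $t$-minor of $G$, pick a minimal-length witnessing sequence; if its length is $0$ then $G=C^2_n$, contradicting (ii); if positive, the graph $G''$ just before the last operation is claw-free (by \eqref{keepscf}), is not $t$-perfect (as $C^2_n$ is a $t$-minor of it and $t$-perfection is $t$-minor-closed), and is a proper $t$-minor of $G$; by induction on $|V(G)|$ applied via the contrapositive of the lemma, $G''$ must contain $K_4$ as a $t$-minor or be one of $C^2_7,C^2_{10}$ --- in the latter case $C^2_n$ is a proper $t$-minor of a minimally $t$-imperfect graph, hence $t$-perfect, contradiction; in the former case $K_4$ is a $t$-minor of $G$, contradicting (iii).

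I expect the main obstacle to be making the $C^2_7$/$C^2_{10}$ argument clean without circularity: one wants to say ``any claw-free $t$-minor-minimal $t$-imperfect graph other than $K_4$ is $C^2_7$ or $C^2_{10}$'', which is really a repackaging of Theorem~\ref{thm:Char}, and then observe that $W_5$ is not minimal since it has $K_4$ as a $t$-minor. So the cleanest route is: (1) verify $W_5$ has $K_4$ as a $t$-minor, hence among the four obstructions only $K_4$, $C^2_7$, $C^2_{10}$ are $t$-minor-minimal; (2) conclude from Theorem~\ref{thm:Char} that a claw-free $G$ is $t$-perfect iff it has no $K_4$ $t$-minor and is not $C^2_7$ or $C^2_{10}$; (3) note $\Delta(G)\leq 4$ is then automatic, but we keep it as condition (i) for convenience since it is anyway forced. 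The degree bound (i) and the claw-freeness are used in step (1)-style arguments and to invoke \eqref{keepscf}. The only genuinely new check is the small finite computation that $W_5$ $t$-reduces to $K_4$, which is routine.
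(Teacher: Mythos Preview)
Your plan has a genuine gap: the claim that $W_5$ has $K_4$ as a $t$-minor is false. In $W_5$ every rim vertex has neighbourhood $\{c,v_{i-1},v_{i+1}\}$, which is \emph{not} stable (both $cv_{i-1}$ and $cv_{i+1}$ are edges), and the centre's neighbourhood is the whole $5$-cycle. Hence no $t$-contraction is possible in $W_5$, and since $W_5$ contains no $K_4$ as an induced subgraph, it has no $K_4$ $t$-minor at all. Indeed $W_5$ is one of the four minimally $t$-imperfect claw-free graphs in Theorem~\ref{thm:Char}. Consequently your step~(2), ``a claw-free $G$ is $t$-perfect iff it has no $K_4$ $t$-minor and $G\notin\{C^2_7,C^2_{10}\}$'', is wrong: $W_5$ itself is a counterexample. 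Condition~(i) is therefore \emph{not} automatic; it is precisely what rules out $G=W_5$ and, together with connectedness, what rules out $W_5$ (and the $4$-regular $C^2_7,C^2_{10}$) as \emph{induced} subgraphs of $G$.

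Your induction for $C^2_7,C^2_{10}$ also does not bottom out: when the witnessing sequence has length~$1$, the graph $G''$ ``just before the last operation'' is $G$ itself, so it is not a proper $t$-minor and you cannot invoke the inductive hypothesis. The real base case is exactly the situation where a single $t$-contraction in some claw-free $t$-minor $H$ of $G$ (with $\Delta(H)\le 4$) produces $W_5$, $C^2_7$ or $C^2_{10}$, and this cannot be dispatched by minimality of the target graph alone. The paper handles it by a direct structural analysis: using (i) and connectedness one first reduces to this ``one $t$-contraction away'' situation, and then shows that any such $H$ would contain a claw (for $W_5$: the contracted vertex's neighbour would have two non-adjacent neighbours on the rim plus the contracted vertex; for $C^2_7,C^2_{10}$: a similar local argument on how the four neighbours of the new vertex split between the two pre-images). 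You will need some version of this local analysis; the shortcut through ``$W_5$ reduces to $K_4$'' is not available.
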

\begin{proof}
We had already seen above that a $t$-perfect claw-free graph 
has maximum degree at most~$4$. Thus, the forward direction is
obvious. For the other direction
assume $G$ to satisfy (i)--(iii) but suppose that $G$ is $t$-imperfect. 
By Theorem~\ref{thm:Char} and~(iii), $G$ contains $W_5$, $C_7^2$ or $C_{10}^2$ as
a proper $t$-minor. 

As $\Delta(G) \le 4$ and since $G$ is connected, 
neither of $W_5$, $C_7^2$ or $C_{10}^2$ appears as  induced subgraph in $G$.
Thus, $G$ has a $t$-minor $H$ so that a single $t$-contraction in $H$
results in $W_5$, $C_7^2$ or $C_{10}^2$. We choose $H$ to have 
a minimum number of vertices. 

We first note that, by~\eqref{keepscf}, the $t$-minor $H$ is still claw-free.
Moreover, we deduce that $\Delta(H)\leq 4$. 
Indeed, suppose that $\Delta(H) \ge 5$.
As $G$ does not contain $K_4$ as a $t$-minor, the same holds for $H$.
In particular, no neighbourhood of any vertex of degree $\Delta(H)$
contains a triangle.
So, it must contain $C_5$ as induced subgraph.
As no $t$-contraction transforms $W_5$ into $W_5$, $C_7^2$ or $C_{10}^2$,
this means in particular that
$H$ contains $W_5$ as a proper induced subgraph, 
which in turn implies that $H$ was not minimum.
\medskip

Let us first consider the case when a single $t$-contraction of $H$ yields $C_7^2$. 
Since $H$ is claw-free, the $t$-contraction is performed at a vertex $v''_1$ with 
exactly two neighbours denoted with $v_1'$ and $v'''_1$. We may assume
that the resulting new vertex of the $t$-contraction is $v_1$ of $C_7^2$; 
see Figure~\ref{fig:K4counts}.

      \begin{figure}[ht]
      \centering
      \input{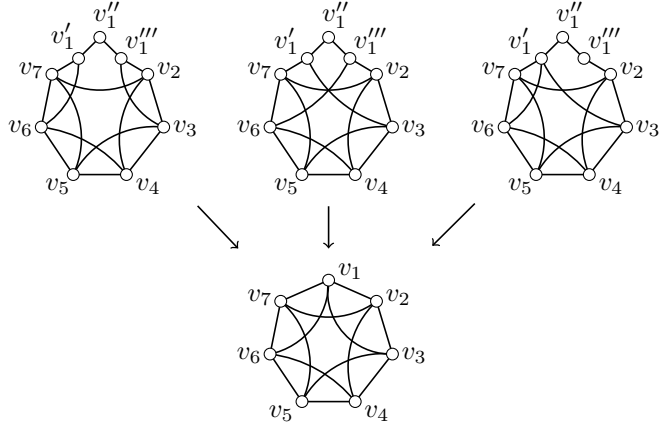}\vspace*{-0.7cm}
      \caption{Examples of single $t$-contractions that yield $C^2_7$}\label{fig:K4counts}
      \end{figure}

Now, as $v_1$ is adjacent to $v_2,v_3,v_6,v_7$, it follows that $N_H(v_1')\cup N_H(v_1''')=\{v_2,v_3,v_6,v_7\}$.
However, $v_1'$ cannot have two non-adjacent neighbours~$v_i,v_j$ among $v_2,v_3,v_6,v_7$,
as that would result in a claw on $v_i,v_j,v_1''$ with centre~$v_1'$. 
As the same holds 
for $v_1'''$, it follows that one of $v_1'$ and $v_1'''$ is adjacent to precisely $v_2,v_3$
while the other has exactly $v_6,v_7$ as neighbours among $v_2,v_3,v_6,v_7$. If, however, $N_H(v_1')=\{v_1'',v_6,v_7\}$
then $\{v_7,v_1',v_2,v_5\}$ induces a claw in $H$, which is impossible.

The case that $H$ can be $t$-contracted to $C_{10}^2$ is similar, 
so we skip to the case when $H$ contains $W_5$ as a $t$-contraction.

Let $v$ be the vertex at which the $t$-contraction is performed, let $u,w$ be its 
two neighbours in $H$, and let $x$ be the resulting vertex in $W_5$, which needs
to be the degree-$5$ vertex as $\Delta(H)\leq 4$.
Then, one of $u,w$, let us say~$u$, has at least three neighbours other than~$v$.
Since $H-\{u,v,w\}$ is a $5$-cycle, 
it follows that $u$ has at least two non-adjacent neighbours $y,z$ in 
$H- \{u,v,w\}$.
But then $\{u,v,y,z\}$ induces a claw in $H$, a contradiction.
This completes the proof.
\end{proof}

In general, it is not entirely straightforward to describe
the graphs from which $K_4$ can be obtained solely by $t$-contractions. 
For instance, Figure~\ref{fig:skewed} shows two quite different 
graphs that both $t$-contract to $K_4$. In claw-free graphs, 
in contrast, there is only one such type of graph.

      \begin{figure}[ht]
      \centering
      \includegraphics[scale=0.7]{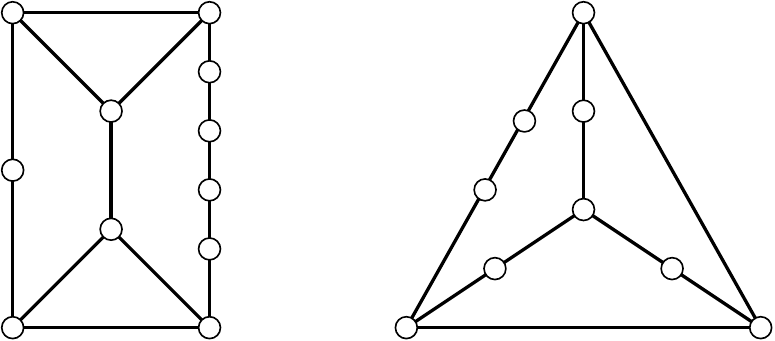}
      \caption{Two graphs that $t$-contract to $K_4$}\label{fig:skewed}
      \end{figure}

A \emph{skewed prism} (of a graph $G$)
is an induced subgraph of $G$ that consists of two triangles, say $x_1,x_2,x_3$ and $y_1,y_2,y_3$,
together with three vertex-disjoint induced paths $P_1$, $P_2$, and $P_3$,
each of which has one endvertex in $x_1,x_2,x_3$ and the other in $y_1,y_2,y_3$.
Moreover, we require the paths $P_1$ and $P_2$ to have even length, while $P_3$ has odd length.
(We allow $P_1$ and $P_2$ to have length~$0$.)
As an illustration, note that the graph on the left in 
 Figure~\ref{fig:skewed} is a skewed prism but the one on the right is not
(and it contains a claw).

Let us stress the fact that, in contrast to the skewed thetas treated in Section~\ref{sec:linegraphs},
skewed prisms are \emph{induced} subgraphs. 
Moreover, a skewed prism has two of its linking paths
even and one odd, while for a skewed theta it is the opposite: two odd, one even.
While this may create some confusion, we think that the name is nevertheless justified
by the clear connection of skewed thetas and prisms:
Indeed, the line graph of a skewed theta is a skewed prism, and moreover,
a graph $G$ contains a skewed theta if and only if its line graph $L(G)$ contains a skewed prism.

\begin{lemma}\label{K4sface}
A claw-free graph $G$ contains $K_4$ as a $t$-minor if and only if it contains a skewed prism.
\end{lemma}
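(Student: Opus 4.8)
The plan is to prove both directions of the equivalence by analysing how $K_4$ can arise as a $t$-minor, using heavily that $t$-minors of claw-free graphs stay claw-free (by~\eqref{keepscf}) and that a $t$-contraction in a claw-free graph only ever identifies a degree-$2$ vertex with its two neighbours. For the easy direction, suppose $G$ contains a skewed prism $S$ with triangles $x_1x_2x_3$, $y_1y_2y_3$ and linking paths $P_1,P_2$ of even length and $P_3$ of odd length. Performing $t$-contractions along the interior of each $P_i$ collapses each path to a single edge between the two triangles (each contraction reduces a path's length by $2$, so parities are preserved: the even paths $P_1,P_2$ become edges of length... hmm, actually I should track this carefully — a path of even length $2\ell$ contracts down to length $0$, identifying its endpoints, while an odd path of length $2\ell+1$ contracts to a single edge). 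So after contracting the interiors, the two even paths identify $x_1$ with $y_1$ and $x_2$ with $y_2$, leaving the odd path as an edge $x_3y_3$; the resulting graph on $\{x_1(=y_1), x_2(=y_2), x_3, y_3\}$ has all six edges present (the two triangle edges that survive, plus $x_3y_3$, plus the two identifications contributing the remaining edges), i.e.\ it is $K_4$. One has to check that at each stage the vertex being contracted genuinely has a stable neighbourhood of size $2$ in the current graph; since the $P_i$ are induced and vertex-disjoint and $G$ is claw-free this is routine, though the bookkeeping near the triangles needs a little care.

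For the hard direction, suppose $G$ has $K_4$ as a $t$-minor. Since $\Delta$ of a claw-free graph containing $K_4$ is unbounded a priori but $K_4$ itself is not an induced subgraph issue — actually the point is to take a $t$-minor $H$ of $G$, minimal with respect to the number of vertices, such that $H$ still $t$-contracts down to $K_4$, and then reverse-engineer the last few contractions. More precisely, take a witnessing sequence of vertex-deletions and $t$-contractions; the deletions can be pushed to the front (deleting first, then contracting), so $H := G$ minus some vertices is claw-free and $t$-contracts to $K_4$ with no intervening deletions. Now one undoes the $t$-contractions one at a time, starting from $K_4$: each "un-contraction" replaces a vertex by a path of length $2$ through a new degree-$2$ vertex (reversing the claw-free $t$-contraction). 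The key structural claim is that, working backwards and exploiting claw-freeness at every step, the preimage of $K_4$ is forced to be a skewed prism — specifically, subdividing a vertex of $K_4$ into a $2$-path can only happen "along an edge", so each of the six edges of $K_4$ either survives as an edge or gets subdivided into an even-length path, and one checks that exactly the right parity pattern (two even linking paths, one odd) is the only claw-free possibility, with the triangles of the prism being the two "opposite" triangles of $K_4$ that don't share the subdivided edge. The induced-ness of the prism follows from the fact that extra chords would, after re-contracting, create extra edges in $K_4$ (impossible, it's already complete) or a claw.

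The main obstacle I expect is the backward direction's case analysis: controlling where a chord could appear and why claw-freeness rules out the "prism with a triangle-triangle edge subdivided oddly by both" or other parity patterns, and handling the base cases where some $P_i$ has length $0$ (so the two triangles share a vertex). A clean way to organise this is probably to set up an invariant: at each stage of the reverse process one has a claw-free graph $H_k$ that $t$-contracts to $K_4$, write down the (few) possible local structures around the last-contracted vertex, and show each is again "a skewed-prism-like object of a slightly larger size", closing the induction. I would also isolate the purely local claim — "in a claw-free graph, a vertex of a triangle that also lies on two internally-disjoint induced paths to another triangle behaves as expected" — as a sub-observation, since it gets used repeatedly. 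The parity tracking (even contracts to identification, odd contracts to a single edge, and $K_4$ needs exactly the prism pattern) is the conceptual heart, and once it is phrased correctly both directions become short.
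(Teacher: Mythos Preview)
Your proposal is correct and takes essentially the same approach as the paper: push deletions to the front, take a minimal induced subgraph $H$ that $t$-contracts to $K_4$, and argue that every graph in the contraction sequence $H=H_0,\ldots,H_k=K_4$ is already a skewed prism by analysing a single $t$-contraction step (the paper phrases this contrapositively, locating the last index $i$ with $H_i$ not a skewed prism but $H_{i+1}$ one, and deriving a contradiction via a case split on the degrees of the two neighbours of the contracted vertex, using claw-freeness and minimality exactly as you outline). One small correction to your intuition: un-contraction does not ``subdivide the six edges of $K_4$'' --- it splits a vertex and redistributes its incident edges, so the three linking paths of the prism do not correspond to three $K_4$-edges (two of them correspond to vertex identifications, and the odd one to a single $K_4$-edge) --- but this does not affect the soundness of your actual plan, which correctly centres on the local case analysis.
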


\begin{proof}
By successively $t$-contracting vertices of degree~$2$,  one 
obtains  from any skewed prism a $K_4$.
Thus, if $G$ contains a skewed prism, it contains $K_4$ as $t$-minor.

For the other direction, let $H$ be a minimal induced subgraph of $G$ that can be $t$-contracted to $K_4$.
Suppose that $H$ is not a skewed prism.

Let $H_0, H_1, \ldots , H_k$ be a series of graphs with $H_0=H$ and $H_k \cong K_4$ such that $H_{i+1}$ is obtained from $H_i$ by a single $t$-contraction, for $i=0,\ldots,k-1$.
Note that, as $H$ is minimal, no proper induced subgraph of $H_i$ contains $K_4$ as $t$-minor, for all $i = 0, \ldots, k$.

As $H_k \cong K_4$ is a skewed prism, there is an index $i \le k-1$ such that $H_i$ is not a skewed prism but $H_{i+1}$ is.
Let $H_{i+1}$ consist of the two triangles
$x_1,x_2,x_3$ and $y_1,y_2,y_3$ and the disjoint $x_i$--$y_i$-paths $P_i$, for $i=1,2,3$,
so that $P_1,P_2$ have even length, while $P_3$ has odd length.
Assume that the $t$-contraction occurs at a vertex $v$ of $H_i$, which then identifies
its two neighbours $u,w$ to a new vertex $x$ of $H_{i+1}$.

We first observe that the neighbourhoods of $u$ and $w$ in $H_i$ are incomparable:
if, for example, $N_{H_i}(u) \subseteq N_{H_i}(w)$, 
then $H_{i+1} \cong H_i - \{u,v\}$, in contradiction to our observation that no proper induced subgraph of $H_i$ contains $K_4$ as $t$-minor.
Similarly, $|N_{H_i}(u)|,|N_{H_i}(w)| \ge 2$.

Let us discuss the case that $|N_{H_i}(u)|,|N_{H_i}(w)| \ge 3$.
Since $H_i$ is claw-free, both $N_{H_i}(u) \setminus \{v\}$ and $N_{H_i}(w) \setminus \{v\}$ are cliques.
This gives $|N_{H_i}(u)|,|N_{H_i}(w)| = 3$, since $H_i$ is, by minimality,  $K_4$-free.
As the neighbourhoods of $u$ and $w$ are incomparable, 
the  new vertex $x$ of $H_{i+1}$
is contained in two distinct triangles. 
Since the only two triangles in $H_{i+1}$ are $x_1,x_2,x_3$ and $y_1,y_2,y_3$,
we may assume that
$x = x_1 = y_1$ in $H_{i+1}$.
But then either $x_1=u$ and $y_1=w$ or $x_1=w$ and $y_1=u$ in $H_i$, which means that $H_i$ is a skewed prism (with $P_1=x_1vy_1$), a contradiction. 

The other cases are handled in a similar manner.
\end{proof}

Let $u,v$ be two distinct vertices in a graph $G$. 
A \emph{$u$--$v$-linked obstruction} is an induced subgraph of $G$ that consists of four vertex-disjoint induced paths $R$, $S$, $X$, and $Y$, so that the endvertices of $R$ are $u,r$, those 
of $S$ are $v,s$, and we write $x_1,x_2$ and $y_1,y_2$ for the endvertices of $X$ and $Y$, respectively.
The paths are required to satisfy the following conditions:
\begin{itemize}\itemsep1pt \parskip0pt 
	\item The vertices $r,x_1,y_1$ and $s,x_2,y_2$ form  triangles in $G$.
The edges of the two triangles are the only edges between $R$, $S$, $X$, and $Y$.
	\item The path $X$ has even length (where we allow length~$0$).
\end{itemize}

      \begin{figure}[ht]
      \centering
      \includegraphics[scale=1]{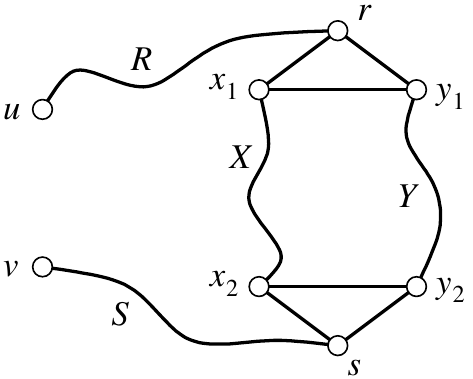}
      \caption{A $u$--$v$-linked obstruction}\label{fig:obstruction}
      \end{figure}

The following observation shows why $u$--$v$-linked obstructions are important:

\begin{lemma}\label{lem:Obstructions}
Let $(G_1,G_2)$ be a separation of a graph $G$ with $V(G_1\cap G_2)=\{u,v\}$.
If $G_1$ contains a $u$--$v$-linked obstruction and 
$G_2$ has two induced $u$--$v$-paths of distinct parity, then $G$ contains $K_4$ as $t$-minor.
\end{lemma}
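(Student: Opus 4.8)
The plan is to build a skewed prism inside $G$, using the $u$--$v$-linked obstruction in $G_1$ for ``most'' of it and an induced $u$--$v$-path of $G_2$ to close it up across the separator $\{u,v\}$. Recall from the proof of Lemma~\ref{K4sface} that a skewed prism, being an induced subgraph of $G$, yields $K_4$ as a $t$-minor of $G$ once we delete all other vertices and then $t$-contract the degree-$2$ internal vertices of its three linking paths; so it suffices to exhibit a skewed prism.

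Fix a $u$--$v$-linked obstruction in $G_1$ consisting of induced paths $R=u\ldots r$, $S=v\ldots s$, $X=x_1\ldots x_2$, $Y=y_1\ldots y_2$ with triangles $rx_1y_1$ and $sx_2y_2$, where $X$ has even length. Since $G_2$ has induced $u$--$v$-paths of both parities, I would pick such a path $Q$ whose parity is chosen so that the path $P_3:=rRuQvSs$ has length of the parity opposite to that of $Y$; this is possible because the length of $P_3$ is the sum of the lengths of $R$, $Q$ and $S$, and the parity of $Q$ is at our disposal. I then claim that the two triangles $rx_1y_1$, $sx_2y_2$ together with the paths $X$, $Y$ and $P_3$ form a skewed prism. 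The three linking paths are vertex-disjoint: $R,S,X,Y$ are pairwise disjoint by definition of the obstruction, while $Q$ meets $R$ only in $u$ and $S$ only in $v$, and its remaining vertices lie in $V(G_2)\setminus V(G_1)$, hence miss $V(X)\cup V(Y)$ (note $u,v\notin V(X)\cup V(Y)$, as $u\in V(R)$ and $v\in V(S)$). As $X$ is even and, by the choice of $Q$, exactly one of $Y$ and $P_3$ is odd, two of the three linking paths are even and one is odd.

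The step I expect to require most care is verifying that the resulting union $H$ is an \emph{induced} subgraph of $G$, and in particular that $X$, $Y$ and $P_3$ are induced paths. For this I would exploit that $V(G_1)\cap V(G_2)=\{u,v\}$: the vertices of $H$ in $V(G_1)$ are exactly those of $R\cup S\cup X\cup Y$, and the only vertices of $H$ in $V(G_2)$ are $u$, $v$ and the internal vertices of $Q$, which all lie in $V(G_2)\setminus V(G_1)$. Since $G=G_1\cup G_2$, no edge of $G$ joins $V(G_1)\setminus\{u,v\}$ to $V(G_2)\setminus\{u,v\}$; therefore $G[V(H)\cap V(G_1)]$ is precisely the obstruction (an induced subgraph of $G_1$), $G[V(Q)]$ is precisely the induced path $Q$ of $G_2$, and there are no further edges. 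Consequently $P_3$ is chordless: a chord inside $R$, $S$ or $Q$ is ruled out because these paths are induced; a chord between $R$ and $S$ is ruled out because in the obstruction the only edges among $R,S,X,Y$ are the six triangle edges; and a chord touching the interior of $Q$ is ruled out since $Q$ is induced and no edge crosses the separator $\{u,v\}$. Likewise $X$ and $Y$ stay induced and are joined to $P_3$ only through the two triangles. I would also record that $uv\notin E(G)$: it is not one of the six triangle edges, so $uv\notin E(G_1)$, and $uv\notin E(G_2)$, since otherwise $uv$ would chord every longer induced $u$--$v$-path in $G_2$, contradicting that $G_2$ has induced $u$--$v$-paths of both parities; this ensures that $Q$ has length at least~$2$ and that $X$, $Y$, $P_3$ attach exactly as required. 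With these checks, $H$ is a skewed prism of $G$, and hence $G$ contains $K_4$ as a $t$-minor.
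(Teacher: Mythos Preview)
Your argument is correct and follows essentially the same route as the paper: combine the obstruction in $G_1$ with an induced $u$--$v$-path through $G_2$ of the right parity, obtaining a configuration that $t$-contracts to $K_4$. The paper simply splits on the parity of $Y$ and asserts that $t$-contracting the degree-$2$ vertices of $H\cup P$ yields $K_4$, whereas you additionally verify that the union is an induced subgraph (hence a genuine skewed prism) via the separator $\{u,v\}$; this extra care is welcome, since the $t$-contractions only go through once one knows there are no stray chords.
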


\begin{proof}
Let $H$ be a $u$--$v$-linked obstruction in $G_1$
with paths $R,S,X,Y$.

First, let $Y$ have even length. By assumption, there is an induced $u$--$v$-path in $G_2$ 
such that the length of the induced path $rRuPvSs$ is odd.
Then, by $t$-contracting the vertices of degree~$2$ of $H\cup P$ we arrive at $K_4$.

Second, assume $Y$ to be an odd path, and choose $Q$ as an induced $u$--$v$-path in $G_2$ 
such that the induced path $rRuQvSs$ has even length.
Again, $H\cup Q$ can be $t$-contracted to $K_4$.
\end{proof}

Let us now prove that $u$--$v$-linked obstructions appear when induced $u$--$v$-paths of mixed parity are present:

\begin{lemma}\label{lem:MixedParitiesLink}
Let $G$ be a claw- and $K_4$-free graph with $\Delta(G)\leq 4$. 
Let furthermore $G$ be $2$-connected, and let
$(G_1,G_2)$ be a separation of $G$ with $V(G_1\cap G_2)=\{u,v\}$.
If there are two induced $u$--$v$-paths in $G_1$ of 
distinct parity, then $G_1$ contains a $u$--$v$-linked obstruction.
\end{lemma}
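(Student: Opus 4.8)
The plan is to start from two induced $u$--$v$-paths $P_{\mathrm{even}}$ and $P_{\mathrm{odd}}$ in $G_1$ of different parity, and extract from their union a $u$--$v$-linked obstruction. First I would take a minimal counterexample: among all pairs of induced $u$--$v$-paths in $G_1$ of distinct parity, choose one minimising $|V(P_{\mathrm{even}})\cup V(P_{\mathrm{odd}})|$. The two paths must share vertices other than $u$ and $v$ (otherwise their union is an even closed walk, so the two paths have the same parity). Following the two paths from $u$, let $w$ be the first vertex after which they diverge, and let $w'$ be the next vertex they meet again; the two internally disjoint $w$--$w'$-subpaths $A\subseteq P_{\mathrm{even}}$ and $B\subseteq P_{\mathrm{odd}}$ have different parity, and by minimality we may replace $P_{\mathrm{even}},P_{\mathrm{odd}}$ by any shorter pair, so in fact one can assume the two paths meet \emph{only} in $u$, $v$, and exactly one ``divergence block'' — i.e.\ after suitable surgery we have a $w$--$w'$-theta whose two sides $A,B$ have different parity, with the rest of both paths identical. (This is the step where I expect to spend the most care: arguing that minimality forces the overlap structure to be a single pair of internally disjoint subpaths of distinct parity rather than a messy alternating pattern, and that the common parts can be routed away.)

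Second, I would analyse the two internally disjoint $w$--$w'$-paths $A$ (even length) and $B$ (odd length), meeting only in $w,w'$. Because $G$ is claw-free, $w$ cannot have three pairwise non-adjacent neighbours; so the two neighbours of $w$ on $A$ and on $B$ must be adjacent to each other, or coincide, etc. I would use claw-freeness at $w$ and at $w'$ repeatedly: the first edges of $A$ and $B$ out of $w$ have endpoints $a_1$ (on $A$) and $b_1$ (on $B$) that are adjacent (else, together with a neighbour back towards $u$, or with $w'$ if $w$ has degree $2$ inside, they would form a claw centred at $w$). Since $G$ is $K_4$-free and $\Delta(G)\le 4$, this forces $w,a_1,b_1$ to be a triangle with $w$ of small degree, and symmetrically a triangle $w',a_2,b_2$ at the other end. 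Removing the triangle vertices, $A$ and $B$ split off into paths $X:=a_1\ldots a_2$ and $Y:=b_1\ldots b_2$; a short parity computation using that $A$ is even and $B$ is odd shows exactly one of $X,Y$ has even length and the other odd — say $X$ has even length. That is precisely the parity requirement in the definition of a $u$--$v$-linked obstruction.

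Third, I would assemble the obstruction. Set $R$ to be the common initial segment of $P_{\mathrm{even}}$ and $P_{\mathrm{odd}}$ from $u$ to $w$ (so $r=w$), and $S$ the common final segment from $w'$ to $v$ (so $s=w'$), and take $X,Y$ as above; the triangles are $r,x_1,y_1=w,a_1,b_1$ and $s,x_2,y_2=w',a_2,b_2$. What remains is to check that $R\cup S\cup X\cup Y$ is an \emph{induced} subgraph of $G_1$ and that the only cross-edges between the four paths are the six triangle edges. For this I would again invoke claw-freeness and $K_4$-freeness: any extra chord between, say, an internal vertex of $X$ and a vertex of $R$ would, combined with the triangle at $r$ and neighbours along the paths, either create an induced claw or a $K_4$, or allow shortening the original pair of $u$--$v$-paths of distinct parity, contradicting minimality. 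I expect the bookkeeping of which chords are forbidden by which hypothesis ($\Delta\le 4$, claw-free, $K_4$-free, minimality) to be the bulk of the routine work, but no single case should be hard. The $2$-connectedness of $G$ is used only to guarantee that $G_1$ actually contains at least one $u$--$v$-path (and hence that $u,v$ are not a trivial cut), which is implicit in the hypothesis that two such paths exist.
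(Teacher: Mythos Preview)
Your overall plan---take a minimal pair $(P,Q)$ of induced $u$--$v$-paths of distinct parity, find where they diverge, and use claw-freeness to produce the triangles---is the paper's approach. But two steps fail as written. The parenthetical ``otherwise their union is an even closed walk'' is simply false: internally disjoint $u$--$v$-paths of distinct parity form an \emph{odd} cycle, and that case must be handled, not excluded. More importantly, your reduction to a single ``divergence block'' via minimality is circular: swapping a segment of one path into the other only yields an \emph{induced} path if you already know there are no cross-edges between that segment and the rest of the other path, which is precisely the unproved content of your Step~3. The paper does not attempt this reduction; it takes the first index $i$ with $p_i\neq q_i$ and the first $j\geq i$ with $p_{j+1}\in V(Q)$, and builds the obstruction from $Q\cup p_iPp_j$ directly.

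The part you call ``routine bookkeeping'' is in fact the core of the argument. The paper first treats separately the case that some vertex of $P$ has at least three neighbours in $Q$ (then $Q$ plus that single vertex is already a $u$--$v$-linked obstruction), so one may assume each vertex of either path has at most two neighbours in the other. The key claim is then that no vertex of $p_{i+1}Pp_{j-1}$ has \emph{any} neighbour in $Q$: if $p_\ell x\in E$ with $x\in V(Q)$, repeated use of claw-freeness forces $xp_{\ell+1}\in E$, then $p_\ell y\in E$ for a $Q$-neighbour $y$ of $x$, and then $yp_{\ell+1}\in E$, so $\{x,y,p_\ell,p_{\ell+1}\}$ induces a $K_4$. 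This is where $K_4$-freeness is genuinely used, and without it the induced-subgraph verification does not go through. Finally, your triangle at the re-meeting vertex $w'$ need not form as you describe: writing $p_{j+1}=q_k$, claw-freeness only gives $p_jq_{k-1}\in E$ or $p_jq_{k+1}\in E$, and in the latter case the obstruction has $s=q_{k+1}$, one vertex \emph{past} $w'$ along $Q$, with $Y=q_iQq_k$ rather than $q_iQq_{k-1}$.
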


\begin{proof}
Let $P$ and $Q$ be two induced $u$--$v$-paths, where $P$ has even length and $Q$ odd length.
In particular, $uv \notin E(G)$.
We, furthermore, choose $P$ and $Q$ such that $|V(P) \cup V(Q)|$ is minimum  among all such pairs of paths.
Let $P=p_1\ldots p_r$ and $Q=q_1\ldots q_s$, where $u=p_1=q_1$ and $v=p_r=q_s$. 

Let us first observe:
\begin{equation}\label{morenbhs}
\begin{minipage}[c]{0.8\textwidth}\em
any $z\in V(G_1-Q)$ that has a neighbour $q\in V(Q)$ is also adjacent to 
one of the neighbours of $q$ in $Q$.
\end{minipage}\ignorespacesafterend 
\end{equation} 
Otherwise, there is a claw since
$q$ has three independent neighbours: $z$ and its two neighbours in $Q$
(if $q=u$ or $q=v$ pick a neighbour of $q$ in $G_2$ instead -- such a neighbour 
exists as $G$ is assumed to be $2$-connected).  

\medskip

We now assume that there is a vertex $x$ of $P$ that has at least three neighbours in $Q$.
In particular, $x$ does not belong to $Q$.

If $x$ has exactly three neighbours in $Q$
we deduce from~\eqref{morenbhs} that
they appear consecutively on $Q$, that is, the neighbours are $q_iq_{i+1}q_{i+2}$ 
for some $i$.
In that case, $Q+x$ is a $u$--$v$-linked obstruction, 
where we choose $R = uQq_i$, $S = q_{i+2}Qv$, $X=\{x\}$ and $Y=\{q_{i+1}\}$.

If $x$ has more than three neighbours in $Q$, then it has exactly four as $\Delta(G)\leq 4$.
By~\eqref{morenbhs}, there is $i<j$ so that the neighbours are $q_i,q_{i+1},q_j,q_{j+1}$.
Again, we find that $Q+x$ is a $u$--$v$-linked obstruction:  
Set $R = uQq_i$, $S=q_{j+1}Qv$, $X=\{x\}$ and $Y=q_{i+1}Qq_j$.

By symmetry, we may thus assume that
\begin{equation}\label{twonbhs}
\emtext{
every vertex of $Q$ has at most two neighbours in $P$, and vice versa. 
}\end{equation}

Choose $i$ minimum such that $p_i \neq q_i$.
As $P,Q$ are  induced paths, this implies that  $p_i \notin V(Q)$, from which 
with~\eqref{morenbhs} follows that $p_i$ and $q_i$ are adjacent.
Since $P$ and $Q$ have the same endvertex, we may moreover
choose a minimum $j \geq i$ so that $p_{j+1} \in V(Q)$. 

We claim that
\begin{equation}\label{claim:NoNeighbourInQ}
\mbox{\emph{no vertex of the path $p_{i+1}Pp_{j-1}$ has a neighbour in $Q$.}}
\end{equation}

In order to prove the claim, 
suppose by way of contradiction that there is a minimum $\ell\in\{i+1,\ldots,j-1\}$ 
so that $p_\ell$ has a neighbour $x$ in $Q$. 

Suppose that $p_{\ell-1}x' \in E(G)$
for some neighbour $x'\in V(Q)$ of $p_\ell$, which by the minimality of $\ell$
is only possible when $i+1=\ell$. Since $p_{i+1}$ is not a neighbour of $q_{i-1}=p_{i-1}$,
it follows that $x'\neq q_{i-1}$. Then $x'=q_i$, as $p_i$ cannot have three distinct 
neighbours $q_{i-1},q_i,x'$ in $Q$ by~\eqref{twonbhs}. But now $q_i$ has three neighbours
in $P$, namely $p_{i-1},p_i,p_{i+1}$, contradicting~\eqref{twonbhs}.

In particular, with $x$ in the role of $x'$, we obtain that $p_{\ell-1} x \notin E(G)$.
The choice of $j$ together with $x\in V(Q)$ implies that $p_{\ell+1} \neq x$,
as $\ell+1\leq j$. Thus, $x\notin V(P)$ and we deduce with~\eqref{morenbhs}
that $x$ is adjacent to $p_{\ell+1}$. Because also $p_\ell\notin V(Q)$
(by choice of~$j$), we obtain from~\eqref{morenbhs} that $p_\ell$
is adjacent to a neighbour $y$ of $x$ in $Q$. Again, $q_i\neq y$ as otherwise
$q_i$ had the three neighbours $p_{i-1},p_i,p_{\ell}$ in $P$, contradicting~\eqref{twonbhs}.
We apply~\eqref{morenbhs} again to see that $y$ is adjacent to either $p_{\ell-1}$ or to $p_{\ell+1}$. 
The former case, however, is impossible by the above observation that no neighbour $x'\in V(Q)$
of $p_\ell$ is adjacent to $p_{\ell-1}$.

Thus, $p_{\ell+1} y \in E(G)$, which means that 
$\{x,y,p_l,p_{l+1}\}$ induces a $K_4$ in $G$, a contradiction.
This proves~\eqref{claim:NoNeighbourInQ}.

\medskip
Let $p_{j+1}=q_k$, and observe that, as $p_j\notin V(Q)$ by minimality of $j$,
it follows from~\eqref{morenbhs} and~\eqref{twonbhs} that $p_j$ is adjacent
to $q_{k-1}$ or to $q_{k+1}$, but not to both.

We first consider the case that $p_jq_{k-1}\in E(G)$.
Suppose  that the lengths of the paths $p_iPp_j$ and $q_iQq_{k-1}$ have the same parity.
Then, we may replace in $Q$ the subpath $q_iQq_{k-1}$ by $p_iPp_j$. The obtained $u$--$v$-path 
$Q':=uQq_{i-1}Pp_jq_{k}Qv$ then has odd length, exactly as $Q$. 
Moreover, $Q'$ is induced by~\eqref{claim:NoNeighbourInQ}.
Since $|V(P) \cup V(Q')| < |V(P) \cup V(Q)|$ we obtain a contradiction to the 
choice of $P$ and $Q$.
Therefore, $p_iPp_j$ and $q_iQq_{k-1}$ have different parities.
But then the subgraph induced by  
$Q\cup p_iPp_j$ is a $u$--$v$-linked obstruction: We 
let $R = uQq_{i-1}$, $S = q_{k}Qv$, and for $X$ we choose the path 
among $p_iPp_j$ and $q_iQq_{k-1}$ of even  length, and for $Y$ the odd one.

If $p_j$ is adjacent to $p_{k+1}$ (and then not to $p_{k-1}$), 
we argue in a similar way in order to see that $p_iPp_j$ and $q_iQq_k$
have different parities.  Then, we may choose $R = uQq_{i-1}$, $S = q_{k+1}Qv$, and 
$X,Y$ as $p_iPp_j$ and $q_iQq_{k}$, depending on the parity.
\end{proof}

We can now prove our main lemma.

\begin{proof}[Proof of Lemma~\ref{lem:ClawFree2Connected}]
If the edge $uv$ is present in $G$, then every induced $u$--$v$-path in $G_1$
or in $G_2$ is odd (as the edge is the only induced path). Thus, we are 
in case~\eqref{cond:nene}, which reduces to Lemma~\ref{completesep}. Therefore, we may
assume from now on that $uv\notin E(G)$.

For~\eqref{cond:BothParitiesNotPerfect}, note that we may assume $G$ 
to be $K_4$-free, since $K_4$ is not $t$-perfect.
Thus, Lemma~\ref{lem:MixedParitiesLink} implies that $G_1$ contains a $u$--$v$-linked obstruction,
which means we find $K_4$ as a $t$-minor in $G$, 
by Lemma~\ref{lem:Obstructions}. Thus $G$ is not $t$-perfect.
\medskip

For the forward direction of~\eqref{cond:ono}--\eqref{cond:nene}, observe that 
the parity conditions guarantee that the respective $\tilde G_1,\tilde G_2$ are $t$-minors 
of $G$. Thus, $t$-perfection of $G$ also implies their $t$-perfection.
\medskip

For the back direction of~\eqref{cond:ono}--\eqref{cond:nene}, we assume
$G$ to be $t$-imperfect. 
Note that $G\notin\{C_{7}^2,C_{10}^2\}$ as both of the latter graphs
are $3$-connected but $G$ is not.
With Lemmas~\ref{onlyK4counts} and~\ref{K4sface}
we deduce that $G$ has a skewed prism $H$ consisting of two 
triangles $x_1,x_2,x_3$ and $y_1,y_2,y_3$ and of three 
linking paths $P_i=x_i\ldots y_i$ ($i=1,2,3$).

Let us examine how $H$ can be positioned with respect to the separation $(G_1,G_2)$.
There are three possibilities:
\begin{enumerate}[\rm (a)]
\item $H\cap G_1$ is empty or $H\cap G_2$ is empty;
\item $G$ contains $K_4$ as a subgraph; or
\item $H\cap G_1$ is a subpath of one of $P_1,P_2,P_3$, or that 
is the case for $H\cap G_2$.
\end{enumerate}
In order to prove that (a)--(c) covers every case, we may by symmetry assume
that $H\cap G_2$ contains the edge $x_1x_2$ of $H$. Now, we consider first the
case when $H\cap G_1$ is non-empty but devoid of edges.
In particular, that implies $H\subseteq G_2$.  
Let us assume that $u$ lies in $H\cap G_1$ (and possibly $v$, too). 
We observe that $u$ is adjacent to a vertex in $G_1$, as $G$ is $2$-connected. Thus,
the absence of claws implies that the neighbours of $u$ in $G_2$ 
are pairwise adjacent. 
One of the three linking paths $P_1,P_2,P_3$ of $H$ contains $u$, $P_1$ say.
We deduce that $P_1$ has to have length at most~$1$, 
as otherwise the two neighbours of $u$ in $P_1\subseteq G_2$ is adjacent
(if $u$ is an internal vertex) or
one of the triangle vertices $x_2,x_3,y_2,y_3$ is adjacent to an 
internal vertex of $P_1$ (if $u$ is an endvertex of $P_1$).
Now, whether $P_1$ has length~$0$ or~$1$, in both cases $u$ has three distinct neighbours among $x_1,x_2,x_3,y_1,y_2,y_3$. As those neighbours need to 
be pairwise adjacent, we have found $K_4$ as a subgraph of $G$.

It remains to consider the case when $H\cap G_1$ is non-empty and contains an
edge. Since any pair $x_i,y_j$ is connected by three internally disjoint
paths in $H$, we see that all of $x_1,x_2,x_3,y_1,y_2,y_3$
lie in $G_2$. Therefore, any edge of $H$ in $G_1$ is an edge of one 
of the linking paths $P_1,P_2,P_3$, and clearly of only one of them.
Thus, $H\cap G_1$ is a subpath of one of $P_1,P_2,P_3$.
This proves that (a)--(c) exhaust all possibilities.

\medskip
We now apply (a)--(c) to the back direction of~\eqref{cond:ono}.
If $H\cap G_1$ or $H\cap G_2$ is empty, then in particular $H$ 
is disjoint from $u,v$ and therefore, $H$ is still a skewed prism
of either $G\ide{u}{v}$ or of $G_2+uv$. By Lemma~\ref{K4sface}, 
one of the two is then $t$-imperfect.
If $G$ contains $K_4$ as a subgraph, then at most one of $u,v$
can lie in the $K_4$ as we assumed $uv\notin E(G)$. 
Consequently, $K_4$ is still a subgraph of one of $G\ide{u}{v}$ or $G_2+uv$.

It remains to consider option~(c).
If $H\cap G_1$ is a subpath of one of $P_1,P_2,P_3$, 
then the subpath needs to be of odd length, as 
every induced $u$--$v$-path through $G_2$ is assumed to be of even length. 
Replacing the odd path through $G_1$ by the edge $uv$, we obtain 
a skewed prism of 
of $G_2+uv$, as desired. 
If, on the other hand, $H\cap G_2$ is a subpath of one of $P_1,P_2,P_3$ 
then this subpath has even length by assumption. That means restricting $H$ to $G_1$ 
while identifying $u$ with $v$ yields a skewed prism of $G_1\ide{u}{v}$, 
and we are done.

\medskip
Next, we treat the back direction of~\eqref{cond:nono}. 
Observe that (a) and (b) imply that $H$ (or some $K_4$-subgraph) 
is completely contained in $G_1$ or in $G_2$, while~(c) is impossible. 
Indeed, if $H\cap G_1$ (or $H\cap G_2$) was a subpath of one of $P_1,P_2,P_3$,
then of necessarily even length, we would find an odd induced 
$u$--$v$-path in $H\cap G_2$ ($H\cap G_1$, respectively), contrary
to assumption.

\medskip The back directions of~\eqref{cond:ene} and~\eqref{cond:nene} are proved 
with similar arguments.
\end{proof}

\section{Discussion}

A key step for the recognition of claw-free $t$-perfect graphs is the insight
of
Lemmas~\ref{onlyK4counts} and~\ref{K4sface} 
that the problem reduces to the detection of skewed prisms. 

Skewed prisms are  induced subgraphs. 
As Fellows, Kratochvil, Middendorf and Pfeiffer~\cite{FKMP95} observed,
searching for a certain substructure often becomes substantially harder if 
one requires the substructure to be induced: finding the largest matching can be done
in polynomial time, but determining the size of the largest induced matching
is NP-complete. 

In the same way, checking for a non-induced prism (and without any parity constraints
on the paths) reduces to verifying whether between any two triangles
there are three disjoint paths, which clearly can be done in polynomial time. 
Checking whether a given graph contains an induced prism, however, is NP-complete -- 
this is a result of Maffray and Trotignon~\cite{MT05}. Interestingly, this changes
when the input graph is claw-free. Golovach, Paulusma and van Leeuwen~\cite{GPL} 
describe a polynomial-time algorithm for the induced variant of the $k$-\textsc{Disjoint Paths Problem} 
in claw-free graphs. By again considering any pair of triangles in a claw-free
graph, the algorithm may be used to detect prisms. Unfortunately, 
or rather fortunately for the purpose of this article, 
this is not 
enough to recognise $t$-perfection. For this, we need to detect \emph{skewed} prisms.
It is not clear whether the algorithm of
Golovach, Paulusma and van Leeuwen can be extended to incorporate parity constraints.

Kawarabayashi, Li and Reed~\cite{KLR10} give a polynomial-time algorithm 
to detect subgraphs arising from $K_4$ by subdividing its edges to odd paths. 
In our terminology, 
these are (non-induced) subgraphs that can be $t$-contracted to $K_4$.
Here the question arises whether one could develop and induced variant 
of their algorithm.

\bibliographystyle{amsplain}
\bibliography{tpbib}
 
\vfill
\small
\vskip2mm plus 1fill
\noindent
Version \today
\bigbreak

\noindent
Henning Bruhn
{\tt <henning.bruhn@uni-ulm.de>}\\
Universit\"at Ulm, Germany\\[3pt]
Oliver Schaudt
{\tt <schaudto@uni-koeln.de>}\\
Institut f\"ur Informatik\\
Universit\"at zu K\"oln\\
Weyertal 80\\
Germany

\end{document}